\documentclass[12pt,a4paper,oneside]{article}
\usepackage[top=3cm, bottom=3cm, left=2cm, right=2cm]{geometry}
\linespread{1.4}
\usepackage[english]{babel}
\usepackage[utf8]{inputenc}
\usepackage[T1]{fontenc}

\let\startlocaldefs\relax
\let\endlocaldefs\relax

\input{packages}
\startlocaldefs

\theoremstyle{plain}

\newtheorem{theorem}{Theorem}[section]
\newtheorem{lemma}[theorem]{Lemma}
\newtheorem{proposition}[theorem]{Proposition}
\newtheorem{corollary}[theorem]{Corollary}
\theoremstyle{plain}
\newtheorem{definition}[theorem]{Definition}

\endlocaldefs



\newcommand{\FT}{\mathrm{FT}}
\newcommand{\dd}{\mathrm d}

\newcommand{\NN}{\mathbb N}
\newcommand{\mutilde}{\tilde \mu}
\def\ind{\stackrel{\mbox{\scriptsize{ind}}}{\sim}}
\def\iid{\stackrel{\mbox{\scriptsize{iid}}}{\sim}}
\def\distreq{\stackrel{\mbox{\scriptsize{d}}}{=}}
\newcommand{\R}{\mathbb R}
\newcommand{\E}{\mathbb E}

\newcommand{\useridx}{n}
\newcommand{\usertotal}{N}
\newcommand{\dayidx}{d}
\newcommand{\daytotal}{D}
\newcommand{\daycount}{A_{\dayidx,\useridx}}
\newcommand{\news}[2]{U_{#1}^{(#2)}}
\newcommand{\hatnews}[2]{\hat{U}_{#1}^{(#2)}}
\newcommand{\newstot}[2]{T_{#1}^{(#2)}}
\newcommand{\hatnewstot}[2]{\hat{T}_{#1}^{(#2)}}
\def\pilotdays{D_{0}}
\def\followupdays{D_{1}}
\newcommand{\freq}{j}
\newcommand{\tilting}{c}
\newcommand{\mass}{\beta}
\newcommand{\tail}{\alpha}

\usepackage[normalem]{ulem}

\newcommand{\prob}{\mathrm{Pr}}
\renewcommand{\mid}{\,|\,}

\def\tgmodel{{\rm TG-SSP} }
\def\bemodel{{\rm Be-SSP} }
\def\nbmodel{{\rm NB-SSP} }

\DeclareMathOperator*{\argmax}{arg\,max}
\DeclareMathOperator*{\argmin}{arg\,min}

\newif\ifautoreview
\autoreviewfalse  
\usepackage{comment}

\usepackage{authblk}
\date{}
\title{Online activity prediction via generalized Indian buffet process models}
\author[1]{Mario Beraha}
\author[2]{Lorenzo Masoero}
\author[3,4]{Stefano Favaro}
\author[2,5]{Thomas S. Richardson}

\affil[1]{Department of Economics, Management, and Statistics, University of Milano--Bicocca}
\affil[2]{Amazon, Inc.}
\affil[3]{Department of Economics and Statistics, Università di Torino}
\affil[4]{Collegio Carlo Alberto}
\affil[5]{Department of Statistics, University of Washington}

\providecommand{\keywords}[1]{
  \small 
  \textbf{\textit{Keywords:}} #1
  \normalsize
}

\begin{document}
\maketitle

\begin{abstract}
Online A/B tests are the standard tool for data-driven decision-making at scale. 
Among the design choices with the largest impact on statistical power is the triggering mechanism: how many users to expose and for how long.
This often requires forecasting user engagement, i.e., whether enough users will trigger, and when a target participation level will be reached,  from limited pilot data. 
We introduce a Bayesian nonparametric model for predicting both new-user counts and total triggers, accommodating the heavy-tailed engagement patterns typical of web experiments. All predictive quantities can be computed without intensive numerical procedures such as MCMC or variational inference. We evaluate on three public datasets (over 450 public benchmark evaluations) and 1,774 proprietary A/B tests. In all the settings, our models show improved accuracy in forecasting new users, total triggers, and time to reach a target sample size compared with state-ofthe-art competitors, especially when only a few pilot days are observed.

\end{abstract}

\keywords{A/B testing, user prediction, Bayesian nonparametrics, scaled process priors, empirical Bayes.}

\section{Introduction}

In the era of data-driven decision making, online A/B testing has emerged as a fundamental tool for evaluating the effectiveness of new interventions on key performance indicators in the technology industry \citep{kohavi2013online,gupta2019top}. 
By randomly assigning users to different versions of a product or service via an A/B test, experimenters can measure the causal impact of their interventions and make informed decisions about product development and user experience. In this paper, however, we do not focus on causal-effect estimation itself; rather, we address the upstream design problem of forecasting participation and activity to plan experiment duration.
The success of an online A/B test relies heavily on user engagement, which reflects the users' interest in and interaction with the product.
Accurate prediction of user engagement, including the number of new users and their re-trigger counts, is crucial for planning and running efficient experiments.

Experimenters face a challenging trade-off when designing online A/B tests. On one hand, running shorter experiments can reduce costs and minimize the exposure of users to potentially suboptimal experiences. On the other hand, longer experiments can provide more representative data and improve the reliability of the results. 
Accurate predictions of user engagement can help experimenters navigate this trade-off and make informed decisions about the optimal experimental duration. 
By accurately forecasting the number of new users and their re-trigger counts, experimenters can ensure that they collect sufficient data to detect meaningful treatment effects while minimizing costs associated with prolonged experimentation.

Our primary goal in this paper is \emph{design and forecasting}: using a short pilot period to predict future participation and activity so that experimenters can decide how long to run an A/B test (or whether a test should be extended or stopped). Concretely, given a pilot window of length $D_0$, we focus on forecasting (i) the number of new users in a future window of length $D_1$, (ii) the total number of triggers in that window when re-trigger data are available, and (iii) the number of additional days $D_M$ needed to reach a total target of $M$ distinct users.
In practice, these outputs are used to choose an experiment duration or extension window: given a total target of $M$ distinct users, the method produces a duration-to-target recommendation that supports a concrete stop/extend decision under time or cost constraints. These predictive targets are intended to inform decisions about duration and resource allocation; estimating causal effects is a downstream task that benefits from improved design but is not the primary estimand here. Recent industry work emphasizes that experiment duration and power are inherently tied to prior beliefs about effect sizes and to principled early-termination rules \citep{Gualavisi2025,Masoero2025}; our framework complements these objectives by providing fast, accurate forecasts of participation and activity from short pilots.

Existing methods for predicting user engagement in online A/B tests often make restrictive assumptions about user behavior, limiting their ability to capture the complex dynamics of user engagement over time. For example, some methods assume that the propensity to engage is equal for all users (both in the control and treatment group) \citep{deng2015objective} or model user-specific propensities as i.i.d. random variables \citep{richardson22a}. Instead, in practice, one often observes a power-law behavior of users' activities, with few users being extremely active and most users seldom being active.
Other methods require specifying an artificial upper bound on the total number of users \citep{richardson22a, wan2023experimentation}, which can be challenging to estimate and may lead to biased predictions. 
Moreover, many existing methods focus solely on predicting the number of new users, neglecting the important information contained in re-trigger counts, which can provide valuable insights into user loyalty and long-term engagement.

These limitations motivate a forecasting framework with three features. 
First, it should allow for an unbounded population of potential users and for strong heterogeneity in user activity. 
Second, it should adapt to the level of information available in a given experiment, ranging from first-trigger times to binary daily activity and count-valued re-triggers. 
Third, it should produce predictive summaries that map directly to design decisions, such as choosing an extension window or estimating the additional time needed to reach a target number of distinct users.
To address these challenges, we propose a Bayesian nonparametric approach based on the stable beta-scaled process (SBSP) prior \citep{camerlenghi2022scaled}. 
The SBSP prior provides a flexible model for heterogeneous user propensities without requiring a fixed upper bound on the number of users, while retaining analytical tractability.
We combine this prior with three observation models tailored to different data resolutions: a truncated geometric model for first-trigger data, a Bernoulli model for daily user activity, and a negative binomial model for re-trigger counts.
For each model, we derive posterior and predictive quantities that can be evaluated efficiently, avoiding computationally intensive posterior simulation in the large-scale forecasting tasks considered here.

The resulting framework is intended to connect statistical forecasting with experiment planning rather than to serve as a stand-alone descriptive model of engagement.
Through extensive simulations and real-world applications, we show that the proposed SBSP-based methods achieve competitive or improved predictive performance for new-user and activity forecasting, and that these gains translate into more accurate duration-to-target decisions.
The main contributions are:
\begin{enumerate}
    \item We introduce a Bayesian nonparametric framework for \emph{forecasting user engagement to support experiment design} in online A/B tests, including forecasts of new users, total future triggers, and duration-to-target quantities.
    \item We develop SBSP-based models for first-trigger, binary-activity, and count-valued engagement data, deriving closed-form posterior and predictive expressions that enable scalable inference from short pilot periods.
    \item We demonstrate the predictive accuracy and practical utility of the approach through simulations and real-data analyses, showing how accurate forecasts support run-length and extension decisions in A/B testing.
\end{enumerate}

The rest of the paper is organized as follows.
\Cref{sec:setting} describes the data and challenges of predicting user engagement in online A/B tests and reviews existing methods.
\Cref{sec:bnp_method} presents our Bayesian nonparametric methodology, including the stable beta-scaled process prior, likelihood models, and inference procedures. 
\Cref{sec:numerical_implementation} discusses a practical numerical implementation of the derived methods.
We report the experimental results from simulation studies in \Cref{sec:simulations} and on real-world data in \Cref{sec:exp_real}, comparing our approach to established alternatives. Finally, \Cref{sec:discussion} discusses the implications of our findings, potential applications, and future research directions.

\section{Motivating application, data, and existing methods} \label{sec:setting}


\subsection{User Engagement Data in Online A/B Tests}

In online A/B testing, users are randomly assigned to either the control or the treatment group.
Each of these groups encodes a different version of the product or service that the experimenter is interested in testing. 
The goal is to measure the causal impact of the treatment on user behavior and key performance indicators \citep{kohavi2013online}. 
Within these tests, user engagement is a crucial component, as it reflects the users' interest in and interaction with the product.

We consider data from online A/B tests consisting of daily user activity over a fixed time period. For each user $\useridx$, we observe the day 
of their first trigger in the experiment $F_\useridx \in \NN$ (i.e., the time and day they first engage with the product) and, possibly, the count $\daycount \geq 0$ of 
the total number of triggers for each day $d$.
In certain settings, where the service running the experiment only records coarse activity data, $\daycount$ might simply be a binary indicator of whether user $\useridx$ engaged on day $\dayidx$. 
It follows by construction that $F_\useridx = \min_{\dayidx} I[\daycount > 0]$, where $I[\cdot ]$ denotes the indicator function.
The data can be represented as a set of tuples $\{(\omega_\useridx, F_\useridx, A_{1:\daytotal,\useridx})\}_{\useridx=1}^\usertotal$, where $\omega_\useridx$ is a unique identifier for user $\useridx$, $\daytotal$ is the total number of days in the experiment, and $\usertotal$ is the total number of users who engaged with the product during the experiment. Note that $F_\useridx$ is determined by $A_{1:D,\useridx}$ via $F_\useridx = \min\{d : A_{d,\useridx} > 0\}$; we include it explicitly for notational convenience.
We here use the shorthand notation $X_{a:b} := X_a, X_{a+1}, \ldots, X_b$.

One of the main challenges in modeling user engagement data is its sparsity and heterogeneity. 
In typical online A/B tests, a large proportion of users may engage with the product only once or a few times, while a small fraction of users may exhibit high levels of engagement. 
Moreover, users may have different propensities to engage with the product over time, leading to complex dynamics in the re-trigger counts. 
Capturing this heterogeneity and temporal dependence is crucial for accurately predicting user engagement.

\subsection{Data description}

In this paper, we consider four datasets: three public datasets (REES46, UCI, and ASOS) spanning different scales and domains, as well as a fourth proprietary dataset.
For the three public datasets, we partition the observation period into experiment windows consisting of a short pilot period (typically 7 days) followed by a longer follow-up period, see Appendix~\ref{app:benchmark_construction} for the full specification.
Since our targets---new-user counts, total triggers, and time-to-threshold---are forecasting targets rather than treatment-effect estimands, 
observational transaction logs serve as valid prediction benchmarks whenever ground-truth future participation is known.
These benchmarks do not assume that, in a randomized experiment, triggering behavior is invariant to treatment assignment.
UCI and REES46 are such observational logs; ASOS provides data from actual randomized experiments.
\begin{itemize}
    \item \textbf{REES46 eCommerce} \citep{rees46dataset}. The dataset contains observational logs from
    412M events with a total 15.6M users over a period of 7 months (Oct 2019--Apr 2020).
    Per-event timestamps allow construction of per-user daily trigger counts.
    We extract 7 non-overlapping 28-day windows (7-day pilot, 21-day follow-up).
    For robustness, we additionally construct rolling-window benchmarks with follow-up lengths of 21, 50, and 100 days. The pilot sizes (i.e., number of users active in the 7-day pilot) range between 957k and 1.8M users.
    \item \textbf{UCI Online Retail} \citep{chen2012data}. The dataset contains observational logs from
    397K transactions from a total of 4,339 customers over a period of 374 days (Dec 2010--Dec 2011).
    We construct data by partitioning the logs into 13 non-overlapping 28-day windows (7-day pilot, 21-day follow-up). The pilot sizes range between 34 and 474 users.
    \item \textbf{ASOS} \citep{liu2021datasets}. The data consists of first-trigger counts from
    72 fashion experiments (144 arms).
    Follow-up periods range from 7 to 239 days, with a median of 55.
    \item \textbf{Proprietary data}. The dataset consists of 1,774 experiments run in production by a large technology company, with 7-day pilots and 21-day follow-ups. The pilot sizes range from a few hundreds to hundreds of millions.
\end{itemize}

We consider REES46 dataset to illustrate the empirical patterns that motivate the
modeling choices in \Cref{sec:bnp_method}, but similar patterns can be obsreved in all the datasets. As shown in \Cref{fig:powerlaw}, the per-user trigger-count
distribution is approximately power-law \citep{clauset2009power}: 24\% of users
trigger exactly once, the median user triggers 5 times, and the maximum exceeds
199K triggers. Participation is also sparse: 49\% of users are active on exactly
one day, while 74\% are active on at most 3 days out of 211; see \Cref{fig:days_active}.
Finally, daily new-user arrivals decay over time: day-1 arrivals in REES46
average 190K, whereas by day 28 they drop to about 60K. This concave trajectory
is incompatible with constant-rate models; see \Cref{fig:cumulative}. These patterns---power-law
heterogeneity, extreme sparsity, and decelerating arrivals---are precisely the
kind of behavior induced by the stable beta-scaled process prior.

\begin{figure}[t]
    \centering
    \begin{subfigure}[t]{0.48\linewidth}
        \includegraphics[width=\linewidth]{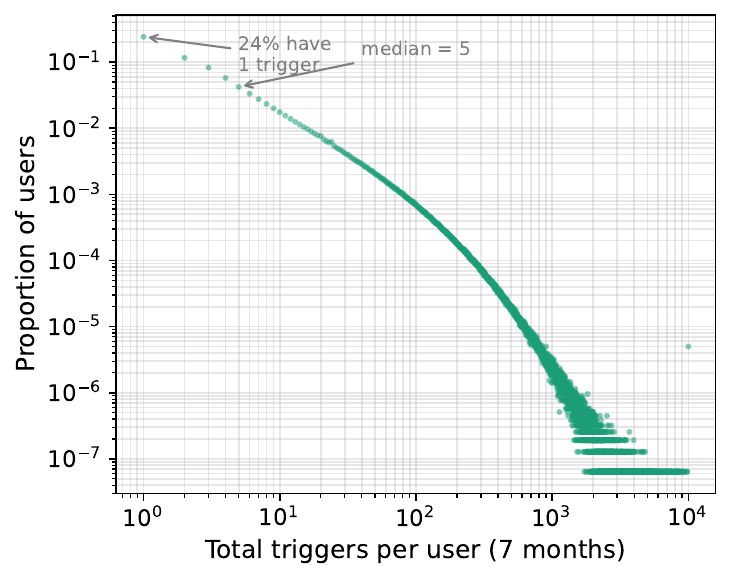}
        \caption{Per-user trigger counts (log-log). 24\% have exactly 1 trigger; median is 5.}
        \label{fig:powerlaw}
    \end{subfigure}
    \hfill
    \begin{subfigure}[t]{0.48\linewidth}
        \includegraphics[width=\linewidth]{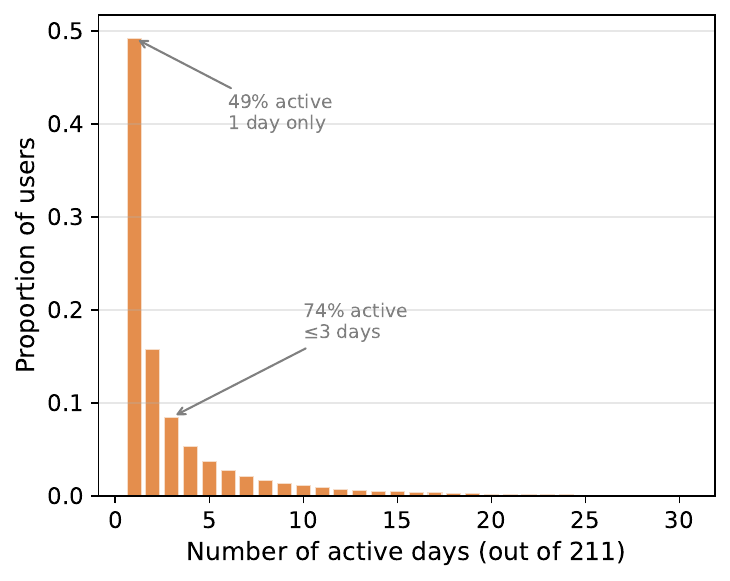}
        \caption{Days active per user. 49\% active on 1 day; 74\% on $\leq 3$.}
        \label{fig:days_active}
    \end{subfigure}
    \caption{REES46 user engagement distributions (15.6M users, 7 months).}
    \label{fig:engagement_distributions}
\end{figure}

\begin{figure}[t]
    \centering
    \includegraphics[width=0.65\linewidth]{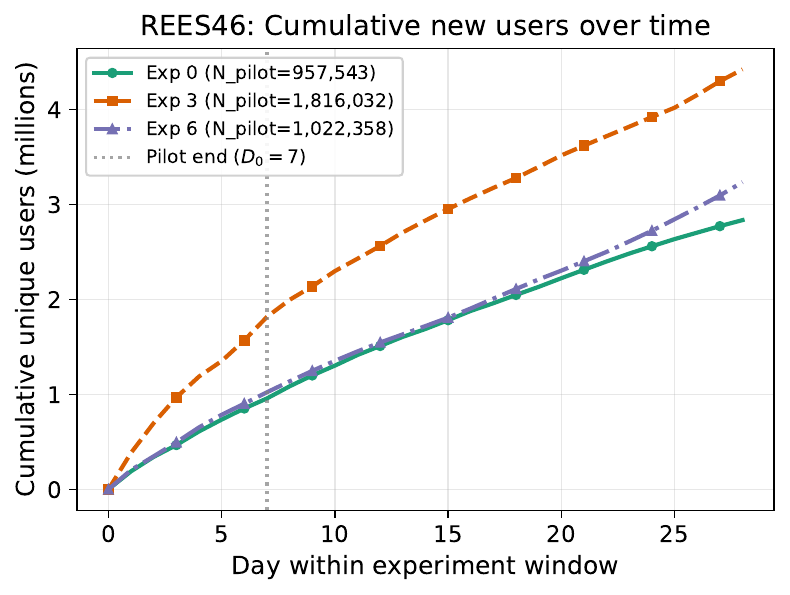}
    \caption{Cumulative new users over time for three 28-day experiment windows (REES46). Dashed line marks pilot end ($\pilotdays = 7$). All curves show sub-linear growth.}
    \label{fig:cumulative}
\end{figure}



\subsection{Existing methods for predicting user engagement} \label{sec:prev_work}

Several methods have been proposed to predict user engagement in online A/B tests and related settings.
Our primary focus is on forecasting quantities that inform experiment design and duration decisions, in line with recent industry work on power calculations under effect-size priors and Bayesian early-termination rules \citep{Gualavisi2025,Masoero2025}.
We here consider the problem of predicting the number of users that did not trigger in the first $D_0$ days (the \emph{pilot} study) but will trigger in days $D_0 +1, \ldots, D_0 + D_1$ (the \emph{follow-up} period), and cast it as an \emph{unseen feature problem}.
Formally, the goal of the analysis is to estimate 
\begin{equation}
    \news{D_0}{D_1} := 
    \sum_{\useridx \geq 1} 
        I
            \left\{
                \Bigg[
                    \sum_{\dayidx=1}^{D_0} \daycount  = 0 
                \Bigg] 
                \cap  
                \Bigg[
                    \sum_{\dayidx=D_0 + 1}^{D_0 + D_1} \daycount  > 0 
                \Big] 
            \right\}, \label{eq:news}
\end{equation}
where the outer sum ranges over all possible users.

We can broadly categorize methods to estimate $\news{D_0}{D_1}$ into statistical and algorithmic approaches.
In recent years, many interesting approaches have been developed in the context of genomic studies. 
In the former category, \citet{zou2016quantifying} proposed to solve the unseen feature problem via a linear programming approach and \citet{gravel2014predicting} employed jackknife estimators based on the first few values of the resampling frequency spectrum. More recently, \citet{chakraborty2019using} proposed to use the celebrated Good-Toulmin estimator \citep{good1953population} to predict the number of new genomic variants based on the observed frequency counts. 
In the latter category, \citet{richardson22a} focused on user engagement prediction and proposed a Bayesian model for the first trigger times. See \citet{ionita2009estimating, masoero2022more, camerlenghi2022scaled} for
other applications.

These existing approaches have limitations in online experiments with very short pilots.
Algorithmic approaches and frequency-spectrum extrapolators such as jackknife and Good--Toulmin are sensitive to regularization/tuning and can deteriorate markedly when extrapolating far beyond a short pilot window, often providing only narrow predictive summaries. 
Within the statistical (and in particular Bayesian) methods, \citet{ionita2009estimating, richardson22a} assume that the total number of users, say $M$, is fixed and known. 
Beyond the obvious difficulty of picking such a parameter from the available data, this assumption, together with the common assumption that each user has a nonzero probability of engaging any given day, entails that $\news{D_0}{D_1} \rightarrow M - \usertotal_{D_0}$ almost surely as $D_1 \rightarrow \infty$, where $\usertotal_{D_0}$ is the number of users that triggered in the first $D_0$ days.
In contrast, user activity often exhibits a power law growth, i.e., $\news{D_0}{D_1} \sim D_1^{\gamma}$ for some $\gamma \in (0, 1)$.
\citet{masoero2022more, camerlenghi2022scaled} assume two different Bayesian nonparametric models, both based on a generalization of the celebrated Indian buffet process \citep{Gri(11)}. 
In their framework, the total number of users is unbounded, but the number of users that trigger in any finite time-frame is finite, almost surely, and the number of active users exhibits a power law growth.
A limitation of both models is that they can only be applied to coarse activity data, i.e., when $\daycount$ is a binary indicator of whether user $\useridx$ is active on day $\dayidx$.
This implies losing information when using granular daily activity data.
Moreover, these models assume that each user's propensity to engage remains constant in time, which can be an oversimplification in several settings and bias the inference. In our experience, in such a setting, modeling only the first trigger times $F_\useridx$ might be preferable, as we demonstrate in our experiments.

Motivated by these considerations, this paper focuses on statistical approaches to the unseen feature problem and its generalizations. 
In particular, we take a Bayesian model-based approach for two reasons. 
First, our data consists of only a handful of experiment days (namely $D_0 \leq 7$).
Bayesian methods are well-suited for settings where predictions have to be formed with limited amounts of input ``training'' data, thanks to the possibility of eliciting informative prior information that regularizes the inference.
Moreover, our goal is not user prediction per se but rather to inform online experiments that must decide the duration of A/B tests and whether a specific experiment should be terminated immediately after observing the first $D_0$ days.
Therefore, fast forecasts of future participation and duration-to-target quantities help experimenters make more informed run-length decisions.
From the technical standpoint, our results are based on a generalization of the mathematical framework in \cite{camerlenghi2022scaled}. In particular, we adopt the same stable beta-scaled process prior, but allow for a greater flexibility in the choice of the likelihood: while the model in \cite{camerlenghi2022scaled} is suitable for modelling daily activity of each user, our results allow for modelling also the first trigger times of users, as well as the total number of daily clicks. Beyond the likelihood extension, our inferential targets are explicitly decision-oriented (multi-horizon forecasts, total-trigger prediction, and time-to-threshold $D_M$), which are not addressed in \cite{camerlenghi2022scaled} though they are central to A/B test duration and early-stopping decisions.

\section{Bayesian Nonparametric Methodology}\label{sec:bnp_method}

This section presents our Bayesian approach for predicting user engagement in online A/B tests. 
We first set up a general framework encompassing all previously proposed Bayesian models as special cases.
We highlight how the choice of the prior distribution plays a crucial role for the quality of the posterior and predictive distributions of interest.
We then discuss using a stable beta-scaled process prior and show how this results in a flexible model that retains analytical tractability and closed-form expressions for posterior inferences.
We frame our model within the class of Bayesian trait processes \citep{Jam(17),Cam(18), masoero2018posterior, Ber(23)}.  

\subsection{Trait process formulation}

First, we show how the data structures discussed in \Cref{sec:setting} can all be cast under the same umbrella of a \emph{trait process}.
To this end, recall that $\omega_n$ is the unique user-id for the $n$-th user, $F_n$ is their first trigger time, and $A_{d, n}$ is their activity for the $d$-th day.
Then, we can represent the data collected during day $d$ of the experiments as a counting measure supported on the user ids:
\begin{equation}\label{eq:trp1}
    Z_\dayidx(\cdot) = 
        \sum_{n\ge 1} A_{d,n} \delta_{\omega_\useridx}(\cdot), \quad \dayidx=1, \ldots, D.
\end{equation}
From the definition above, the relation $Z_\dayidx(\omega_\useridx) = \daycount$ gives a way of pairing each daily activity index $\daycount$ of the $\useridx$-th unit on the $d$-th day together with the label $\omega_\useridx$ of the $\useridx$-th unit.
Similarly, let
\begin{equation}\label{eq:trp2}
    Z^{\FT}(\cdot) = \sum_{\useridx \ge 1} F_\useridx \delta_{\omega_\useridx}(\cdot),
\end{equation}
which is a measure that collects only the first triggering time (FT) of all active users. 
Here, we let both measures $Z_\dayidx$ and $Z^{\FT}$ be suppoerted over a countably infinite set
by letting $\daycount = 0$ for any user $\useridx$ that did not trigger on day $\dayidx$ and $F_\useridx = 0$ for all users that did not trigger in the first $D$ days.

Trait processes \citep{Cam(18), Jam(17)} provide a convenient mathematical formalism to assign Bayesian models to data as in \eqref{eq:trp1} and \eqref{eq:trp2}. We recall the general definition of a trait process below.
\begin{definition}[Trait Process]\label{def:trp}
    Let $\mu = \sum_{\useridx \geq 1} \theta_\useridx \delta_{\omega_\useridx}$ ($\theta_n \ge 0)$ be an almost surely discrete random measure with distribution $\mathscr P_\mu$, where $\delta_x$ is the Dirac delta at $x$. Given $\mu$, let $Z_\dayidx ~=~ \sum_{\useridx \geq 1} \daycount \delta_{\omega_\useridx}$, $\dayidx \geq 1$ such that for a probability mass function $\mathcal G$ with support on $\NN$ (the \emph{score distribution}, i.e., the conditional law of $A_{d,n}\mid\theta_n$),
    \[
        \daycount \mid \mu \ind \mathcal G(\cdot \mid \theta_\useridx).
    \]
    Then we say that $(Z_\dayidx)_{\dayidx \geq 1}$ is a trait process directed by $\mu$ with score distribution $\mathcal G$. We write
    \begin{equation}\label{eq:trait_model}
          Z_{\dayidx} 
          \mid \mu \iid 
          \mathrm{TrP}(\mu, \mathcal G), 
          \quad \mu \sim \mathscr P_\mu.
    \end{equation}
\end{definition}
Coming back to our data, we assume that $Z_\dayidx$'s in \Cref{eq:trp1} follow a trait process with score distribution defined as follows. 
Throughout, we use $\pi_{\mathcal G}(\theta) := 1-\mathcal G(0\mid\theta)$, so that $\pi_{\mathcal G}(\theta)$ is the probability of positive activity and $1-\pi_{\mathcal G}(\theta)=\mathcal G(0\mid\theta)$ is the probability of zero activity. We use $\mathrm{Gamma}(a,b)$ for the gamma distribution with shape $a$ and rate $b$.
When the data consists of daily indicator variables for each user, then $\mathcal G = \mathcal G_{\mathrm{Be}}$ is the Bernoulli distribution
\[
    \mathcal G_{\mathrm{Be}}(a\mid\theta)=\theta^a(1-\theta)^{1-a},\qquad a\in\{0,1\},
\]
and $\pi_{\mathrm{Be}}(\theta)=\theta$.
If, instead, we have daily re-trigger data for each user ($\daycount \in \NN$), we let $\mathcal G = \mathcal G_{\mathrm{NB}}$ be the negative binomial score distribution
\[
    \mathcal G_{\mathrm{NB}}(a\mid\theta)=\binom{a+r-1}{a}(1-\theta)^r\theta^a,\qquad a=0,1,2,\ldots.
\]
Hence $\mathcal G_{\mathrm{NB}}(0\mid\theta)=(1-\theta)^r$ and $\pi_{\mathrm{NB}}(\theta)=1-(1-\theta)^r$.
The commonly employed prior distribution $\mathscr P_\mu$ are such that $\theta_n \le \varepsilon$ infinitely often for any choice of threshold $\varepsilon$. Therefore, letting the negative binomial distribution success probability be $(1- \theta)$ entails that $A_{d, n} = 0$ infinitely many times and, more importantly, $A_{d,n} > 0$ only finitely many times. That is, under our model, for any time horizon $D$, only finitely many users will be active, which is what one would expect. Letting the success probability be $\theta$ instead would produce the opposite, with infinitely many users being active at any given time (under the model), which is not a reasonable modeling assumption. See \cite{Ber(23)} for a proof.
Similarly, we assume a trait process for \eqref{eq:trp2} where $\mathcal G = \mathcal G_{\mathrm{TG}}$ is the probability mass function over $\{0, 1, \ldots, \daytotal\}$ given by 
\begin{equation}\label{eq:tr_geom}
    \mathcal G_{\mathrm{TG}}(y\mid \theta; D) = \begin{cases}
        (1 - \theta)^{y-1} \theta & \text{ if }  
        y \in \{1,\ldots , D\},\\
        (1 - \theta)^D & \text{ if } y = 0,
    \end{cases}
\end{equation}
i.e.\ $A \sim \mathcal G_{\mathrm{TG}}(\cdot\mid\theta;D)$ is equivalent to $A^\prime \sim \mathrm{Geom}(\theta)$ and $A = A^\prime$ if $1 \le A^\prime \le D$ and $A=0$ otherwise. Thus $\pi_{\mathrm{TG}}(\theta;D)=1-(1-\theta)^D$.

All the previously proposed Bayesian methodologies for the unseen feature problem (cf.\  \Cref{sec:prev_work}) can be recovered as special cases of this trait process framework. 
In particular, the model in \citet{masoero2022more} is equivalent to \Cref{eq:trp1} where $\mathcal G = \mathcal G_{\mathrm{Be}}$ and $\mathscr P_\mu$ is the law of the (three-parameter) beta process \citep{Teh(09), Bro(12)}.
Similarly, the beta-binomial model of \citet{ionita2009estimating} is recovered from \eqref{eq:trp1} when $\mathcal G = \mathcal G_{\mathrm{Be}}$ and $\mu = \sum_{\useridx = 1}^{\usertotal} \theta_\useridx \delta_{\omega_\useridx}$ where $\usertotal$ is the total number of users (assumed fixed) and $\theta_\useridx \iid \mathrm{Beta}(a, b)$.
Finally, the hierarchical beta-geometric model of \citet{richardson22a} is recovered from \Cref{eq:trp2} choosing $\mathcal G_{\mathrm{TG}}$ and assuming the same prior for $\mu$ as in the beta-binomial model.
Beyond the likelihood distribution, one key aspect of these Bayesian models is the prior. From the general theory on trait processes in \cite{Jam(17)}, \cite{Cam(18)} and \cite{Ber(23)}, it is evident that the prior plays a crucial role in prediction. 
While previously proposed Bayesian methods make restrictive assumptions about the form of $\mu$, limiting their ability to model sparse and diverse user engagement patterns, we argue in favor of the stable beta-scaled process (SB-SP) prior \citep{camerlenghi2022scaled}, which allows for power-law behaviors (both a priori and a posteriori), and a rich predictive structure \citep{ExtFeatures}.


\subsection{Stable Beta-Scaled Process Prior}

The SB-SP prior is a flexible and tractable prior distribution that allows for a rich representation of user heterogeneity while maintaining desirable properties such as exchangeability and analytical tractability. 
To define it, consider an $\alpha$-stable random measure $\mu = \sum_{\useridx \geq 1} \tau_\useridx \delta_{\omega_\useridx}$. That is, $\mu$ is a completely random measure (CRM) with L\'evy intensity $\alpha s^{-1 - \alpha} \dd s P_0(\dd x)$ for $0 < \alpha < 1$ 
and $P_0$ a diffuse probability measure.
Denote by $\Delta_1 > \Delta_2 > \cdots$ the decreasingly ordered random jumps $\tau_{\useridx}$ of $\mu$.
Following \cite{Fer(72)} $\Delta_1$ has density $f_{\Delta_1}(\zeta) = \exp\{-\alpha \int_\zeta^{+\infty} s^{-1-\alpha} \dd s\} \alpha \zeta^{-1-\alpha}=\exp\{-\zeta^{-\alpha}\}\alpha \zeta^{-1-\alpha}$, i.e., $\Delta_{1}^{-\alpha} \sim \mathcal{E}(1)$, where $\mathcal{E}(1)$ denotes an exponential distribution with rate $1$.
Denote by $\mathcal L_\zeta(\cdot)$ the conditional distribution of $(\Delta_{\useridx+1} / \Delta_1)_{\useridx \geq 1}$ given $\Delta_{1} = \zeta$. 
Then, a scaled process is obtained by marginalizing $\Delta_1$ from the latter distribution. 
As noted in \cite{james15sp}, we can gain in flexibility by changing the law of $\Delta_1$, i.e., marginalizing $\mathcal L_\zeta(\cdot)$ with respect to $\zeta \sim h^*$ for any distribution $h^*$ supported on the non-negative reals.
The stable beta-scaled process is obtained by a suitable choice of $h^*$.
\begin{definition}\label{def:sbsp}
    A stable beta-scaled process (SB-SP) prior is the law of random measure $\mu = \sum_{\useridx \geq 1} \theta_\useridx \delta_{\omega_\useridx}$ 
    where $\omega_\useridx \iid P_0$, a diffuse probability measure,
    and $(\theta_\useridx)_{\useridx \geq 1}$ 
    is distributed as $\int \mathcal L_{\zeta}(\cdot) h_{\alpha, c, \beta}(\zeta) \dd \zeta$ where
    \[
         h_{\alpha, c, \beta}(\zeta) = \Gamma(c+1)^{-1} \alpha \beta^{c+1} \zeta^{- \alpha (c+1) - 1} \exp \{-\beta  \zeta^{-\alpha}\},
    \]
    for $0 < \alpha < 1$, $\beta > 0$, and $c > 0$.
    We will use the notation $\mu \sim \mbox{SB-SP}(\alpha, c, \beta)$.
\end{definition}
We introduce the following notation for shorthand convenience.
\begin{definition}  \label{def:models}
    Let $\mu \sim \text{\rm SB-SP}(\alpha, c, \beta)$. 
    If $Z_\dayidx$ is as in \Cref{eq:trp1},  $Z_\dayidx ~\mid \mu \ind~ \mathrm{TrP}(\mathcal G_{\mathrm{NB}}; \mu)$, we say that $Z_{1:D} = (Z_1, \ldots, Z_\daytotal)$ follows a negative binomial stable beta scaled process and write $Z_1, \ldots, Z_\daytotal \sim \mbox{\nbmodel}$.
    If, instead, $\mathcal G = \mathcal G_{\mathrm{Be}}$, we say that $Z_{1:D}$ follows a Bernoulli stable beta scaled process and write $Z_1, \ldots, Z_\daytotal \sim \mbox{\bemodel}$.
    Finally, if $Z^{\FT}$ is as in \eqref{eq:trp2}, such that $Z^{\FT} \mid \mu ~\sim~ \mbox{TrP}(\mathcal G_{\mathrm{TG}}; \mu)$ where $\mathcal G_{\mathrm{TG}}$ is as in \eqref{eq:tr_geom}, we say that $Z^*$ follows a truncated geometric stable beta scaled process and write $Z^{\FT} \sim \text{\tgmodel}$.
\end{definition}

\subsection{Posterior Distribution and Predictive Quantities}

Central to obtaining computationally feasible estimators for the future user engagement is the study of the posterior and predictive distributions under the trait process model. 
By extending the Poisson partition calculus developed in \cite{Jam(17)} for CRM priors, the next Theorem gives a unified template for the posterior distribution under the SB-SP priors.

We introduce some additional notation.  For a sample $Z_{1:\daytotal}$ from \eqref{eq:trait_model}, let $\pi_{\mathcal G}(\theta)=1-\mathcal G(0\mid\theta)$ denote the probability of positive activity under score distribution $\mathcal G$, and let $\omega^*_{1:\usertotal_\daytotal} = \{ \omega_n : A_{d,n} >0 \text{ for some } d \in \{1,\ldots , D\} \}$ be the $\usertotal_\daytotal$ unique feature labels displayed in such a sample  (i.e., in our notation, the unique user identifiers of those active users). For $\useridx = 1, \ldots, \usertotal_\daytotal$ define
\[
    B_\useridx = \{d\le D: Z_d(\omega^*_\useridx)>0\},\qquad
    b_\useridx = |B_\useridx|,\qquad
    t_\useridx = \sum_{d=1}^D Z_d(\omega^*_\useridx).
\]
That is, $b_\useridx$ is the number of active days and $t_\useridx$ is the total trigger count. Observe that $B_n$, $b_n$, $t_n$, and $N_D$ and radnom.

\begin{theorem}\label{thm:post_general}
    Let $Z_{1:\daytotal}$ be a sample from \eqref{eq:trait_model}, where $\mu \sim \text{\rm SB-SP}(\alpha, c, \beta)$.
Then, the posterior distribution of $\mu$ given $Z_{1:\daytotal}$ coincides with the law of 
\[
    \sum_{\useridx = 1}^{\usertotal_\daytotal} \theta^*_\useridx \delta_{\omega^*_\useridx} + \mu^\prime,
\]
such that  
\begin{enumerate}
    \item $\theta^*_1,\ldots,\theta^*_{\usertotal_\daytotal}$ are independent random variables with marginal density
    \begin{equation}\label{eq:post_jump_general}
        f_{\theta^*_\useridx}(\theta) \propto (1-\pi_{\mathcal G}(\theta))^{\daytotal-b_\useridx}\left\{\prod_{\dayidx \in B_\useridx} \mathcal G(Z_\dayidx(\omega^*_\useridx)\mid \theta)\right\}\theta^{-1-\alpha}\mathbf 1_{(0,1)}(\theta).
    \end{equation}
    \item $\mu^\prime$ is a random measure independent of the $\theta^*_\useridx$'s with law
    \begin{equation}\label{eq:post_mu_general}
    \begin{aligned}
        \mu^\prime \mid \tilde \Delta_{1,h}
        &\sim \mathrm{CRM}(\nu'_{\tilde \Delta_{1,h}}),\\
        \nu'_{\tilde \Delta_{1,h}}(\dd\theta,\dd\omega)
        &=
        \tilde \Delta_{1,h}^{-\alpha}\alpha
        (1-\pi_{\mathcal G}(\theta))^{\daytotal}
        \theta^{-1-\alpha}
        \mathbf 1_{(0,1)}(\theta)
        \dd\theta P_0(\dd\omega),\\
        \tilde \Delta_{1,h}^{-\alpha} \mid Z_{1:\daytotal}
        &\sim \mathrm{Gamma}\left(
            \usertotal_\daytotal+c+1,
            \beta+\alpha\int_0^1
            \{1-(1-\pi_{\mathcal G}(\theta))^\daytotal\}
            \theta^{-1-\alpha}\dd\theta
        \right).
    \end{aligned}
    \end{equation}
\end{enumerate}
\end{theorem}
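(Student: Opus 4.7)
The strategy is to exploit the hierarchical construction of the SB-SP prior: conditioning on the auxiliary maximal jump $\zeta = \tilde\Delta_{1,h}$ turns $\mu$ into a tractable CRM whose trait-process posterior follows from the standard Poisson partition calculus of \citep{Jam(17)}, after which a Bayesian update of $\zeta$ closes the argument. Throughout I write $\pi_{\mathcal G}(\theta) := 1 - \mathcal G(0;\theta)$ for the probability that a single score is nonzero.

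\emph{Step 1 (conditional CRM structure and posterior).} By a classical property of the $\alpha$-stable subordinator, conditioning on $\Delta_1 = \zeta$ leaves the remaining jumps $\Delta_2 > \Delta_3 > \cdots$ distributed as a Poisson point process on $(0,\zeta)$ with intensity $\alpha s^{-1-\alpha}\dd s$. Rescaling by $\zeta$, the normalised atoms $\theta_\useridx = \Delta_{\useridx+1}/\zeta$ defining $\mathcal L_\zeta$ form a Poisson point process on $(0,1)$ with intensity $\alpha \zeta^{-\alpha}\theta^{-1-\alpha}\dd\theta$; pairing with i.i.d.\ labels $\omega_\useridx \sim P_0$, this shows that $\mu \mid \zeta$ is a CRM with L\'evy intensity $\nu_\zeta(\dd\theta,\dd\omega) = \alpha \zeta^{-\alpha}\theta^{-1-\alpha}\dd\theta\,P_0(\dd\omega)$ on $(0,1)\times\mathbb X$. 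The trait process then makes the data a marking of this Poisson point process, so the standard CRM posterior characterisation for feature/trait models \citep{Jam(17), masoero2018posterior, Ber(23)} gives, conditionally on $\zeta$, that $\mu$ decomposes as $\sum_{\useridx=1}^{\usertotal_\daytotal}\theta^*_\useridx\delta_{\omega^*_\useridx} + \mu'_\zeta$, with the $\theta^*_\useridx$ mutually independent (and independent of $\mu'_\zeta$) of density on $(0,1)$ proportional to $(1-\pi_{\mathcal G}(\theta))^{\daytotal-m_\useridx}\theta^{-1-\alpha}\prod_{\dayidx \in \mathcal B_\useridx}\mathcal G(A_{\dayidx,\useridx};\theta)$, and $\mu'_\zeta$ a CRM with intensity $\alpha \zeta^{-\alpha}(1-\pi_{\mathcal G}(\theta))^{\daytotal}\theta^{-1-\alpha}\dd\theta\,P_0(\dd\omega)$. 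The factor $\alpha\zeta^{-\alpha}$ cancels in the normalisation of the $\theta^*_\useridx$-densities, so their marginals coincide with \eqref{eq:post_jump_general} and do not depend on $\zeta$.

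\emph{Step 2 (update of $\zeta$ and assembly).} The marked-Poisson marginal likelihood is, up to a constant in $\zeta$, $p(Z_{1:\daytotal}\mid\zeta) \propto \zeta^{-\alpha\usertotal_\daytotal}\exp\{-\zeta^{-\alpha}A\}$, with $A = \alpha\int_0^1 (1-(1-\pi_{\mathcal G}(\theta))^{\daytotal})\theta^{-1-\alpha}\dd\theta$. Multiplying by the prior $h_{\alpha,c,\beta}(\zeta)\propto \zeta^{-\alpha(c+1)-1}\exp\{-\beta\zeta^{-\alpha}\}$ and changing variables to $u = \zeta^{-\alpha}$ yields a density proportional to $u^{\usertotal_\daytotal+c}\exp\{-(\beta+A)u\}$, identifying $\tilde\Delta_{1,h}^{-\alpha}\mid Z_{1:\daytotal}\sim\mathrm{Gamma}(\usertotal_\daytotal+c+1,\beta+A)$, as in \eqref{eq:post_mu_general}. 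Since the marginal laws of the $\theta^*_\useridx$ are free of $\zeta$, they remain independent of $\tilde\Delta_{1,h}$ a posteriori, and Steps 1--2 combine to the stated decomposition.

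The main technical point is the conditional-on-maximum representation of the stable subordinator used in Step 1, together with the bookkeeping needed to track the $\zeta$-dependence of the marked-Poisson likelihood in Step 2; once both are in place, the exponential-in-$\zeta^{-\alpha}$ structure common to the likelihood and to $h_{\alpha,c,\beta}$ reduces the $\zeta$-update to routine conjugate computation, and the absence of $\zeta$ in the $\theta^*_\useridx$-densities is precisely what allows the posterior to be expressed as a mixture over the Gamma-distributed $\tilde\Delta_{1,h}$.
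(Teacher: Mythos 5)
Your proof is correct and follows essentially the same route as the paper's: condition on the largest jump to reduce the SB-SP to a CRM on $(0,1)$ with intensity $\alpha\zeta^{-\alpha}\theta^{-1-\alpha}\dd\theta\,P_0(\dd\omega)$, invoke the standard trait-process posterior of \cite{Jam(17)}, observe that the fixed-atom densities are free of $\zeta$, and update $\zeta$ by Bayes' rule. The only (harmless) differences are that you re-derive the conditional-CRM representation from Ferguson's conditional-on-maximum description of the stable subordinator where the paper cites it as a lemma from \cite{camerlenghi2022scaled}, and that you write out explicitly the change of variables $u=\zeta^{-\alpha}$ yielding the $\mathrm{Gamma}(\usertotal_\daytotal+c+1,\beta+A)$ update, which the paper leaves as ``simple algebra''.
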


\Cref{thm:post_general} decomposes the posterior distribution of the random measure $\mu$ as the sum of two terms: the first one corresponding to the already observed feature labels (i.e., the user ids of those users that already triggered) and the second one corresponding to the hitherto unobserved feature labels (i.e., users who are, thus far, inactive). According to the theorem, each active user $\omega^*_n$ is associated with parameters $\theta^*_n$ that depend exclusively on their activity in the sample, independently across users. On the other hand, the potentially infinite unobserved users are collected in a random measure $\mu^\prime$ whose distribution corresponds to a tilting of the original SB-SP prior law. Moreover, \Cref{thm:post_general} allows us to obtain a formal description of the predictive distribution for $Z_{\daytotal+1} \mid Z_{1}, \ldots, Z_\daytotal$. 
Indeed, we have
\begin{equation}\label{eq:pred_formal}
    Z_{\daytotal+1} \mid Z_{1}, \ldots, Z_\daytotal 
    = 
    \sum_{\useridx = 1}^{\usertotal_\daytotal} {Z}^*_{\daytotal+1, \useridx} \delta_{\omega^*_\useridx} + Z^\prime_{\daytotal+1},
\end{equation}
where $Z^*_{\daytotal+1, \useridx} \mid \theta^*_\useridx \sim \mathcal G(\cdot; \theta^*_\useridx)$ and $Z^\prime \mid \mu^\prime \sim \mathrm{TrP}(\mu^\prime; \mathcal G)$ for $\theta^*_\useridx$ and $\mu^\prime$ as in Theorem \ref{thm:post_general}.

To specialize \Cref{thm:post_general} to the three models under consideration, we observe the following. If data is as in \eqref{eq:trp1}, the sample size $\daytotal$ corresponds to the number of days $\daytotal_0$; instead, if data is as in \eqref{eq:trp2}, the sample size is equal to one. In both cases, the number of unique features $\usertotal_\daytotal$ in the sample is the number of users that were active in the pilot study (i.e., the first $\daytotal_0$ days), namely $\usertotal_{\daytotal_0}$.
Then, we obtain the following expressions for the posterior distributions under the three models considered.
\begin{corollary}\label{cor:post_models}
    Let $\mathrm{B}(a,b) = \Gamma(a)\Gamma(b)/\Gamma(a+b)$ be the beta function. For $x\ge 0$, $y\ge 0$, and $r>0$, let
    \[
        \psi_r(x,y) := \alpha\int_0^1 (1-\theta)^{rx}\{1-(1-\theta)^{ry}\}\theta^{-1-\alpha}\dd\theta.
    \]
    Equivalently, $\psi_r(x,y)=\alpha\{\mathrm{B}(rx+1,-\alpha)-\mathrm{B}(r(x+y)+1,-\alpha)\}$, where the beta-function expression is understood by analytic continuation. In particular, $\psi_r(x,0)=0$. Let $r_*=r$ for \nbmodel and $r_*=1$ for \bemodel and \tgmodel.
    Then for all the BNP models considered --- \nbmodel, \bemodel, or \tgmodel --- the posterior of $\mu$ is equivalent to the law of
    \begin{equation}\label{eq:post_mu_activity}
        \sum_{\useridx = 1}^{N_{D_0}} \theta^*_\useridx \delta_{\omega^*_\useridx} + \mu^\prime, \quad \mu^\prime = \sum_{\ell \geq 1} \theta^\prime_\ell \delta_{\omega^\prime_\ell}.
    \end{equation}
    In particular, $\mu^\prime$ collects all the user-specific parameters $\theta^\prime_\ell$ of those users $\omega^\prime_\ell$ that did not trigger in the first $D_0$ days. 
    Moreover,
    \[
        \tilde \Delta_{1,h}^{-\alpha}\mid data \sim \mathrm{Gamma}\left(N_{D_0}+c+1,\beta+\psi_{r_*}(0,D_0)\right).
    \]
    For observed users, $\theta^*_\useridx\mid data\sim\mathrm{Beta}(a_\useridx,b_\useridx^{post})$ with
    \[
        (a_\useridx, b_\useridx^{post}) = \begin{cases}
            (t_\useridx - \alpha, r D_0 + 1),\quad
            t_\useridx=\sum_{d=1}^{D_0} A_{d,\useridx}
            & \text{under the \nbmodel model}, \\
            (b_\useridx - \alpha, D_0 - b_\useridx + 1),\quad
            b_\useridx=\sum_{d=1}^{D_0} A_{d,\useridx}
            & \text{under the \bemodel model}, \\
            (1 - \alpha, F_\useridx)
            & \text{under the \tgmodel model}.
        \end{cases}
    \]
\end{corollary}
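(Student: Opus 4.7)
The plan is to specialize \cref{thm:post_general} to the three score distributions $\mathcal{G}_{\mathrm{Be}}$, $\mathcal{G}_{\mathrm{NB}}$, and $\mathcal{G}_{\mathrm{TG}}$. The representation \eqref{eq:post_mu_activity} is immediate from the theorem: the fixed-atom part carries the observed labels $\omega^*_{1:N_{D_0}}$, while $\mu^\prime$ is an independent CRM whose atoms, drawn i.i.d.\ from $P_0$, lie almost surely away from those labels and therefore correspond to users inactive during the pilot window. What remains is to substitute each $\mathcal{G}$ into \eqref{eq:post_jump_general} and \eqref{eq:post_mu_general} and simplify.

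For the jump densities, under \bemodel we have $1-\pi_{\mathcal{G}}(\theta)=1-\theta$ and each active-day factor equals $\theta$, so \eqref{eq:post_jump_general} collapses to $\theta^{m_n-\alpha-1}(1-\theta)^{D-m_n}$. Under \nbmodel, $1-\pi_{\mathcal{G}}(\theta)=(1-\theta)^r$ and each active-day factor is proportional to $(1-\theta)^r\theta^{A_{d,n}}$; the $(1-\theta)$ exponents from inactive and active days combine to $rD$, yielding a kernel proportional to $\theta^{m_n-\alpha-1}(1-\theta)^{rD}$. Under \tgmodel the theorem's effective ``sample size'' is one, so the factor $(1-\pi_{\mathcal{G}})^{D-m_n}$ is trivial, and $\mathcal{G}_{\mathrm{TG}}(F_n;\theta)\propto\theta(1-\theta)^{F_n-1}$ directly gives the kernel $\theta^{-\alpha}(1-\theta)^{F_n-1}$. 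Each expression is then recognized as the stated Beta density.

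For the Gamma rate, I must show that $\alpha\int_0^1[1-(1-\pi_{\mathcal{G}}(\theta))^D]\theta^{-1-\alpha}\,d\theta=\psi_r(0,D)$. In every case the factor $(1-\pi_{\mathcal{G}}(\theta))^D$ reduces to $(1-\theta)^{rD}$ (with $r=1$ for \bemodel and \tgmodel, where the exponent $D$ already sits inside $\mathcal{G}_{\mathrm{TG}}(0;\theta)$). A single integration by parts---differentiating $1-(1-\theta)^{rD}$ and antidifferentiating $\theta^{-1-\alpha}$---turns the integral into $-1+rD\cdot B(1-\alpha,rD)$, which the reflection identity $\Gamma(1-\alpha)=-\alpha\,\Gamma(-\alpha)$ rewrites as $\alpha[B(1,-\alpha)-B(rD+1,-\alpha)]=\psi_r(0,D)$.

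The main obstacle is this last computation: the symbols $B(\cdot,-\alpha)$ are defined only by analytic continuation, so I must check that the boundary term in the integration by parts vanishes at $\theta=0$---which it does, since $(1-\theta)^{rD}-1=O(\theta)$ dominates the $\theta^{-\alpha}$ singularity---and then carefully translate between the convergent Beta integral $B(1-\alpha,rD)$ and the formal symbols $B(\cdot,-\alpha)$ appearing in $\psi_r$. The remaining bookkeeping, namely the jump-density simplifications and the identification of $\mu^\prime$ with the unseen users, is routine substitution into \cref{thm:post_general}.
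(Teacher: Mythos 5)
Your proposal is correct and follows exactly the route the paper intends: the paper's own proof of this corollary is the one-line remark that it ``follows by the definition of the Beta function and simple algebra'' after specializing \cref{thm:post_general}, and your write-up simply supplies the omitted substitutions of $\mathcal G_{\mathrm{Be}}$, $\mathcal G_{\mathrm{NB}}$, $\mathcal G_{\mathrm{TG}}$ into \eqref{eq:post_jump_general}--\eqref{eq:post_mu_general}, together with the integration-by-parts and reflection-formula computation identifying the Gamma rate with $\beta+\psi_r(0,D)$. The only point worth flagging is notational rather than mathematical: your reading requires $\mathcal G_{\mathrm{TG}}(0;\theta)=(1-\theta)^{D}$ (so that the pmf sums to one) and, for the \nbmodel case, $m_\useridx=\sum_{\dayidx}\daycount$ rather than $|\mathcal B_\useridx|$, both of which are consistent with the paper's other statements even though its displayed formulas are ambiguous on these two conventions.
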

Observe that, under the \bemodel and the \nbmodel\!\!, the posterior of $\theta^*_n$ depends on the active-day count $b_n$ and the total trigger count $t_n$, respectively. In particular, if the relevant count is large, $\theta^*_n$ increases, meaning that we expect to see that user often also in the future. Instead, under the \tgmodel\!\!, $\theta^*_n$ decreases with $F_n$, which is also to be expected: if a user is active in the first stages of the experiment, they will likely have a high trigger rate.

Rather surprisingly, the posterior for $\mu^\prime$ coincides for the $\mbox{\bemodel}$ and $\mbox{\tgmodel}$ models. 
This entails that the predictive distribution of quantities that can be expressed as functionals of $\mu^\prime$ will be the same under $\mbox{\bemodel}$ and $\mbox{\tgmodel}$ models because $(1-\pi_{\mathrm{Be}}(\theta))^{D_0}=(1-\theta)^{D_0}$ and $1-\pi_{\mathrm{TG}}(\theta;D_0)=(1-\theta)^{D_0}$. For the \tgmodel\!\!, $\mathcal G_{\mathrm{TG}}(\cdot\mid\theta;D_0)$ is one trait-process observation whose likelihood already depends on the pilot horizon $D_0$.
This means that for predictive purposes, the \tgmodel and the \bemodel appear indistinguishable. However, one key difference is the marginal distribution they induce over the observations, which depends on different sufficient statistics in the two models, as we will show in Theorem \ref{thm:marg_general} below. Hence, even if the predictive rules under the two models are identical for fixed parameter values, they can differ significantly once one adopts an empirical Bayesian procedure to estimate the parameters from the marginal likelihood of the data.
From an applied perspective, this tells practitioners that the choice between \bemodel and \tgmodel is not about the closed-form predictor itself (which matches for functionals of $\mu^\prime$), but about which data summary is most defensible for the application: daily activity counts versus first-trigger times. In design terms, the two models can yield different duration recommendations once hyperparameters are fit from data, so model selection here directly affects stopping/extension decisions.

We consider next the distribution of the number of new users that were not active in the first $D_0$ days but are active at least once in the following $D_1$ days.
\begin{proposition}\label{prop:pred_unseen}
    Let $N^*_\dayidx$ be the number of users that trigger for the first time on the $d$-th day $(d > D_0)$. 
    Then, conditionally on $\tilde \Delta_{1,h}$ as in \Cref{cor:post_models}, the $N^*_\dayidx$ are independent and $N^*_\dayidx \sim \mathrm{Poi}(\tilde \Delta_{1,h}^{-\alpha}\psi_{r_*}(d-1,1))$, where $r_*=r$ for \nbmodel and $r_*=1$ for the \bemodel and \tgmodel models. For the \bemodel and \tgmodel models, this reduces to $N^*_{D_0+w}\sim\mathrm{Poi}(\alpha\tilde \Delta_{1,h}^{-\alpha}\mathrm{B}(1-\alpha,D_0+w))$ for $w=1,\ldots,D_1$.
    Let $\news{D_0}{D_1} = \sum_{d=D_0+1}^{D_0+D_1} N^*_{d}$ be the number of users that were not active in the first $D_0$ days but are active at least once between days $D_0+1, \ldots, D_0 + D_1$ as per \Cref{eq:news}. 
    Then, conditionally on $\tilde \Delta_{1,h}$, $\news{D_0}{D_1}\sim\mathrm{Poi}(\tilde \Delta_{1,h}^{-\alpha}\psi_{r_*}(D_0,D_1))$, and marginally under all the three models, the posterior distribution of $\news{D_0}{D_1}$ given the data is
    \[
        \news{D_0}{D_1} 
        \sim \mathrm{NegBin}
        \left\{
            N_{D_0} + c  + 1, p_{D_0}^{(D_1)}
        \right\},
    \]
    where 
    \[
        p_{D_0}^{(D_1)} = \frac{\psi_{r_*}(D_0, D_1)}{\beta + \psi_{r_*}(0, D_0 + D_1)}
    \]
    with $r_*=r$ for \nbmodel and $r_*=1$ for the \bemodel and \tgmodel models.
\end{proposition}

\Cref{prop:pred_unseen} is pivotal for our inferential goals, as it allows us to obtain a point estimator for the number of new users that will trigger in the following $D_1$ days by taking the expectation of $\news{D_0}{D_1}$, which reduces to
\begin{equation}\label{eq:u_estim}
\hatnews{D_0}{D_1} = (N_{D_0}+c+1)\frac{p_{D_0}^{(D_1)}}{1-p_{D_0}^{(D_1)}} = (N_{D_0}+c+1)\frac{\psi_{r_*}(D_0,D_1)}{\beta+\psi_{r_*}(0,D_0)}.
\end{equation}

In addition to estimating the number of new users that will be active at least once in a follow-up observation period, another quantity of interest to experimenters is the number of total future re-trigger counts. 
Assuming that data is as in \Cref{eq:trp1} where $\daycount$ denotes the number of triggers of the $\useridx$-th user on day $\dayidx$, then the total number of triggers in the follow-up period of $D_1$ days is defined as $\newstot{D_0}{D_1} = \sum_{\dayidx=1}^{D_1} \sum_{n \geq 1} A_{D_0 + d, n}$.
This information can be estimated using the \nbmodel model. 

To this end, denote by $\news{D_0}{D_1, j}$ be the number of users who have not triggered before day $D_0$ and will trigger exactly $j \geq 1$ times in a follow-up period of $D_1$ days. Namely,
\[
  \news{D_0}{D_1, j} = \sum_{\useridx \geq 1} \mathrm{I}\left[\sum_{\dayidx=1}^{D_0} A_{d,n} = 0 \right]  \mathrm{I}\left[\sum_{\dayidx=1}^{D_1} A_{D_0 + d, n} = j \right].
\]
Similarly, denote by $S_{D_0}^{(D_1)}$ be the total re-triggers caused by the previously observed $N_{D_0}$ users in the follow-up period, that is:
\[
    S_{D_0}^{(D_1)} = \sum_{\dayidx=1}^{D_1} \sum_{\useridx = 1}^{N_{D_0}} A_{D_0 + d, n}.
\]
The following proposition characterizes the distribution of re-trigger counts.
\begin{proposition}\label{prop:pred_total}
    Let $Z_{1:D_0} \sim \mbox{\nbmodel}$, let $t_\useridx := \sum_{d=1}^{D_0} A_{d,\useridx}$, $T_0^{(D_0)} := \sum_{\useridx=1}^{N_{D_0}} t_\useridx$, $a := N_{D_0}+c+1$, and $b := \beta+\psi_r(0,D_0)$. Then the posterior of $\newstot{D_0}{D_1}$ given $Z_{1:D_0}$ coincides with the law of $\sum_{j \geq 1} j \news{D_0}{D_1, j} + S_{D_0}^{(D_1)}$, such that, conditionally on $\tilde\Delta_{1,h}$, the $\news{D_0}{D_1,j}$ are independent Poisson random variables with means $\tilde\Delta_{1,h}^{-\alpha}\rho_{D_0}^{(D_1,j)}$. Marginally, they are dependent through $\tilde\Delta_{1,h}^{-\alpha}$ and each has distribution
    \[
        \news{D_0}{D_1, j} \mid Z_{1:D_0} \sim \mathrm{NegBin}(N_{D_0} + c + 1, p_{D_0}^{(D_1, j)}),
    \]
    with 
    \[
        p_{D_0}^{(D_1, j)} = \frac{\rho_{D_0}^{(D_1,j)}}{\beta+\psi_r(0,D_0)+\rho_{D_0}^{(D_1,j)}},  \quad \rho_{D_0}^{(D_1,j)} = \alpha \binom{j+rD_1-1}{j}\mathrm{B}(j-\alpha,r(D_0+D_1)+1).
    \]
    Moreover, $S_{D_0}^{(D_1)} \mid Z_{1:D_0} = \sum_{\useridx = 1}^{N_{D_0}} \mathrm{NegBin}(rD_1, \theta^*_\useridx)$, where $\theta^*_\useridx \sim \mathrm{Beta}(t_\useridx-\alpha,rD_0+1)$ is as in Corollary \ref{cor:post_models}. The Bayesian estimator for $\newstot{D_0}{D_1}$ is then
    \[
        \hatnewstot{D_0}{D_1} = \E\left[\newstot{D_0}{D_1} \mid Z_{1:D_0}\right] = (N_{D_0}+c+1)\frac{\alpha rD_1\mathrm{B}(1-\alpha,rD_0)}{\beta+\psi_r(0,D_0)}+\frac{D_1}{D_0}\{T_0^{(D_0)}-\alpha N_{D_0}\}.
    \]
\end{proposition}
The estimator $\hatnewstot{D_0}{D_1}$ has a natural interpretation: the first term corresponds to the total trigger counts that we expect from the hitherto unseen users. Such a quantity depends exclusively on the hyperparameters of the prior and $N_{D_0}$. The second term instead refers to those users that were already active, and depends on the total number of triggers in the first $D_0$ days. In particular, it consists of a $D_1/D_0$ extrapolation of observed total triggers, whose observed per-day total is $T_{0}^{(D_0)} / D_0$, discounted by $\alpha N_{D_0}$. 
On the other hand, the interpretation of the first term is trickier, as it might be large in two extreme cases, depending on the hyperparameters of the prior: a large number of new users appear, but each of them triggers seldom, or a small number of new users become active, but each of them triggers a lot. 
Since we will adopt an empirical Bayesian approach to select adequate values of the hyperparameters, our estimator is entirely data-driven and we don't have to specify in advance what we believe the future number of users and their propensity to click will be.

\subsection{Estimating the number of days to reach a participation threshold}\label{sec:est_days}

We consider now a related but slightly different prediction problem: suppose that we would like for the A/B test to reach a target level of traffic, so that $M$ users are exposed to the intervention before it is terminated. We will now show how our models can be used to estimate the number of days needed to reach this inclusion level, based on pilot data.

\begin{figure}[t]
    \centering
    \includegraphics[width=0.5\textwidth]{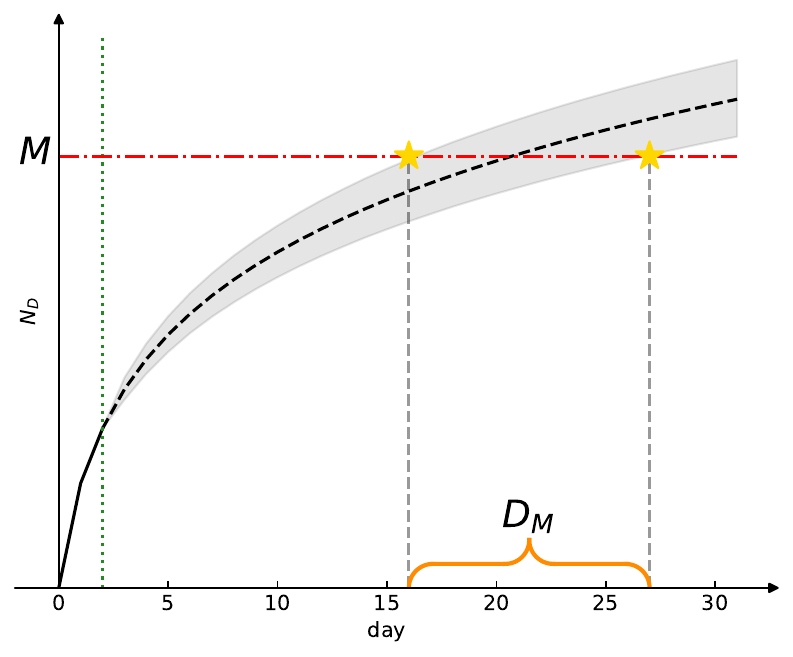}
    \caption{Inversion technique to estimate $D_M$. The solid line represents the data. Dashed line and shaded area are the mean and the global credible band of $N_{D_0} + \news{D_0}{\ell}$, where $\ell$ varies across the horizontal axis. 
    The interval for $D_M$ (curly bracket) is obtained by slicing the grey area at $M$.}
    \label{fig:inversion_ci}
\end{figure}

For a target $M>N_{D_0}$, let
\[
    D_M := \inf\{\ell\ge 1:N_{D_0}+\news{D_0}{\ell}\ge M\}
\]
denote the number of follow-up days (beyond the pilot) needed to reach $M$ total users. Equivalently, $D_M$ is the time needed to observe $M-N_{D_0}$ post-pilot users, i.e., users that did not trigger during the pilot period $1, \ldots, D_0$.
One can approximate the posterior law of $D_M$ directly by Monte Carlo see Supplementary \ref{app:est_days_direct}. Here, we focus on a more scalable heuristic heuristic estimator is obtained by ``slicing'' a global prediction band for the trajectory of the number of users that trigger before day $\ell$, i.e., $N_{D_0} + \news{D_0}{\ell} =: N_{D_0+\ell}$, as depicted in  \Cref{fig:inversion_ci}. 

To construct a global credible band for $(N_{D_0 + \ell})_{\ell \geq 1}$, we work with the random variables $N^*_\dayidx$ introduced in \Cref{prop:pred_unseen}. Indeed, for any $\ell$, $(N_{D_0 + 1}, \ldots, N_{D_0 + \ell})$ and $(N^*_{D_0+1}, \ldots, N^*_{D_0+\ell})$ are in one-to-one correspondence given the observed data.
In particular, we aim at constructing a global credible band, of level $\varepsilon$, of the form $\{(n^{lo}_{D_0+\ell}, n^{hi}_{D_0 + \ell})\}_{\ell = 1}^{D^{up}}$ such that $\text{Pr}(n^{lo}_{D_0 + \ell} \leq N_{D_0 + \ell} \leq n^{hi}_{D_0 + \ell} \text{ for all } \ell \mid Z_{1:D_0}) \geq 1 - \varepsilon.$
First we simulate $Q$ times from the posterior law of $\tilde \Delta_{1,h}^{-\alpha}$ as in \Cref{cor:post_models} and, conditional on $\tilde \Delta_{1,h}^{-\alpha}$, we sample the values $N^*_{D_0+\ell}$, $\ell = 1, \ldots, D^{up}$ as in \Cref{prop:pred_unseen}, keeping only the $(1-\varepsilon)Q$ tuples of $(\tilde \Delta_{1,h}^{-\alpha}, \{N^*_{D_0+\ell}\}_{\ell=1}^{D^{up}})$ with highest posterior density.
Then, we evaluate the $(1-\varepsilon)Q$ trajectories for 
$N_{D_0 + \ell} = N_{D_0} + \sum_{j \leq \ell} N^*_{D_0+j}$, denoted by $\{(N_{D_0 + 1}^{(k)}, \ldots, N_{D_0 + D^{up}}^{(k)}), k = 1, \ldots, (1-\varepsilon)Q\}$ and set
\[
   n^{lo}_{D_0 + \ell} = \min_{k} N^{(k)}_{D_0 + \ell}, \qquad  n^{hi}_{D_0 + \ell} = \max_{k} N^{(k)}_{D_0 + \ell}.
\]
We found that the interval for $D_M$ is robust to the chosen value of $D^{up}$ if this is sufficiently large. We always fix $D^{up} = 3 \mathring D_M$ in our simulations, where $\mathring D_M$ is a point estimate for $D_M$ defined by $N_{D_0}+\hatnews{D_0}{\mathring D_M} \approx M$.

\section{Numerical Implementation} \label{sec:numerical_implementation}

The theoretical results derived in Section \ref{sec:bnp_method} yield closed-form expressions for Bayesian point estimators and posterior predictive summaries.
Crucially, the expressions found depend on key parameters $(\alpha, c, \beta)$ in the case of \bemodel and \tgmodel models and $(\alpha, c, \beta, r)$ for the \nbmodel model.
Posterior credible intervals can also be computed from the predictive distributions, although the empirical analysis below focuses on point forecasts and duration-planning accuracy.

\begin{figure}[t]
    \centering
    \includegraphics[width=\linewidth]{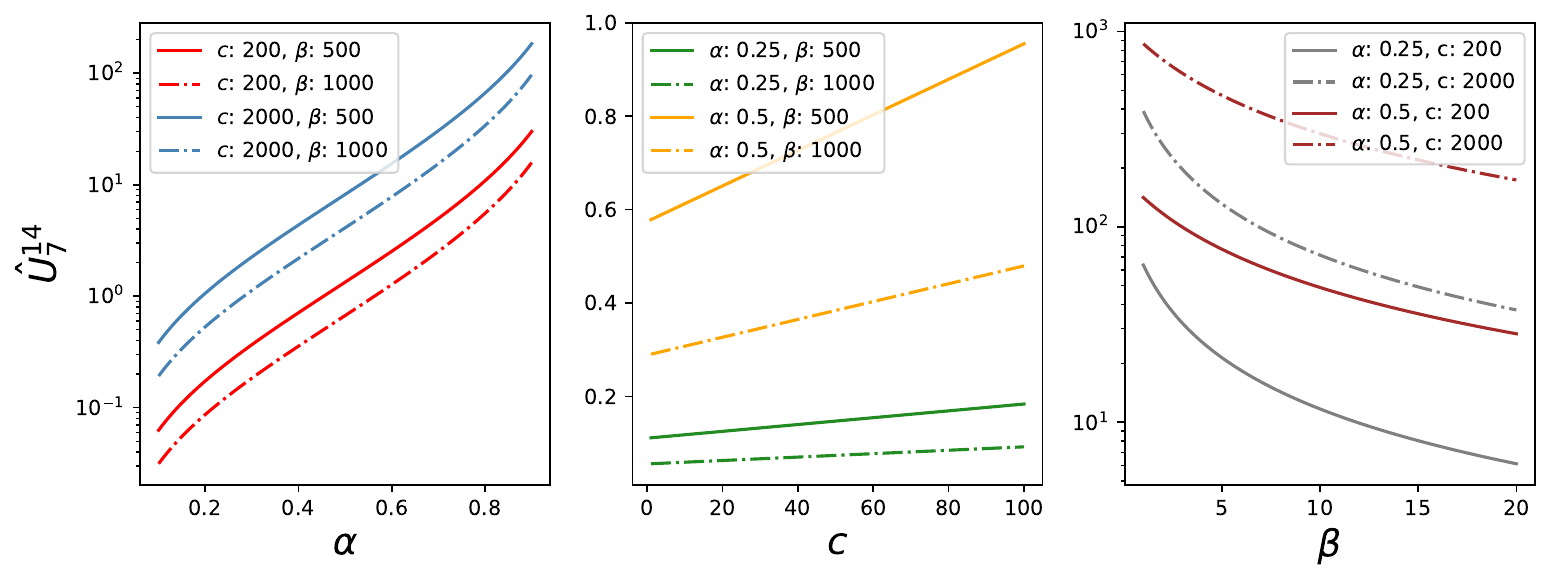}
    \caption{Values of $\hatnews{7}{14}$ for different choices of parameters $(\alpha, c, \beta)$ and $r=1$}
    \label{fig:u_vals}
\end{figure}

Inspecting, for instance, the expectation of $\news{D_0}{D_1}$ in \Cref{eq:u_estim} sheds light on the role played by the different parameters, cf. \Cref{fig:u_vals}. However, it is clear that there is a strong interplay between such parameters, making prior elicitation hard. For instance, $\news{D_0}{D_1}$ is an increasing function of $\alpha$ and $c$  a decreasing function of $\beta$.
Moreover, prior elicitation must be carried out experiment-by-experiment, which is time-consuming and infeasible in large-scale applications of A/B tests, where thousands of tests are performed daily.
Hence, we take an empirical Bayesian approach and estimate such parameters from the available data.
Compared to being ``fully Bayesian'', i.e., assuming prior distributions for such hyperparameters, our strategy is more scalable as it does not require performing Markov chain Monte Carlo simulations.

\subsection{Parameter estimation by maximum marginal likelihood}

A natural option to estimate unknown parameters involved in the likelihood and prior distributions is maximizing the observed data's marginal likelihood. See, e.g., \cite{carlin2000empirical}. To this end, the next result gives an explicit expression for the marginal distribution of the data.

\begin{theorem}\label{thm:marg_general}
Let $Z_1, \ldots, Z_{D_0}$ be a sample under either the \nbmodel, \bemodel, or \tgmodel model, and let $N=N_{D_0}$ be the number of unique users with labels $\omega^*_1, \ldots, \omega^*_{N_{D_0}}$ that were active. Then, the joint distribution of $Z_1, \ldots, Z_{D_0}$, when the distributions of the user labels are marginalized out, is
\begin{equation}\label{eq:marg}
    \prob(Z_1, \ldots, Z_{D_0}) = \frac{\alpha^N \beta^{c+1} \Gamma(N+c+1)}{\Gamma(c+1)\{\beta+\psi_{r_*}(0,D_0)\}^{N+c+1}}\prod_{n=1}^N \Theta_n,
\end{equation}
where $r_*=r$ for the \nbmodel model and $r_*=1$ for the \bemodel and \tgmodel models. The $\Theta_n$ terms are
\[
    \Theta_n = \begin{cases}
        \left[
        \prod_{d=1}^{D_0} \binom{A_{d, n} + r - 1}{A_{d, n}}
        \right]
        \mathrm{B}(t_n-\alpha,rD_0+1),\quad
        t_n=\sum_{d=1}^{D_0}A_{d,n}
        & \text{under the \nbmodel model} \\
        \mathrm{B}(b_n - \alpha, D_0 - b_n + 1),\quad
        b_n=\sum_{d=1}^{D_0}A_{d,n}
        & \text{under the \bemodel model} \\
        \mathrm{B}(1 - \alpha, F_n) & \text{under the \tgmodel model}
    \end{cases}
\]
\end{theorem}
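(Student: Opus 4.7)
The plan is to marginalize the data by first conditioning on the random total-mass parameter $T := \tilde{\Delta}_{1,h}^{-\alpha}$ and then integrating $T$ out. A direct change of variables in the density $h_{\alpha,c,\beta}$ of \cref{def:sbsp} shows that, a priori, $T \sim \mathrm{Gamma}(c+1,\beta)$. The crucial reduction is that, conditionally on $T$, the scaled-stable random measure $\mu$ is an ordinary CRM with L\'evy intensity $T \alpha \theta^{-1-\alpha}\mathbf{1}_{(0,1)}(\theta)\,d\theta\, P_0(d\omega)$. This representation follows from the ordered-jump (Ferguson--Klass) description of the underlying $\alpha$-stable CRM: given the top jump $\Delta_1 = \zeta$, the remaining jumps form a Poisson process on $(0,\zeta)$ with intensity $\alpha s^{-1-\alpha}\,ds$, and the change of variables $\theta = s/\zeta$ yields the claimed $T$-scaled CRM on $(0,1)$.

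Once the conditional-CRM representation is in place, Poisson partition calculus \citep{Jam(17)} yields a closed form for $\prob(Z_{1:D_0}\mid T)$. Thinning the Poisson process representing $\mu$ according to the event that an atom is active at least once in $D_0$ days (which has probability $1 - \mathcal{G}(0;\theta)^{D_0}$) produces the Laplace-functional factor $\exp\{-T\,\psi_r(0,D_0)\}$, where $\psi_r(0,D_0) = \int_0^1[1-\mathcal{G}(0;\theta)^{D_0}]\alpha\theta^{-1-\alpha}\,d\theta$ and the identity with $\alpha[\mathrm{B}(1,-\alpha) - \mathrm{B}(rD_0+1,-\alpha)]$ is obtained by integration by parts. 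The $N_{D_0}$ observed atoms contribute a multiplicative factor $T^{N_{D_0}}\alpha^{N_{D_0}}\prod_{n}\Theta_n$, where each $\Theta_n$ is the $\alpha$-normalized integral of $\prod_{d\in\mathcal{B}_n}\mathcal{G}(A_{d,n};\theta)\,\mathcal{G}(0;\theta)^{D_0-m_n}\theta^{-1-\alpha}$ over $(0,1)$; the $P_0$-factors integrate to one after the labels are marginalized.

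The three model-specific $\Theta_n$'s then follow from direct beta-integral evaluations. For \bemodel the integrand reduces to $\theta^{m_n-1-\alpha}(1-\theta)^{D_0-m_n}$, giving $\mathrm{B}(m_n-\alpha,D_0-m_n+1)$; for \nbmodel one factors out the combinatorial constants $\prod_{d:A_{d,n}>0}\binom{A_{d,n}+r-1}{A_{d,n}}$ before integrating, obtaining the claimed beta integral with parameters depending on $rD_0$ and $m_n$; for \tgmodel, since the sample size is effectively one, the single truncated-geometric term yields $\mathrm{B}(1-\alpha, F_n)$. Finally, marginalizing $T$ against its $\mathrm{Gamma}(c+1,\beta)$ prior collapses $\int_0^\infty T^{N_{D_0}+c}\exp\{-T(\beta+\psi_r(0,D_0))\}\,dT$ into $\Gamma(N_{D_0}+c+1)(\beta+\psi_r(0,D_0))^{-(N_{D_0}+c+1)}$, which combined with the prior normalization $\beta^{c+1}/\Gamma(c+1)$ yields the announced formula.

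The main obstacle is the first step: rigorously establishing the conditional-CRM representation and correctly tracking the rescaling of the L\'evy intensity by $T$. This combines the Palm description of the $\alpha$-stable CRM with the change of variables dictated by the scaled-process construction, and is the technical heart of the argument. The evaluation of the model-specific $\Theta_n$ integrals and the final Gamma integration are routine once this representation is in hand, requiring only careful bookkeeping of the accumulating $\alpha^{N_{D_0}}$ factor so that all three score distributions can be handled by a single unified statement.
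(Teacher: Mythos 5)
Your proposal is correct and follows essentially the same route as the paper: condition on the transformed largest jump $T=\tilde\Delta_{1,h}^{-\alpha}\sim\mathrm{Gamma}(c+1,\beta)$, use the conditional-CRM representation of the scaled process to invoke the Poisson-partition-calculus marginal law of \cite{Jam(17)} (giving the factor $(\alpha T)^{N_{D_0}}e^{-T\psi_r(0,D_0)}\prod_n\Theta_n$), evaluate the per-atom beta integrals for each score distribution, and integrate $T$ out against its Gamma law. You simply make explicit the two ingredients the paper delegates to its Lemma on the conditional CRM representation and to Proposition 3.1 of \cite{Jam(17)}.
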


Defining the marginal likelihood for parameters $\alpha, c, \beta$, and possibly $r$, as $\mathcal L(\alpha, c, \beta, r) = \prob(Z_1, \ldots, Z_{D_0}; \alpha, c, \beta, r)$, the empirical Bayesian strategy is to set such parameters as
\begin{equation}
    \hat \alpha, \hat c, \hat \beta, \hat r = \argmax_{\alpha, c, \beta, r} \log \mathcal L(\alpha, c, \beta, r).
    \label{eq:ml}
\end{equation}
The optimization problem is nonconvex in its variables. Therefore, we adopt the \texttt{scipy} implementation of the differential evolution algorithm \citep{storn1997differential}, which is a derivative-free global optimization algorithm.

\subsection{Parameter estimation by curve fitting}

Another option is to frame parameter elicitation as a curve fitting problem for the temporal trajectory of a statistic of interest.
For example, we might want to predict the number of future users first triggering in the next $D_1$ days, using the point estimator of \Cref{eq:u_estim}. We fit the hyperparameters by solving the following regression problem:
\begin{equation}
    \hat \alpha, \hat c, \hat \beta, \hat r = \argmin_{\alpha, c, \beta, r} \sum_{d = 1}^{D_0 - d_0} \left\{ \hatnews{d_0}{d} - u_{d_0}^{(d)} \right\}^2,
    \label{eq:regression}
\end{equation}
where $u_{d_0}^{(d)}$ denotes the true (observed) number of new distinct users observed between day $d_0$ and day $d_0+d$, given $Z_{1:D_0}$. 
Here, $1\le d_0 < D_0$, and $d_0 + d \le D_0$.
In our experiments, we find that the choice of $d_0 = 1$ works well, although in different applications, different choices might work better in practice (e.g., \citet[Section 4]{masoero2022more} recommends $d_0=\lfloor 2/3 \times D_0 \rfloor$).

The curve fitting approach offers two advantages compared to the maximum marginal likelihood one. First, it is much faster to evaluate numerically $\hatnews{d_0}{d}$ compared to the log-likelihood, especially when the number of active users in the first $D_0$ days is large. 
Second, $\hatnews{d_0}{d}$ does not depend on the whole sample, but only the simple sufficient statistics, $(D_0, N_{D_0})$.
Hence, curve fitting is feasible even when only first trigger counts, and not re-trigger counts, are available. 
This scenario might occur on proprietary or large datasets, where, e.g., for privacy reasons or reasons of scale, only aggregate summary statistics can be stored (e.g., the ASOS dataset later analyzed \citep{liu2021datasets}).
Throughout the empirical sections, we measure unseen-user prediction accuracy using the metric of \citet{camerlenghi2022scaled},
\begin{equation}\label{eq:acc_metric}
    v_{D_0}^{(D_1)} := 1 - \min\left\{ \frac{|\news{D_0}{D_1}- \hatnews{D_0}{D_1}|}{\news{D_0}{D_1}}, 1\right\},
\end{equation}
so that $v_{D_0}^{(D_1)}$ equals one under perfect prediction and degrades to zero as the relative prediction error grows.
Supplementary \Cref{app:param_empirics} reports numerical evidence comparing maximum marginal likelihood and curve fitting for parameter estimation.

\section{Simulation studies} \label{sec:simulations}

We now evaluate the performance of our Bayesian nonparametric estimators via simulations on synthetic data, shedding light on several aspects of our models.
In Appendix~\ref{app:simu_true_model}, we verify that our predictors recover the true hyperparameters and produce accurate forecasts when data are drawn from the assumed model.

\subsection{Prediction accuracy when the data is drawn from a Zipfian model}\label{sec:zipf_simu}

We consider synthetic data from a Zipf-Poisson distribution. 
We evaluate the performance of our predictors against that of a collection of competing methods: the Jackknife estimator of order $k$ (J$k$) \citep{gravel2014predicting}, the linear programming approach of \cite{zou2016quantifying}, the Good--Toulmin estimator \citep{good1953population}, the Bayesian models of \citet{ionita2009estimating} (BB) and \citet{richardson22a} (HBG), and the three-parameter Indian buffet process \citep[IBP,][]{Teh(09)}.
In particular, for a given cardinality $N_{\infty} = 1{,}000{,}000$ of possible users we draw trigger data as follows. We endow each user $\useridx$ with a triggering rate $\theta_\useridx = \useridx^{-\tau}$, with $\tau = 0.6, 0.7, 0.8, 0.9$. Then, for every experimental day $\dayidx$, we determine whether each user $\useridx$ re-triggers by flipping a Bernoulli coin, $X_{\dayidx, \useridx} \sim \mathrm{Bernoulli}(\theta_\useridx)$. Conditionally on $X_{\dayidx, \useridx} = 1$, we sample $Z_{\dayidx, \useridx} \sim \mathrm{tPoisson}(1+m_{\dayidx-1, \useridx}/\dayidx; 0)$, where $m_{\dayidx-1, \useridx}$ is the total re-trigger count of unit $\useridx$ up to day $\dayidx-1$ and $\mathrm{tPoisson}(a;b)$ is the law of a Poisson random variable with parameter $a$ and supported on $\{b+1, b+2, \ldots \}$. For each value $\tau$, we generate $M=100$ datasets. We retain only $\pilotdays=5$ days for training, and compute the predictive accuracy $v_{\pilotdays}^{(\followupdays)}$ for $\followupdays = 50$ days ahead. Our results in \Cref{fig:synthetic_accuracy_zipf} show that across values of $\tau$ the \nbmodel, \tgmodel, \bemodel, and IBP provide the best performance. 
Each boxplot reports the median of the accuracy, with boxes spanning the interquartile range.

\begin{figure}[t]
    \centering
    \includegraphics[width=\linewidth]{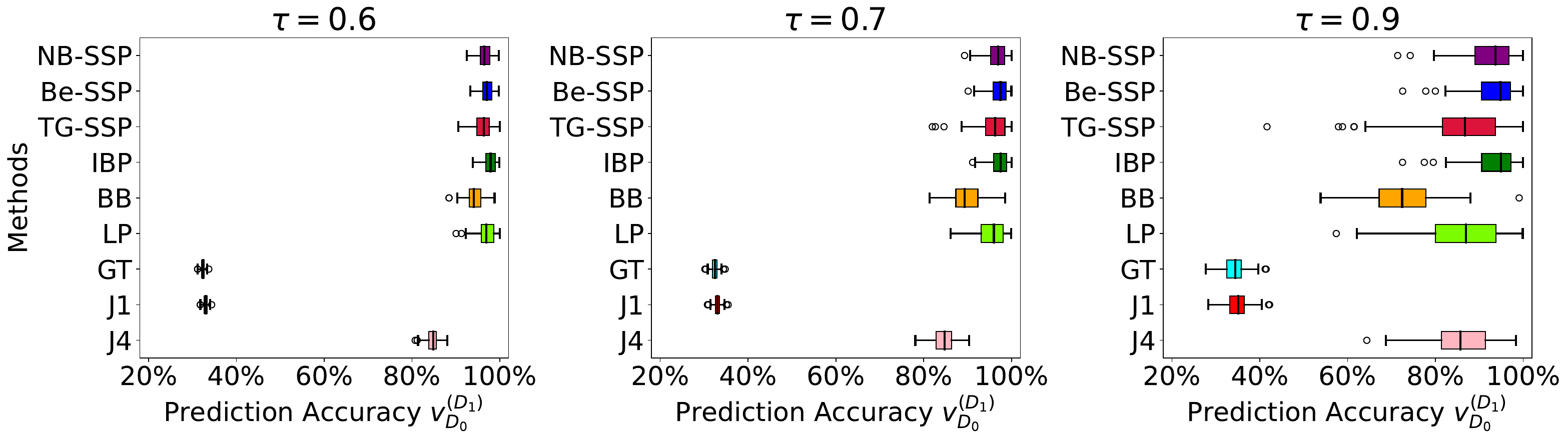}
    \caption{Prediction accuracy $v_{\pilotdays}^{(\followupdays)}$ of predictors $\hatnews{\pilotdays}{\followupdays}$ on synthetic data from the Zipfian model for different choices of the parameter $\tau$.}
    \label{fig:synthetic_accuracy_zipf}
\end{figure}

Moreover, in addition to providing estimates for $U_{D_0}^{(D_1)}$, the $\nbmodel$ can be leveraged to form predictions for the total number of triggers ${T}_{\pilotdays}^{(\followupdays)}$. We show in \Cref{fig:synthetic_accuracy_sums} in Appendix \ref{app:plots} that the total sum predictor $\hat{T}_{\pilotdays}^{(\followupdays)}$ achieves great accuracy in predicting the total sum via a survival plot. 
While beyond the scope of the present paper, total re-trigger counts can be leveraged in forming better estimates of long-term treatment effects and to better predict the duration needed for an experiment to deliver significant results \citep{richardson22a,wan2023experimentation}.
To this end, we define a notion of accuracy for the prediction of the total re-trigger counts $\hat{T}_{\pilotdays}^{(\followupdays)}$:
\[
    \tilde{v}_{\pilotdays}^{(\followupdays)} ~:=~ 1 - \min\left\{ \frac{|t_{\pilotdays}^{(\followupdays)}-\hat{T}_{\pilotdays}^{(\followupdays)}|}{t_{\pilotdays}^{(\followupdays)}}, 1\right\} \in [0,1].
\]
$t_{\pilotdays}^{(\followupdays)}$ is the observed total re-trigger count between $\pilotdays$ and $\pilotdays+\followupdays$

\subsection{Estimating the days to a given participation threshold}\label{sec:interval_comparison}

We consider here the estimation of $D_M$ and compare the inversion approach described in \Cref{sec:est_days} with the direct posterior sampler detailed in Supplementary \Cref{app:est_days_direct}.
We consider only the \tgmodel model.
Analogous results hold for the other models presented in the paper.

We simulate data $X_{d, n}$ as in \Cref{sec:zipf_simu}, for $\tau = 0.8, 1.0, 1.2, 1.4$.
We generate data for $D_0=14$ days, thus observing $N_{D_0}$ users, and we want to estimate the number of days needed to reach a total target of $M=\eta N_{D_0}$ users for $\eta = 1.5, 2.0, 5.0$.

We compare the intervals for $D_M$ based on the inversion technique depicted in \Cref{fig:inversion_ci} with the posterior mean and 95\% credible intervals obtained from the posterior of $D_M$,
which is approximated using $1{,}000$ draws from \Cref{algo:simulate_D_M}.
The intervals obtained via the inversion technique are wider than those based directly on the posterior of $D_M$; see \Cref{fig:interval_lenghts}.
Their length increases with the tail parameter $\tau$ of the data-generating process.
This is intuitive: smaller values of $\tau$ lead to faster early user arrivals, while larger values can require hundreds or thousands of days to reach the target participation level.

\begin{figure}[t]
    \centering
    \includegraphics[width=\linewidth]{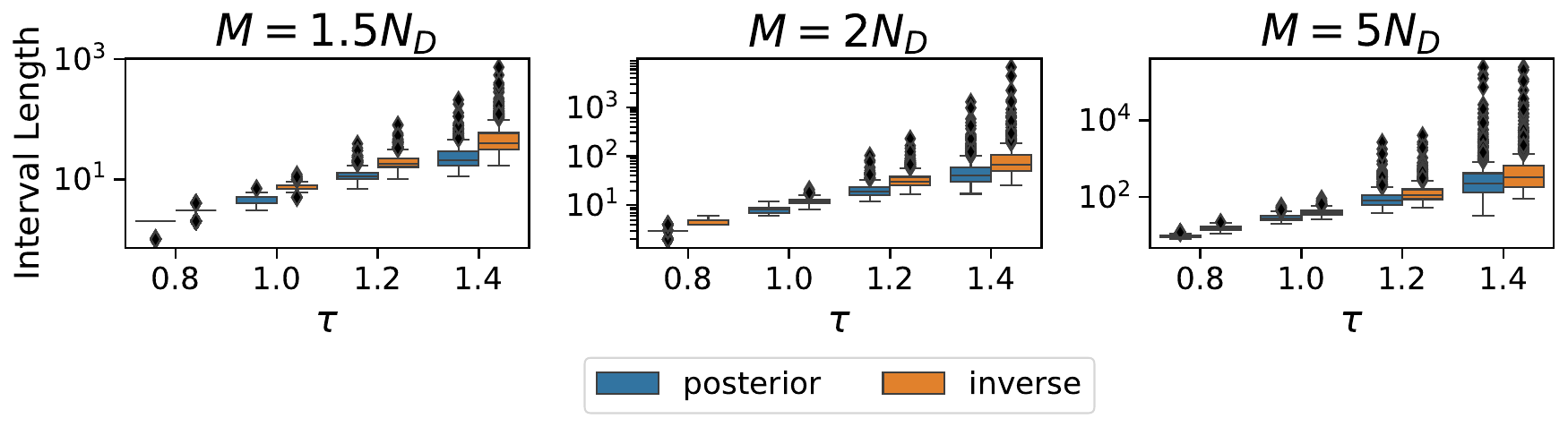}
    \caption{Length of posterior and inversion-based intervals for $D_M$ for 500 simulated datasets. Different panels correspond to different target values $M$; within each panel, the tail parameter $\tau$ varies along the horizontal axis.}
    \label{fig:interval_lenghts}
\end{figure}

Regarding the computational cost of producing the intervals, our empirical observation is that the computational cost of \Cref{algo:simulate_D_M} is quadratic in $D^{up}$.
Instead, the inversion approach requires computing the global credible band, that needs to sample repeatedly from the law of $N^*_\ell$, $\ell=1, \ldots, D^{up}$ in Proposition \ref{prop:pred_unseen}, compute the density of the samples and evaluate the trajectories for $N_{D_0 + \ell}$, all of which scales linearly in $D^{up}$. 
Most importantly, the computational cost of estimating the inverse interval for $D_M$ is not affected by the value of $\alpha$. 
On the other hand, for large values of $\alpha$, $\xi$ in \Cref{algo:simulate_D_M} increases steeply. 
To test these insights, we performed a small simulation where we generated data from (DG2) for $D_0 = 7$ days, with fixed parameters $\beta = 0.5$, $c=1000$ and $\alpha = 0.25, 0.5, 0.75$. We then tried to compute the inversion and posterior intervals of $D_M$ with $M = 2N_{D_0}$. Computing the inversion intervals took around $0.1$s across all simulations, while for the posterior intervals the cost ranged from $0.2$s (when $\alpha=0.25$) to almost $10$s (when $\alpha=0.75$).

\section{Real data analyses} \label{sec:exp_real}


We evaluate all methods on the four datasets introduced in \Cref{sec:setting}. $\pilotdays = 7$ days of pilot data are used for fitting; predictions target the follow-up period.

Table~\ref{tab:cross_accuracy} reports median prediction accuracy $v$, defined in \Cref{eq:acc_metric}, across datasets.
\nbmodel performs best on UCI and the REES46 28-day benchmark, and is competitive on the REES46 $k=21$ benchmark, where full trigger-count data is available.
The $k=21$ and $k=100$ benchmarks use rolling windows with follow-up lengths of 21 and 100 days respectively, constructed with stride~1 (one experiment per start day; see Appendix~\ref{app:benchmark_construction} for the full specification).
On ASOS, which provides only first-trigger counts, \tgmodel---fitted via maximum marginal likelihood of the geometric model---outperforms the other methods.
Overall, the best-performing method is \nbmodel, which achieves accuracy exceeding 0.80 when the extrapolation ratio $\followupdays / \pilotdays \leq 3$ (UCI, REES46), while accuracy degrades for all methods at longer horizons (ASOS: median ratio 7.9).%

\begin{table}[t]
\centering
\caption{Median prediction accuracy $v$, defined in \Cref{eq:acc_metric}, across datasets. Bold = best per row; ties are bolded.}
\label{tab:cross_accuracy}
\small
\begin{tabular}{@{}lrccccc@{}}
\toprule
& $n$ & $\followupdays/\pilotdays$
& \tgmodel & IBP & \nbmodel (reg) \\
\midrule
UCI & 13 & 3.0 & 0.71 & 0.80 & \textbf{0.84} \\
REES46 (28-day) & 7 & 3.0 & 0.58 & 0.81 & \textbf{0.90} \\
REES46 ($k{=}21$) & 186 & 3.0 & \textbf{0.81} & \textbf{0.81} & 0.80 \\
ASOS & 144 & 7.9 & \textbf{0.71} & 0.16 & 0.50 \\
\bottomrule
\end{tabular}
\end{table}

On ASOS, all methods perform worse due to long extrapolation horizons ($\followupdays$ up to 239 days) and the restriction to first-trigger data.
\tgmodel achieves the highest median accuracy (0.71) on ASOS, consistent with its design for first-trigger-time data; \nbmodel (0.50) is fitted via curve regression rather than the marginal likelihood, which partly explains its lower accuracy on this dataset.

\ifautoreview
\begin{table}[H]
\centering
\caption*{\sffamily\small\color{gray} [AUTO-REVIEW] Cross-dataset accuracy: median (mean) of $v_{\pilotdays}^{(\followupdays)}$}
\small\sffamily\color{gray}
\begin{tabular}{@{}lrcccc@{}}
\toprule
Dataset & $n$ & TG-SSP & IBP & NB-SSP (reg) \\
\midrule
UCI & 13 & 0.71 (0.58) & 0.80 (0.67) & \textbf{0.84} (0.71) \\
REES46 28-day & 7 & 0.58 (0.51) & 0.81 (0.84) & \textbf{0.90} (0.88) \\
REES46 $k{=}21$ & 186 & \textbf{0.81} (0.77) & \textbf{0.81} (0.77) & 0.80 (0.78) \\
ASOS & 144 & \textbf{0.71} (0.61) & 0.16 (0.31) & 0.50 (0.45) \\
\bottomrule
\end{tabular}
\end{table}
\fi

Figure~\ref{fig:case_uci} shows prediction trajectories on three UCI experiments spanning moderate ($N_{\pilotdays} = 249$), medium (348), and large (474) pilot sizes.
\nbmodel tracks the truth most closely: on Exp~2, it predicts 508 new users by day 28 (truth: 502, $v = 0.99$); on Exp~6, 669 new users (truth: 592, $v = 0.87$).
This is directly actionable: it tells the experimenter whether the experiment is on track to reach its participation target.

\begin{figure}[t]
  \centering
  \includegraphics[width=\linewidth]{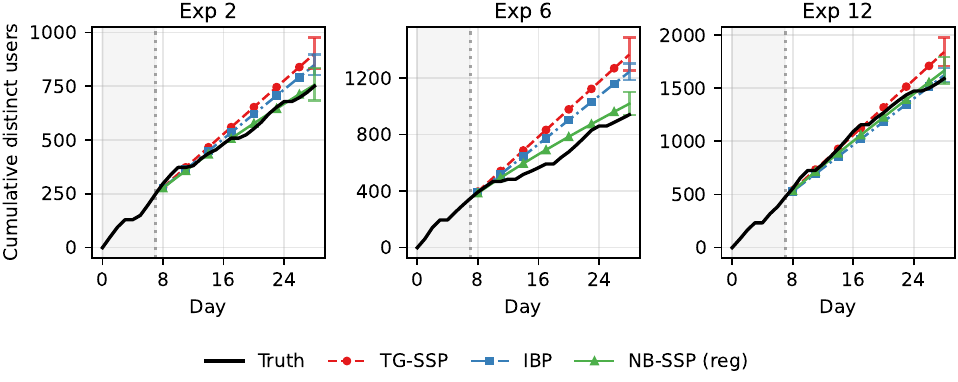}
  \caption{UCI: predicted cumulative new users vs.\ truth for three experiments.
  Shaded region = pilot. \nbmodel tracks truth most closely.}
  \label{fig:case_uci}
\end{figure}

\ifautoreview
\begin{table}[H]
\centering
\caption*{\sffamily\small\color{gray} [AUTO-REVIEW] UCI case study details}
\small\sffamily\color{gray}
\begin{tabular}{@{}lrrrrrrr@{}}
\toprule
Exp & $N_{\pilotdays}$ & $U_{\text{true}}$ & TG-SSP $\hat U$ ($v$) & IBP $\hat U$ ($v$) & NB-SSP $\hat U$ ($v$) \\
\midrule
2 & 249 & 502 & 652 (0.70) & 601 (0.80) & 508 (0.99) \\
6 & 348 & 592 & 1018 (0.28) & 895 (0.49) & 669 (0.87) \\
12 & 474 & 1115 & 1365 (0.78) & 1147 (0.97) & 1189 (0.93) \\
\bottomrule
\end{tabular}
\end{table}
\fi

\subsection{Duration-to-target decisions}

We now translate the forecasting task into a concrete experiment-planning decision.
Suppose that, after observing a pilot of length $\pilotdays$, the experimenter wants to choose how many additional days are needed to reach a target number of distinct users.
For a target multiplier $\eta>1$, define the target participation level as
\[
    M_\eta = \lceil \eta N_{\pilotdays} \rceil,
\]
Let $D_\eta \equiv D_{M_\eta}$ denote the first day after the pilot at which the cumulative number of distinct users reaches $M_\eta$.
Operationally, $D_\eta$ is the duration recommendation produced from the pilot: if the estimated value is small, the experiment can be planned as a short run; if it is large, the experimenter knows early that the test should be extended or redesigned.

For each fitted method, we estimate $D_\eta$ as discussed in Section~\ref{sec:est_days}. 
As a simple benchmark, we also consider a linear extrapolation rule that assumes the average distinct-user arrival rate observed during the pilot remains constant.
Writing $\widehat r_{\mathrm{lin}} = N_{\pilotdays}/\pilotdays$, this benchmark sets
\[
    \widehat D_{\eta}^{\mathrm{lin}}
    =
    \pilotdays
    +
    \left\lceil
    \frac{M_\eta - N_{\pilotdays}}{\widehat r_{\mathrm{lin}}}
    \right\rceil.
\]
We compare this recommendation with the realized hitting time computed from the full experiment window.
This gives a decision-level error, measured in days, rather than only a prediction error for a fixed horizon.

\begin{table}[t]
\centering
\caption{Hitting-time MAE (days): mean absolute error $|\hat D_\eta - D_\eta|$ across experiments.
Bold = best per column; ties are bolded.
\nbmodel achieves the lowest MAE on ASOS at every $\eta$ level, reducing ASOS planning error by about 22--41\% relative to the next-best method.}
\label{tab:ht_main}
\small
\begin{tabular}{@{}l ccc ccc@{}}
\toprule
& \multicolumn{3}{c}{\textbf{ASOS} ($n = 144$)}
& \multicolumn{3}{c}{\textbf{REES46} $k{=}100$ ($n = 107$)} \\
\cmidrule(lr){2-4} \cmidrule(lr){5-7}
Method & $\eta{=}1.5$ & $\eta{=}2$ & $\eta{=}3$
       & $\eta{=}1.5$ & $\eta{=}2$ & $\eta{=}3$ \\
\midrule
\nbmodel (reg)
  & \textbf{4.6} & \textbf{10.0} & \textbf{8.5}
  & \textbf{0.4} & \textbf{1.0} & \textbf{2.4} \\
\tgmodel
  & 5.9 & 12.9 & 14.3
  & \textbf{0.4} & \textbf{1.0} & 2.5 \\
IBP
  & 9.0 & 16.4 & 18.7
  & 0.8 & 1.3 & 2.8 \\
Linear extrap.
  & 6.1 & 14.5 & 16.4
  & \textbf{0.4} & 1.2 & 2.7 \\
Jackknife (J3)
  & 6.1 & 13.6 & 14.7
  & \textbf{0.4} & \textbf{1.0} & 2.5 \\
\bottomrule
\end{tabular}
\end{table}

Table~\ref{tab:ht_main} reports the mean absolute error (MAE) in predicted hitting time across all experiments in the ASOS and REES46 datasets,  for $\eta \in \{1.5, 2.0, 3.0\}$.
On ASOS, \nbmodel predicts the time to reach $2\times$ pilot participation within 10.0~days on average, compared to 12.9--16.4~days for competitors.
At $\eta = 3$, the advantage is even larger: 8.5~days vs.\ 14.3--18.7~days, a 41\% reduction in planning error relative to TG-SSP.
On REES46, all methods perform well (MAE $\leq 2.8$~days at $\eta = 3$), reflecting the large pilot samples ($\sim$1M users); \nbmodel has the lowest MAE or is tied for the lowest MAE at every $\eta$ level.
Figure~\ref{fig:ht_scatter_main} illustrates the decision impact on a representative ASOS experiment (arm ee6ff7\_C, $\followupdays = 46$, $N_{\pilotdays} = 448{,}882$): linear extrapolation recommends planning for day~14, while \nbmodel recommends day~28 and the true hitting time is day~31.
An experimenter following the linear recommendation would have stopped 17~days too early, falling short of its target sample size.
The accompanying boxplots show that this pattern is systematic: across all non-censored ASOS arms, \nbmodel has the tightest planning-error distribution centered near zero.

\begin{figure}[t]
  \centering
  \includegraphics[width=\linewidth]{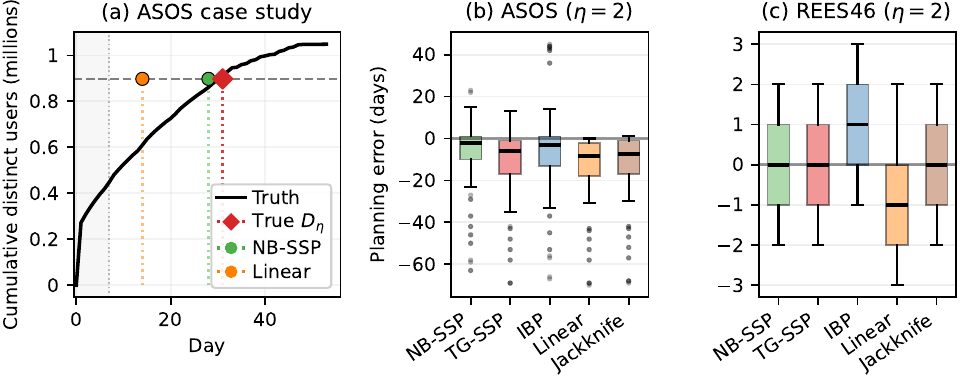}
  \caption{Duration-to-target decision analysis on ASOS.
  \emph{Left:} case study (arm ee6ff7\_C, $\followupdays = 46$): cumulative user curve with $\eta = 2$ target.
  \emph{Center, right:} planning error $\hat D_\eta - D_\eta$ across all non-censored experiments ($\eta = 2$).}
  \label{fig:ht_scatter_main}
\end{figure}

\paragraph*{Remark (Connection to statistical power)}
Consider a two-arm experiment with equal allocation ($M_\eta / 2$ users per arm), minimum detectable effect size $\delta$, within-arm standard deviation $\sigma$, and significance level $\alpha$.
Under a standard two-sample $z$-test, the power is approximately $\Phi\!\left(\sqrt{M_\eta/2} \cdot \delta/\sigma - z_{\alpha/2}\right)$; see Appendix~\ref{app:power_derivation} for a derivation and \citet{Gualavisi2025} for a detailed treatment in the A/B testing context.
Thus the hitting-time MAE directly quantifies the accuracy of power-aware duration planning: an experimenter who specifies $\delta$, $\sigma$, $\alpha$, and a desired power level can compute the required $M$, then use the hitting-time prediction to plan the experiment duration.

\subsection{Proprietary data}

We additionally evaluate on a proprietary dataset of $1{,}774$ experiments from a large technology company.
The proprietary dataset contains daily trigger counts; parameter estimation via maximum marginal likelihood is feasible for \bemodel and \tgmodel and via curve fitting for \nbmodel.
We also fit BB and IBP via maximum marginal likelihood.
The right panel of \Cref{fig:accuracy_1} reports the ASOS first-trigger benchmark from the public analysis above for comparison; for ASOS, we fit the Bayesian models via maximum marginal likelihood or curve fitting.
For the proprietary dataset, $\pilotdays = 7$ and we predict at $\followupdays = 21$; for ASOS, $\pilotdays = 7$ and we predict at the last available time point.

\begin{figure}
    \centering
    \begin{subfigure}[h]{0.66\linewidth}
        \includegraphics[width=\linewidth]{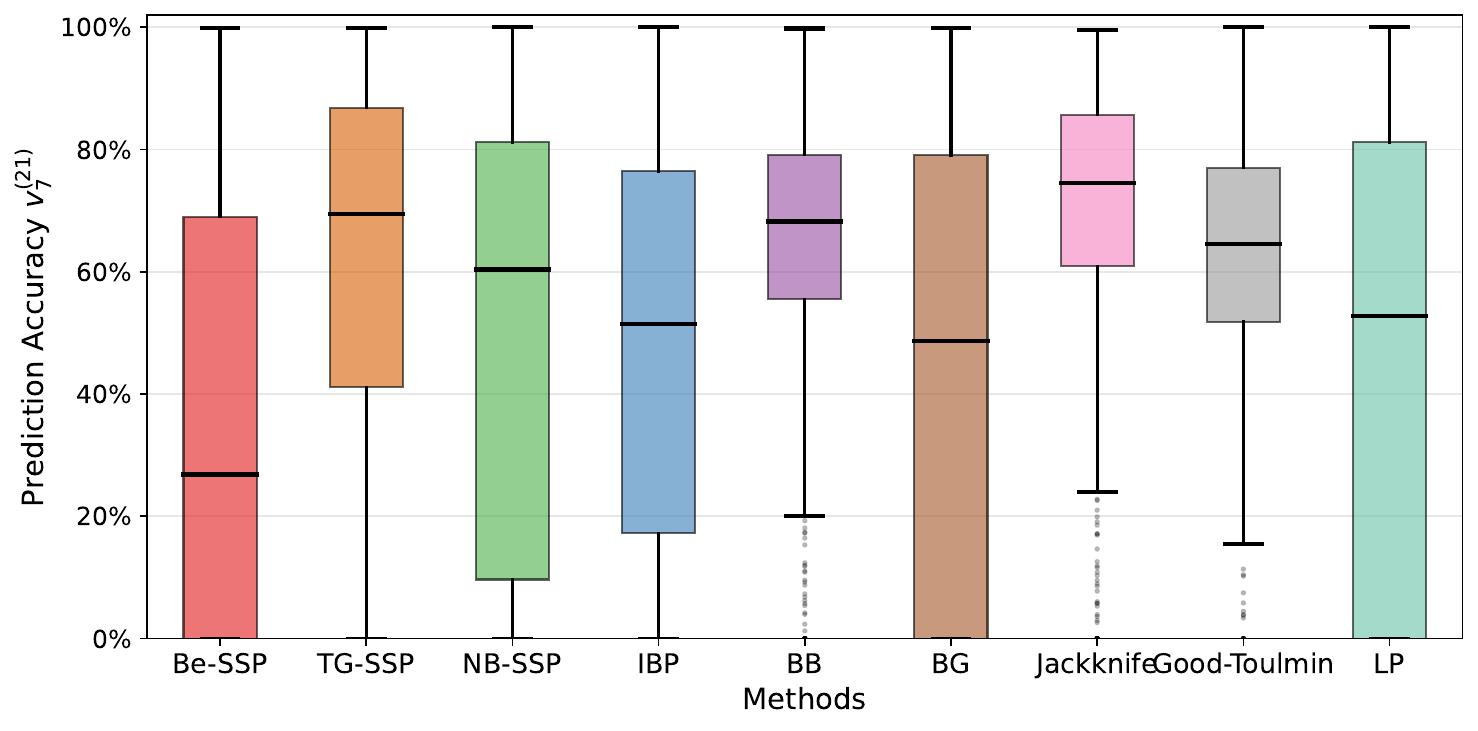}
    \end{subfigure}%
    \begin{subfigure}[h]{0.33\linewidth}
        \includegraphics[width=\linewidth]{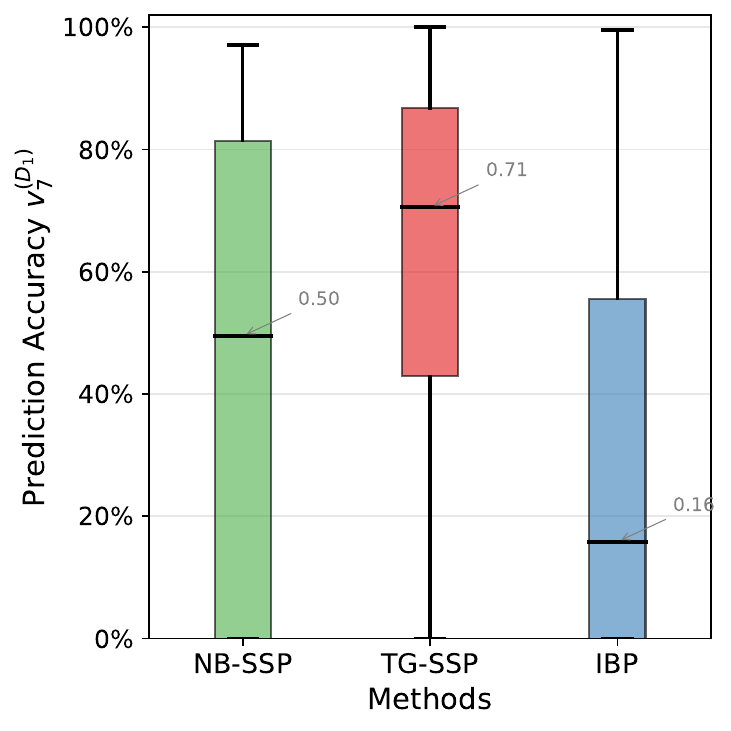}
    \end{subfigure}
    \caption{Prediction accuracy on proprietary (left) and ASOS (right) data.}
    \label{fig:accuracy_1}
\end{figure}

\ifautoreview
\begin{table}[H]
\centering
\caption*{\sffamily\small\color{gray} [AUTO-REVIEW] Proprietary data accuracy: median (mean) of $v_{7}^{(21)}$, $n=759$ arms}
\small\sffamily\color{gray}
\begin{tabular}{@{}lcc@{}}
\toprule
Method & Median & Mean \\
\midrule
Be-SSP & 0.268 & 0.346 \\
TG-SSP & 0.695 & 0.605 \\
NB-SSP & 0.604 & 0.512 \\
IBP & 0.515 & 0.481 \\
BB & 0.682 & 0.650 \\
BG & 0.487 & 0.441 \\
Jackknife (J3) & 0.745 & 0.698 \\
Good--Toulmin & 0.645 & 0.635 \\
LP & 0.527 & 0.455 \\
\bottomrule
\end{tabular}
\end{table}

\begin{table}[H]
\centering
\caption*{\sffamily\small\color{gray} [AUTO-REVIEW] ASOS accuracy: median (mean) of $v_{7}^{(D_1)}$, $n=144$ arms}
\small\sffamily\color{gray}
\begin{tabular}{@{}lcc@{}}
\toprule
Method & Median & Mean \\
\midrule
NB-SSP (reg) & 0.495 & 0.445 \\
TG-SSP (geom MLE) & 0.705 & 0.614 \\
IBP (reg) & 0.158 & 0.305 \\
\bottomrule
\end{tabular}
\end{table}
\fi

Last, we assess in \Cref{fig:true_accuracy_sums} in Appendix \Cref{app:plots} the performance on a subset of 50 experiments at predicting total future triggers, retaining $\pilotdays=7$ days and extrapolating at $\followupdays=14, 21, 28, 35$.

\section{Discussion} \label{sec:discussion}
 
This paper presents a novel Bayesian nonparametric framework for predicting user engagement in online A/B tests using stable beta-scaled process priors. While the theoretical foundation draws from rather complex random measure theory, the resulting methodology yields tractable, closed-form estimators that can be readily implemented by practitioners. This accessibility is particularly valuable in industrial settings, where rapid decision-making about experiment duration and resource allocation is critical.

A key strength of our approach lies in its computational efficiency. Indeed, the posterior distributions and predictive quantities have explicit expressions that avoid the need for costly MCMC sampling. The hyperparameter optimization, performed through maximum marginal likelihood or curve fitting, is straightforward and robust. This combination of theoretical rigor and computational tractability makes our method particularly appealing for large-scale applications where thousands of experiments may need to be analyzed simultaneously.

Our empirical results demonstrate that the proposed models -- \bemodel, \tgmodel, and \nbmodel -- can effectively capture different aspects of user behavior. 
 The competitive empirical performance of our approach compared to existing methods, particularly in scenarios with limited pilot data, suggests that the stable beta-scaled process prior effectively captures the power-law behavior often observed in user engagement data.
The choice of the model to use can be informed by two criteria. First, is the experimenter interested in statistics concerning the number of clicks, or do they focus solely on coarse daily-activity data? In the former case, the \nbmodel is warranted.
In the latter case, the experimenter should try to understand if a constant user-engagement propensity over time is a reasonable assumption. For instance, if a trigger in the A/B test is ``logging into my email account'', we could expect that the assumption is reasonable, while if the trigger is ``making a specific purchase'' then such an assumption is extremely unlikely to hold in practice. If the constant user-engagement propensity assumption is met, then the \bemodel model leverages the whole dataset and can be expected to yield better forecasts. On the other hand, if the assumption is not met, the estimated parameters under the \bemodel model will be biased and predictive accuracy will suffer.
Alternatively, model fit can be assessed quantitatively by comparing the marginal log-likelihood across models on held-out pilot data, or by posterior predictive checks.

A limitation of the empirical-Bayes implementation is that posterior credible intervals condition on plug-in hyperparameter estimates. Consequently, their frequentist calibration is not guaranteed under model misspecification or when hyperparameter uncertainty is non-negligible. While our empirical emphasis in this paper is on point forecasts and duration-to-target accuracy, for applications where interval calibration itself is a primary decision criterion, fully Bayesian inference over hyperparameters or post-hoc calibration using held-out experiments would be natural extensions.


This work suggests several directions for future research. First, extending the models to incorporate user covariates could provide more nuanced predictions based on user characteristics or experimental conditions. 
Second, our models treat each arm of the A/B test independently. However, the first triggering time of users is independent of the treatment, which affects only the re-trigger counts. This kind of dependence is not captured by standard nonparametric priors, such as the ones based on hierarchical or nested processes, and will be the object of future investigation.


\FloatBarrier

\bibliographystyle{chicago}
\bibliography{references}

\FloatBarrier

\setcounter{section}{0}
\renewcommand{\thesection}{S\arabic{section}}
\renewcommand{\theHsection}{S\arabic{section}}

\setcounter{equation}{0}
\renewcommand\theequation{S\arabic{equation}}
\renewcommand\theHequation{S\arabic{equation}}

\renewcommand{\thetheorem}{\thesection.\arabic{theorem}}
\renewcommand{\theproposition}{\thesection.\arabic{proposition}}

\begin{center}
   \LARGE Supplementary material for:\\
    ``Online activity prediction via generalized Indian buffet process models''
\end{center}

\section*{Organization of the supplementary material}

The supplementary material is organized as follows.
Section \ref{app:background} provides the necessary background on Bayesian nonparametrics and states the main results used in our proofs.
Section \ref{app:proofs} collects the proofs of our theoretical results.
Section \ref{app:method_details} reports additional methodological details, including direct posterior sampling for the participation-threshold problem and numerical evidence for parameter estimation.
Section \ref{app:simu_true_model} reports the ``data from true model'' simulations moved from the main text.
The remaining sections collect generative schemes, details on competing methods, additional plots, benchmark construction details, data preprocessing details, and the power-formula derivation.

\section{Background Material on Bayesian nonparametrics and Random Measures}\label{app:background}

A completely random measure (CRM) on $\Omega$ (with associated $\sigma$-algebra) is a random element $\mu$ taking values in the space of (finite) measures over $\Omega$ (with associated Borel $\sigma$-algebra) such that for any $n$ and any collection of disjoint subsets $A_1, \ldots, A_n$ of $\Omega$, the random variables $\mu(A_1), \ldots, \mu(A_n)$ are independent.
It is well-known \citep{Kin67} that a CRM can be decomposed as a sum of a deterministic measure, an atomic measure with fixed atoms and random jumps, and an atomic measure where both atoms and jumps are random and form the points of a Poisson point process.
As customary in BNP, here we discard the deterministic parts of CRMs and consider measures of the form $\mu(\cdot) = \int_{\R_+} s  N(\dd s,\, \cdot) = \sum_{k \geq 1} \tau_k \delta_{\omega_k}(\cdot)$, where $ N = \sum_{k \geq 1} \delta_{(\tau_k, \omega_k)}$ is a Poisson random measure on $\R_+ \times \Omega$ with L\'evy intensity measure $\nu(\dd s, \dd x)$.
The L\'evy intensity characterizes the distribution of $\mu$ via the L\'evy-Khintchine representation of its Laplace functional. That is, for measurable $f: \Omega \rightarrow \R_+$
\[
    \E\left[e^{-\int_{\Omega} f(x) \mu (\dd x)}\right] = \exp \left\{- \int_{\R_+ \times \Omega} (1 - e^{- s f(x)}) \nu(\dd s \, \dd x) \right\}.
\]

Our focus is on homogeneous L\'evy intensity measures, namely measures of the form $\nu(\dd s, \, \dd x) = \kappa \rho(s) \dd s \, P_0(\dd x)$ where $\kappa > 0$ is a parameter, $P_0$ is a nonatomic probability measure on $\Omega$ and $\rho(s) \dd s$ is a measure on $\R_+$ such that $\int_{\R_+} \rho(\dd s) = +\infty$ and $\psi(u) := \int_{\R_+}(1 - e^{-us}) \rho(s) \dd s < +\infty$ for all $u>0$.
These conditions ensure that $0 < \mu (\mathbb{W}) < +\infty$ almost surely. We write $\mu \sim \mbox{CRM}(\kappa, \rho, P_0)$. 
Under a trait allocation model, the law of $\mu$ provides a natural prior distribution for the parameter of the trait process. 

We recall below the main results due to \cite{Jam(17)} for the Bayesian analysis of trait allocations under a CRM prior.
Consider the trait allocations $X_j = \sum_{k \geq 1} X_{j, k} \delta_{\omega_k}$, $j=1, \ldots, n$. Given $\mu = \sum_{k \geq 1} \tau_k \delta_{\omega_k}$ we assume that
\begin{itemize}
    \item for each $k$, $X_{j, k} \mid \mu \iid G(\cdot \mid \tau_k)$ where $G(\cdot \mid s)$ is a distribution over the nonnegative integers and $\pi_G(s) := 1-G(0 \mid s)$ is the probability of positive activity
    \item the variables $X_{j, k}$ are independent across $k$, given $\mu$
    \item $\mu$ is a completely random measure with L\'evy intensity $\nu(\dd s \, \dd x) = \kappa \rho(s) \dd s P_0(\dd x)$.
\end{itemize}
We use the short-hand notation
\begin{equation}\label{eq:trp_crm}
    X_j = \sum_{k \geq 1} X_{j, k} \delta_{\omega_k}\mid \mu \iid \mbox{TrP}(G; \mu), \quad \mu \sim \mbox{CRM}(\kappa, \rho, P_0)  
\end{equation}
and further define
\[
    \varphi_i = \kappa \int_{\R_+} (1-\pi_G(s))^{i - 1}\pi_G(s) \rho(s) \dd s,
\]
so that $\sum_{i=1}^n\varphi_i=\kappa\int_{\R_+}\{1-(1-\pi_G(s))^n\}\rho(s)\dd s$.
\begin{proposition}[Marginal law, Proposition 3.1 in \cite{Jam(17)}]\label{prop:crm_marg}
    Let $X_1, \ldots, X_n$ be distributes as \eqref{eq:trp_crm}, such that the sample displays traits $\omega^*_1, \ldots, \omega^*_k$ and let $\mathcal B_j = \{i: X_i(\omega^*_j) > 0\}$, $|\mathcal B_j| = m_j$.
    Then, the marginal law of $(X_1, \ldots, X_n)$ is
    \[
        \exp\left\{
        -\kappa\int_{\R_+} [1-(1-\pi_G(s))^n]\rho(s)\dd s
        \right\}
        \prod_{j=1}^k
        \kappa\int_{\R_+}
        (1-\pi_G(s))^{n-m_j}
        \left\{\prod_{i\in\mathcal B_j}G(a_{i,j}\mid s)\right\}
        \rho(s)\dd s.
    \]
\end{proposition}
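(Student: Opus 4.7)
The plan is to prove the marginal law by Poisson partition calculus, following \cite{Jam(17)}. The idea is to realize $(X_1, \ldots, X_n)$ as a deterministic functional of a marked Poisson random measure, split the marks into an active component (atoms displaying at least one nonzero observation) and a silent component, compute the joint density of the active component via the Janossy density of a Poisson process, and finally integrate out the latent weights.

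First, I would recall that the driving CRM is $\mu = \int s\, N(ds, dx)$, where $N$ is a Poisson random measure on $\R_+ \times \Omega$ with mean $\theta \rho(s)\, ds\, P_0(dx)$. Conditional on $N$, I would attach to each atom $(\tau_k, \omega_k)$ the iid mark vector $(X_{1,k}, \ldots, X_{n,k}) \sim G(\cdot \mid \tau_k)^{\otimes n}$. By the Marking Theorem, the augmented process $\tilde N = \sum_k \delta_{(\tau_k, \omega_k, X_{1,k}, \ldots, X_{n,k})}$ is itself Poisson on $\R_+ \times \Omega \times \NN^n$ with intensity $\theta \rho(s) P_0(dx) \prod_{j=1}^n G(a_j \mid s)\, ds$. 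Partitioning the state space into the active set $\mathcal{A} = \{(s, x, a_1, \ldots, a_n) : \max_j a_j > 0\}$ and its silent complement, the Restriction Theorem gives that $\tilde N|_{\mathcal A}$ and $\tilde N|_{\mathcal A^c}$ are independent Poisson random measures, and the data $(X_1, \ldots, X_n)$ is a deterministic functional of $\tilde N|_{\mathcal A}$ alone.

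Next, I would invoke the Janossy density identity for Poisson random measures: the joint density of $\tilde N|_{\mathcal A}$ exhibiting exactly $k$ atoms at positions $(s_j, \omega_j^*, \mathbf{a}_{\cdot, j})$, $j = 1, \ldots, k$, equals $\exp\{-\Lambda_{\mathcal A}\} \prod_{j=1}^k \theta \rho(s_j) \prod_{i=1}^n G(a_{i,j} \mid s_j)$ (with respect to the product of Lebesgue measures and $P_0^{\otimes k}$ for the locations), where $\Lambda_{\mathcal A} = \theta \int (1 - \pi_G(s)^n) \rho(s)\, ds$ is the total mean of the active component. For each observed atom, exactly $m_j$ of the $n$ marks are nonzero, so $\prod_{i=1}^n G(a_{i,j} \mid s_j)$ factors into a silent contribution over the $n - m_j$ zero slots times $\prod_{i \in \mathcal{B}_j} G(a_{i,j} \mid s_j)$ for the observed slots. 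Integrating each latent $s_j$ out against $\rho(s)\, ds$ and collecting the $k$ copies of $\theta$ reproduces the product of integrals appearing in the statement. Finally, the geometric-series identity $1 - \pi^n = (1 - \pi) \sum_{i=0}^{n-1} \pi^i$, applied inside $\Lambda_{\mathcal A}$, telescopes into $\sum_{i=1}^n \varphi_i$ and yields the exponent in the claim.

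The main obstacle I anticipate is not conceptual but measure-theoretic bookkeeping: one must carefully specify the dominating reference measure against which the Janossy density is taken (so that the almost-sure distinctness of the displayed atoms, guaranteed by the nonatomicity of $P_0$, is correctly accommodated), and ensure the per-atom silent factor in the integrand is combined consistently with the overall exponential factor coming from $\tilde N|_{\mathcal A^c}$. Once this setup is pinned down, each step reduces to a direct application of the Mapping and Restriction Theorems for Poisson random measures together with a Campbell-type identity, exactly as in the Poisson calculus of \cite{Jam(17)}.
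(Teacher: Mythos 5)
The paper itself does not prove this proposition: it is imported verbatim from James (2017) as background, so your argument is a from-scratch reconstruction rather than a deviation from anything in the text. Your route --- mark the Poisson random measure driving $\mu$ with the $n$-vector of scores, restrict to the atoms displaying at least one nonzero score, read off the Janossy density of that finite Poisson process, and integrate out the latent weights --- is exactly the Poisson partition calculus underlying James's Proposition 3.1, and every structural step (Marking, Restriction, factorization of the per-atom score product into the $m_j$ observed slots and the $n-m_j$ zero slots, integration against $\theta\rho(s)\,\dd s$) is sound.

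The one place the argument does not close as written is the identification of the exponential constant. You set $\Lambda_{\mathcal A}=\theta\int(1-\pi_G(s)^n)\rho(s)\,\dd s$ and telescope via $1-\pi^n=(1-\pi)\sum_{i=0}^{n-1}\pi^i$; with the stated definition $\varphi_k=\theta\int\pi_G(s)(1-\pi_G(s))^{k-1}\rho(s)\,\dd s$ this produces $\sum_{i=1}^n\theta\int(1-\pi_G(s))\pi_G(s)^{i-1}\rho(s)\,\dd s$, which is not $\sum_{i=1}^n\varphi_i$ in general. The mismatch traces to the paper's own inconsistent convention: the background section literally defines $\pi_G(s):=G(0\mid s)$, yet every displayed formula --- including the factor $(1-\pi_G(s))^{n-m_j}$ attached to the silent slots in this very proposition, and the posterior intensities in \Cref{thm:post_general} --- treats $\pi_G$ as the probability of a \emph{nonzero} score. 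Under that intended convention the mean measure of the active restriction is $\Lambda_{\mathcal A}=\theta\int\bigl(1-(1-\pi_G(s))^n\bigr)\rho(s)\,\dd s$, and the identity you need is $1-(1-\pi)^n=\pi\sum_{i=0}^{n-1}(1-\pi)^i$, which yields $\Lambda_{\mathcal A}=\sum_{i=1}^n\varphi_i$ term by term (each summand being finite by the integrability condition guaranteeing $\varphi_1<\infty$). Fix that one line --- and note that the exponential arises as the void-type normalization of the \emph{active} component's Janossy density, not from $\tilde N|_{\mathcal A^c}$, whose law never enters because the data are independent of it --- and the proof is complete.
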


\begin{theorem}[Posterior law, Theorem 3.1 in \cite{Jam(17)}]\label{thm:crm_post}
    Let $X_1, \ldots, X_n$ be distributes as \eqref{eq:trp_crm}, such that the sample displays traits $\omega^*_1, \ldots, \omega^*_k$ and let $\mathcal B_j = \{i: X_i(\omega^*_j) > 0\}$, $|\mathcal B_j| = m_j$.
    Then, the posterior distribution of $\mu$ is equivalent to the distribution of
    \[
        \sum_{j=1}^k J^*_j \delta_{\omega^*_j} + \mu^\prime
    \]
    where
    \begin{enumerate}
        \item $J^*_j$ are independent positive random variables, also independent of $\mu^\prime$ with density
        \[
            f_j(s) \propto
            (1-\pi_G(s))^{n-m_j}
            \left\{\prod_{i\in\mathcal B_j}G(a_{i,j}\mid s)\right\}
            \rho(s)
        \]

        \item $\mu^\prime$ is a completely random measure with L\'evy intensity
        \[
            \kappa(1-\pi_G(s))^n\rho(s)\dd sP_0(\dd w).
        \]
    \end{enumerate}
\end{theorem}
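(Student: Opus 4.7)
The theorem is a consequence of the Poisson partition calculus developed in \cite{Jam(17)}, and I would organise the proof around the marked-Poisson-process representation of the CRM prior. Write $\mu = \sum_{k\ge 1}\tau_k\delta_{\omega_k}$, where $(\tau_k,\omega_k)_{k\ge 1}$ are points of a Poisson random measure $N$ on $\R_+\times\Omega$ with intensity $\theta\rho(s)\,ds\,P_0(dx)$. Conditionally on $\mu$, the data-generating mechanism equips each atom $(\tau_k,\omega_k)$ with an independent $n$-vector of marks $(X_{1,k},\dots,X_{n,k}) \iid G(\cdot\mid\tau_k)$. Observing the sample $X_1,\dots,X_n$ is then equivalent to observing the marked Poisson process restricted to atoms whose mark vector is not identically zero, with the location $\omega_k$ of such atoms becoming the displayed label $\omega^*_j$.

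My next step would be to split $N$ via the Poisson colouring theorem into two independent pieces: (a) an ``observed'' sub-process made of atoms whose mark vector is non-zero, and (b) a ``latent'' sub-process made of atoms all of whose $n$ marks are zero. By thinning, the intensity of the latent part equals $G(0\mid s)^n\rho(s)\,ds\,P_0(dx)$, which up to the paper's notational convention is the stated intensity of $\mu^\prime$. Conditional on the observed labels $\omega^*_1,\dots,\omega^*_k$ and the recorded marks $\{a_{i,j}\}_{i\in\mathcal B_j}$, the Palm / Slivnyak-Mecke formula applied to the finitely many ``observed'' points shows their jump sizes $J^*_j$ to be mutually independent with conditional density proportional to the atom's contribution to the joint likelihood,
\[
f_j(s)\,\propto\,\Big[\prod_{i\in\mathcal B_j}G(a_{i,j}\mid s)\Big]\,G(0\mid s)^{n-m_j}\,\rho(s),
\]
which is the expression in the statement once $G(0\mid\cdot)$ is identified with $1-\pi_G(\cdot)$. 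The joint independence of the $J^*_j$ among themselves and from $\mu^\prime$ is then automatic from the colouring theorem, since ``observed'' and ``latent'' are defined by disjoint Borel subsets of the mark space and the associated thinned Poisson measures are therefore independent.

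The main technical obstacle is that $\rho$ is non-integrable near $s=0$, so $N$ has infinitely many atoms in every neighbourhood of the origin and the elementary conditioning argument that works in a finite probability space does not apply directly. The clean way to bypass this --- the route followed in \cite{Jam(17)} --- is to compute the posterior Laplace functional $\E[\exp(-\int f\,d\mu)\mid X_1,\dots,X_n]$ via Bayes' rule applied to the prior characteristic functional of the CRM, and to verify that it factors as the product of the Laplace functionals of the two components in the claimed decomposition. The integrability conditions $\int(1-e^{-us})\rho(s)\,ds<\infty$ from the setup guarantee that every Laplace-functional manipulation is well-defined, and the remaining verification is a direct algebraic check at the level of functionals.
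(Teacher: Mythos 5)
Your argument is correct, and there is nothing in the paper to compare it against: the statement is quoted as background (Theorem 3.1 of \cite{Jam(17)}) and the paper supplies no proof of it. The route you describe --- the marking theorem to attach the score vectors to the atoms of the Poisson process underlying $\mu$, the restriction/colouring theorem to split off the undisplayed atoms as an independent thinned Poisson process, Palm conditioning for the finitely many displayed atoms, and a posterior Laplace-functional computation to make the conditioning rigorous despite $\int_0\rho(s)\,\dd s=\infty$ --- is exactly the Poisson partition calculus of that reference, so you have in effect reconstructed the cited proof. Two small remarks. First, your argument needs $\int(1-G(0\mid s)^n)\rho(s)\,\dd s<\infty$ so that the displayed sub-process has almost surely finitely many points; this is implicit in the finiteness of the quantities $\varphi_1,\dots,\varphi_n$ and is worth stating explicitly, since it is what lets you condition on the displayed points in the elementary way while leaving all the non-integrability of $\rho$ in the latent part. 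Second, your thinning computation yields the latent intensity $G(0\mid s)^{n}\rho(s)\,\dd s\,P_0(\dd w)$, i.e.\ exponent $n$, whereas the statement as printed carries exponent $n-1$ (and the surrounding setup defines $\pi_G(s)=G(0\mid s)$ while systematically using $1-\pi_G$ where $G(0\mid\cdot)$ is meant, following the opposite convention of \cite{Jam(17)}); your version is the correct one, consistent with the exponent $\daytotal$ appearing in the main-text posterior characterization, so you have in fact corrected a typo rather than introduced an error.
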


\begin{proposition}[Compound Poisson representation, Proposition 3.3 in \cite{Jam(17)}]\label{prop:comp_poi}
    Let $X$ be as in \eqref{eq:trp_crm}. Then the law of $X$ is equivalent to the distribution of 
    \[
        \sum_{j=1}^K \tilde A_j \delta_{\tilde \omega_k}
    \]
    where $K \sim \mbox{Poi}(\varphi_1)$, $\tilde \omega_k \iid P_0$ and the $\tilde A_k$'s are independent random variables with values in $\{1, 2, \ldots\}$ given by
    \[
        \prob(A_k = a) \propto \int G(a \mid s) \rho(s) \dd s
    \]
\end{proposition}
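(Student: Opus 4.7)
The plan is to verify this distributional identity by passing from $\mu$ to its marked Poisson point process representation and then invoking the marking and thinning theorems for Poisson processes. Since $\mu = \sum_{k \geq 1} \tau_k \delta_{\omega_k}$ is obtained from a Poisson random measure $N = \sum_{k \geq 1} \delta_{(\tau_k,\omega_k)}$ on $\R_+ \times \Omega$ with intensity $\theta \rho(s) \dd s \, P_0(\dd x)$, and the scores satisfy $X_{1,k} \mid \tau_k \iid G(\cdot \mid \tau_k)$ independently across $k$, the marking theorem delivers that the enlarged configuration $\tilde N = \sum_{k \geq 1} \delta_{(\tau_k, \omega_k, X_{1,k})}$ is itself Poisson on $\R_+ \times \Omega \times \NN$, with intensity
\[
\theta \, \rho(s) \, G(a \mid s) \, \dd s \, P_0(\dd x) \, \#(\dd a),
\]
where $\#$ denotes counting measure on $\NN$.

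I would then discard the zero-score atoms, which are invisible to $X$. Restricting $\tilde N$ to $\R_+ \times \Omega \times \{1, 2, \ldots\}$ gives, by Poisson thinning, a new Poisson point process $\tilde N^+$ whose total intensity is
\[
\theta \int_{\R_+} \sum_{a \geq 1} G(a \mid s) \, \rho(s) \, \dd s \;=\; \varphi_1,
\]
which is finite. Hence $\tilde N^+$ has $K \sim \mathrm{Poi}(\varphi_1)$ points almost surely. Conditional on $K$, the standard Poisson-process disintegration states that these $K$ points are i.i.d.\ with density proportional to the restricted intensity; marginalizing over the coordinate $s$ (which is irrelevant to $X$) shows that the joint law of the retained pair $(\omega,a)$ factorizes as $\omega \sim P_0$ independent of $a$, with $\prob(a = j) \propto \int_{\R_+} G(j \mid s) \rho(s) \, \dd s$ on $\{1, 2, \ldots\}$. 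Writing $X = \sum_{j=1}^K a_j \delta_{\omega_j}$ then delivers the claimed compound Poisson representation.

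The main subtlety to control is the bookkeeping: since $\int_{\R_+}\rho(\dd s) = \infty$, the prior CRM $\mu$ has countably many atoms almost surely, so it is \emph{a priori} unclear that $X$ is a finite measure on $\Omega$. This is exactly what the thinning step handles, using the implicit integrability condition $\varphi_1 < \infty$ that is forced by the assumption that $X(\Omega) < \infty$ almost surely. A fully rigorous alternative, if one is uncomfortable with the marked-process disintegration, is to check equality of Laplace functionals via Campbell's theorem; however, beyond this measure-theoretic care, the argument reduces to two textbook properties of Poisson point processes---marking and thinning---so no further analytic input is needed.
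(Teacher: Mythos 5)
Your argument is correct, and it is essentially the proof behind the cited result: the paper does not prove this proposition itself but quotes it from \cite{Jam(17)}, and the route you take --- mark the Poisson process $N=\sum_{k}\delta_{(\tau_k,\omega_k)}$ with the scores to get a Poisson process with intensity $\theta\rho(s)G(a\mid s)\,\dd s\,P_0(\dd x)\,\#(\dd a)$, restrict to $a\ge 1$, and use the finite-intensity conditional i.i.d.\ representation --- is exactly the standard one. The only point worth flagging is that your identification $\varphi_1=\theta\int_{\R_+}\sum_{a\ge 1}G(a\mid s)\rho(s)\,\dd s$ presupposes the convention $\pi_G(s)=1-G(0\mid s)$ used in \cite{Jam(17)}, whereas the paper's background section misprints $\pi_G(s):=G(0\mid s)$; your reading is the one under which the proposition (and the finiteness of $K$) holds.
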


\section{Proofs}\label{app:proofs}

We will need the following technical lemma, which follows trivially from Lemma 1 in \cite{camerlenghi2022scaled}
\begin{lemma}\label{lemma:crm}
    Let $\mutilde \sim \mbox{SB-SP}(\alpha, c, \beta)$. Then, the law of $\mutilde \mid \Delta_{1, h}$ equals the one of a CRM with L\'evy intensity
    \[
        \Delta_{1, h}^{-\alpha} \alpha s^{-1-\alpha} I_{[0, 1]}(s) \dd s P_0(\dd x)
    \]
\end{lemma}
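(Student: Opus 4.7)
The plan is to reduce the statement to the conditional Poisson point process description of the jumps of an $\alpha$-stable random measure given its maximum jump, which is exactly the content of Lemma 1 of \cite{camerlenghi2022scaled}.

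First, I would recall the construction of the SB-SP as given in Definition \ref{def:sbsp}: the atoms $\omega_\useridx$ are i.i.d.\ from $P_0$, the jumps $(\theta_\useridx)_{\useridx \geq 1}$ have law $\int \mathcal{L}_\zeta(\cdot)\, h_{\alpha,c,\beta}(\zeta)\,\dd\zeta$, and $\mathcal L_\zeta$ is the conditional law of the ratios $(\Delta_{\useridx+1}/\Delta_1)_{\useridx \geq 1}$ given $\Delta_1 = \zeta$ for the $\alpha$-stable CRM with L\'evy intensity $\alpha s^{-1-\alpha}\dd s\, P_0(\dd x)$. Hence, by construction, conditionally on $\Delta_{1,h} = \zeta$, the collection of jumps of $\mutilde$ is distributed as $(\Delta_{\useridx+1}/\Delta_1)_{\useridx \geq 1}$ given $\Delta_1 = \zeta$. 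The key point is that this conditional jump law does not depend on whether one marginalizes $\zeta$ against $f_{\Delta_1}$ (the stable case) or against $h_{\alpha,c,\beta}$ (the SB-SP case); only the marginal law of the scaling variable changes.

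Second, I would invoke Lemma 1 of \cite{camerlenghi2022scaled}, which asserts that, conditionally on $\Delta_1 = \zeta$, the remaining jumps $(\Delta_{\useridx+1})_{\useridx \geq 1}$ form a Poisson point process on $(0,\zeta)$ with intensity $\alpha s^{-1-\alpha}\dd s$. Applying the mapping theorem for Poisson processes to the transformation $s \mapsto s/\zeta$ shows that the rescaled ratios $(\Delta_{\useridx+1}/\zeta)_{\useridx \geq 1}$ form a Poisson point process on $(0,1)$ with intensity
\[
    \alpha (\zeta u)^{-1-\alpha}\zeta\, \dd u = \zeta^{-\alpha}\alpha u^{-1-\alpha}\, \dd u.
\]
Combined with the previous paragraph, this shows that, conditionally on $\Delta_{1,h} = \zeta$, the jumps of $\mutilde$ form a Poisson point process on $(0,1)$ with intensity $\zeta^{-\alpha}\alpha u^{-1-\alpha}\, \dd u$.

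Finally, since the atom locations $\omega_\useridx$ are i.i.d.\ $P_0$ and independent of $\Delta_{1,h}$ and of the ratios, the marked point process $\{(\theta_\useridx, \omega_\useridx)\}$ is, conditionally on $\Delta_{1,h}=\zeta$, a Poisson point process on $(0,1)\times \Omega$ with mean measure $\zeta^{-\alpha}\alpha s^{-1-\alpha} I_{[0,1]}(s)\,\dd s\, P_0(\dd x)$. Equivalently, $\mutilde \mid \Delta_{1,h}=\zeta$ is a CRM with L\'evy intensity $\zeta^{-\alpha}\alpha s^{-1-\alpha} I_{[0,1]}(s)\,\dd s\, P_0(\dd x)$, which is exactly the stated conclusion. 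The only substantive step is the Poisson description of $(\Delta_{\useridx+1})_{\useridx\geq 1}\mid\Delta_1=\zeta$, but this is handed to us by the cited lemma; the rest is the change of variable and a bookkeeping check that the conditional jump law is insensitive to the choice of the scaling density $h^*$.
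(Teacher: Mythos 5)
Your proposal is correct and follows essentially the same route as the paper, whose entire proof is the single remark that the result ``follows trivially from Lemma~1 in \cite{camerlenghi2022scaled}''; you simply spell out the routine details (the conditional Poisson description of the ranked stable jumps, the rescaling via the mapping theorem, and the independence of the marks), all of which check out.
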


\begin{proof}[Proof of \Cref{thm:post_general}]
    The proof follows by combining \Cref{thm:crm_post} and \Cref{lemma:crm}.

    In particular, by standard disintegration arguments
    \[
        \mathcal L(\mutilde \mid Z_{1:D}) = \E[ \mathcal L(\mutilde \mid Z_{1:D}, \Delta_{1, h}) \mid Z_{1:D} ]
    \]
    and we can further introduce a positive-valued random variable $\tilde \Delta_{1,h}$ with law $ \mathcal L(\Delta_{1, h} \mid Z_{1:D})$ to obtain 
    \[
        \mathcal L(\mutilde \mid Z_{1:D}) = \int_{\R_+} \mathcal L(\mutilde \mid Z_{1:D},\tilde \Delta_{1,h}=\zeta) \mathcal L(\dd \zeta).
    \]

    Then, the term $\mathcal L(\mutilde \mid Z_{1:D}, \tilde \Delta_{1,h})$ can be obtained by observing that $\mutilde \mid \Delta_{1, h}$ is a CRM as in \Cref{lemma:crm} so that 
    $\mathcal L(\mutilde \mid Z_{1:D}, \tilde \Delta_{1,h})$ is the posterior distribution described in \Cref{thm:crm_post}.
    That is
    \[
        \mutilde \mid \tilde \Delta_{1,h} = \sum_{\useridx = 1}^{\usertotal_\daytotal} \theta^*_\useridx \delta_{\omega^*_\useridx} + \mu^\prime
    \]
    where $\theta^*_n\mid \tilde \Delta_{1,h}$ are independent with density 
    \[
        f_{\theta^*_\useridx}(\theta\mid \tilde \Delta_{1,h}) 
        \propto
        (1-\pi_{\mathcal G}(\theta))^{\daytotal-b_\useridx}
        \left\{\prod_{\dayidx \in B_\useridx} \mathcal G(Z_\dayidx(\omega^*_\useridx)\mid \theta)\right\}
        \theta^{-1-\alpha}
    \]
    which does not depend on $\tilde \Delta_{1,h}$.
    Moreover, $\mu^\prime \mid \tilde \Delta_{1,h}$ has L\'evy intensity
    \[
        \tilde \Delta_{1,h}^{-\alpha}\alpha(1-\pi_{\mathcal G}(\theta))^D
        \theta^{-1-\alpha}\mathbf 1_{(0,1)}(\theta)
        \dd\theta P_0(\dd\omega).
    \]

    It remains to obtain the law of $\tilde \Delta_{1,h}^{-\alpha} \distreq \Delta_{1, h}^{-\alpha} \mid Z_{1:D}$. An application of Bayes' rule yields that the density of $\tilde \Delta_{1,h}^{-\alpha}$ is proportional to
    \[
        x^{\usertotal_D}
        \exp\left\{
        -x\alpha\int_0^1\{1-(1-\pi_{\mathcal G}(\theta))^D\}
        \theta^{-1-\alpha}\dd\theta
        \right\}
        x^c e^{-\beta x},
    \]
    because $\Delta_{1,h}^{-\alpha}\sim\mathrm{Gamma}(c+1,\beta)$ under the SB-SP prior. Hence the posterior rate is
    \[
        \beta+\alpha\int_0^1\{1-(1-\pi_{\mathcal G}(\theta))^D\}
        \theta^{-1-\alpha}\dd\theta,
    \]
    and the posterior shape is $\usertotal_D+c+1$.
\end{proof}

\begin{proof}[Proof of \Cref{cor:post_models}]
    The proof follows by the definition of Beta function and simple algebra.
\end{proof}

\begin{proof}[Proof of \Cref{prop:pred_unseen}]
Condition on $\tilde \Delta_{1,h}$ and use the residual CRM intensity from \Cref{cor:post_models}. For an unseen user to first trigger on absolute day $D_0+w$, it must have zero activity for the first $D_0+w-1$ days and positive activity on the next day. Integrating this probability against the residual L\'evy intensity gives
\[
    \tilde \Delta_{1,h}^{-\alpha}\alpha\int_0^1
    (1-\theta)^{r_*(D_0+w-1)}
    \{1-(1-\theta)^{r_*}\}
    \theta^{-1-\alpha}\dd\theta
    =
    \tilde \Delta_{1,h}^{-\alpha}\psi_{r_*}(D_0+w-1,1).
\]
Poisson process thinning gives the conditional Poisson laws and conditional independence over $w$. Therefore
\[
    \news{D_0}{D_1}
    =
    \sum_{w=1}^{D_1}N^*_{D_0+w}
    \mid \tilde \Delta_{1,h},data
    \sim
    \mathrm{Poi}\{\tilde \Delta_{1,h}^{-\alpha}\psi_{r_*}(D_0,D_1)\}.
\]
Integrating $\tilde \Delta_{1,h}^{-\alpha}\sim\mathrm{Gamma}(N_{D_0}+c+1,\beta+\psi_{r_*}(0,D_0))$ gives the stated negative-binomial law.
\end{proof}

\begin{proof}[Proof of \Cref{prop:pred_total}]

    The only non-trivial statement concerns the distribution of $\news{D_0}{D_1, j}$.
    Indeed, the representation follows by definition and the law of $ S_{D_0}^{(D_1)}$ follows directly by \Cref{cor:post_models}. Similarly, the Bayesian estimator for $\newstot{D_0}{D_1}$ is obtained by the linearity of expectations.

    To prove that 
    \[
        \news{D_0}{D_1, j} \mid Z_{1:D_0} \sim \mathrm{NegBin}(N_{D_0} + c + 1, p_{D_0}^{(D_1, j)}),
    \]
    we apply the argument in \citet[Theorem 2]{camerlenghi2022scaled}. Notice that the distribution of the new customers conditionally on the largest jump must be a Poisson distribution from the properties of CRMs.
    We can write from the predictive representation in \eqref{eq:pred_formal}
\[
    \news{\pilotdays}{\followupdays, \freq} \mid Z_{1:\pilotdays}, \tilde \Delta_{1,h} \overset{d}{=} \sum_{n\ge 1} I \left( \sum_{\dayidx=1}^{\followupdays} A'_{\pilotdays+\dayidx, n} = \freq \right).
\]
Here $A'_{\pilotdays+\dayidx, n}$ is --- conditionally on $\mutilde$ --- a negative binomial random variable with parameters $r, \theta^\prime_n$, where $\theta^\prime_n$ are the jumps of a Poisson point process with L{\'e}vy intensity
\[
    \tilde{\lambda}(s)\dd s = \tail \tilde \Delta_{1,h}^{-\tail} (1-s)^{r\pilotdays} s^{-1-\tail}  \ind(s \in (0,1)) \dd s.
\]
Now observing that the random variable
\[
    S_{\pilotdays, \followupdays, n}:= \sum_{\dayidx=1}^{\followupdays} A'_{\pilotdays+d, n}
\]
is a sum of i.i.d.\ negative binomial random variables with parameters $r, \theta^\prime_n$ conditionally on $\tilde{\mu}$, it holds that $S_{\pilotdays, \followupdays, n}\mid \mutilde \sim \mathrm{NegBin}(\followupdays r, \theta^\prime_n)$.
It follows that 
\begin{align*}
    \E&\left[ t^{\news{\pilotdays}{\followupdays, \freq}}  \mid Z_{1:\pilotdays}, \tilde{\mu} \right]  = \E\left[ \E\left[\prod_{n\ge 1} \left\{ (t-1) \ind \left( \sum_{\dayidx=1}^{\followupdays} A'_{\pilotdays+d, n} = \freq \right) + 1 \right\} \mid \tilde{\mu} \right]\right]\\
    &= \E\left[ \prod_{n\ge 1} \left\{ (t-1) \prob \left( S_{\pilotdays, \followupdays, n} = \freq \right) + 1 \right\}  \right] \\
    &= \E\left[ \prod_{n\ge 1} \left\{ (t-1) \binom{\freq+r\followupdays -1}{\freq} (\theta^\prime_n)^\freq (1-\theta^\prime_n)^{r\followupdays} + 1 \right\}  \right]\\
    &= \E\left[\exp\left\{ \sum_{n\ge 1} \log\left\{ (t-1) \binom{\freq+r\followupdays -1}{\freq} (\theta^\prime_n)^\freq (1-\theta^\prime_n)^{r\followupdays} + 1 \right\} \right\}  \right]\\
    &=\exp\left\{ - (1-t) \tilde \Delta_{1,h}^{-\tail}  \binom{\freq+r\followupdays -1}{\freq} \tail \int  (1-s)^{r(\pilotdays+\followupdays)} s^{\freq-\tail-1}  \dd s \right\} \\
    &= \exp\left\{ - (1-t) \tilde \Delta_{1,h}^{-\tail}  \binom{\freq+r\followupdays -1}{\freq}  \tail B\left[\freq - \tail, r(\pilotdays+\followupdays) +1 \right] \right\}   \\
    &= \exp\left\{ - (1-t) \tilde \Delta_{1,h}^{-\tail}  \rho_{\pilotdays}^{(\followupdays, \freq)} \right\},  
\end{align*}
with $\rho_{\pilotdays}^{(\followupdays, \freq)}:=\binom{\freq+r\followupdays -1}{\freq}  \tail B\left[\freq - \tail, r(\pilotdays+\followupdays) +1 \right] $.
One can then integrate with respect to the posterior distribution of $\tilde \Delta_{1,h}^{-\alpha}$ to obtain the final result. Let $a=N_{\pilotdays}+c+1$ and $b=\beta+\psi_r(0,\pilotdays)$. Then
\begin{align*}
    \E\left[ t^{\news{\pilotdays}{\followupdays, \freq}} \mid Z_{1:\pilotdays}\right] &= 
    \left( \frac{b}{b + \rho_{\pilotdays}^{(\followupdays, \freq)} -t\rho_{\pilotdays}^{(\followupdays, \freq)}} \right)^a  \\
    &= \left( \frac{1-p_{\pilotdays}^{(\followupdays, \freq)}}{1-tp_{\pilotdays}^{(\followupdays, \freq)}} \right)^a
\end{align*}
for any $t<1/|p_{\pilotdays}^{(\followupdays, \freq)}|$, with 
\[
    p_{\pilotdays}^{(\followupdays, \freq)}:=\frac{\rho_{\pilotdays}^{(\followupdays,\freq)}}{\beta + \psi_r(0,\pilotdays)+\rho_{\pilotdays}^{(\followupdays, \freq)}} \le 1.
\]
This is the probability generating function of a negative binomial distribution under the convention in the main text:
\[
    \prob(\news{\pilotdays}{\followupdays,\freq} = \ell \mid Z_{1:\pilotdays}) = \binom{\ell + a-1}{\ell} (1-p_{\pilotdays}^{(\followupdays, \freq)})^a(p_{\pilotdays}^{(\followupdays, \freq)})^{\ell} \ind\{\ell \in \NN\}.
\]
    
\end{proof}

\begin{proof}[Proof of \Cref{thm:marg_triggers}]

The proof follows arguing as in \Cref{prop:pred_unseen}. Conditional on $\tilde \Delta_{1,h}$, Poisson thinning of the residual CRM implies that the number of unobserved users whose first post-pilot trigger occurs at follow-up day $w$ is Poisson with mean
\[
    \tilde \Delta_{1,h}^{-\alpha}\psi_{r_*}(D_0+w-1,1),
\]
and these counts are independent across $w=1,\ldots,D^{up}$.
Marking each such user by its follow-up waiting time gives the stated representation of $\tilde W$, with labels independently distributed as $P_0$.
Summing the Poisson counts over $w$ gives the same conditional law as $\news{D_0}{D^{up}}$ in \Cref{prop:pred_unseen}; conditional on this total count, the waiting-time probabilities are proportional to the corresponding Poisson means, which gives the stated probabilities after normalization.
\end{proof}

\begin{proof}[Proof of \Cref{thm:marg_general}]

    By the law of total expectation 
    \[
        \prob(Z_1, \ldots, Z_{D_0}) = \E\left[\prob(Z_1, \ldots, Z_{D_0} \mid \Delta_{1,h}) \right],
    \] 
    where the argument inside the expectation is given by \Cref{prop:crm_marg}. In particular, it is straightforward to check that
    \[
        \prob(Z_1, \ldots, Z_{D_0} \mid \Delta_{1,h}) = (\alpha \Delta_{1,h}^{-\alpha})^{N_{D_0}} \exp\{-\psi_{r_*}(0, D_0)\Delta_{1,h}^{-\alpha}\} \prod_{n=1}^{N_{D_0}} \Theta_n.
    \]  
    The result then follows by marginalizing with respect to $\Delta_{1,h}^{-\alpha}\sim \mathrm{Gamma}(c+1, \beta)$.
\end{proof}

\section{Additional methodological details}\label{app:method_details}

\subsection{Estimating the number of days to reach a participation threshold}\label{app:est_days_direct}

We describe here the direct posterior simulation approach for $D_M$ referenced in \Cref{sec:est_days}.
Let $F^\prime = \sum_{\ell \geq 1} F^\prime_\ell \delta_{\omega^\prime_\ell}$ denote the measure tracking the absolute first triggering times of all unobserved users, i.e. users whose first trigger falls after day $D_0$. Equivalently,
$F^\prime(\omega) =\sum_{\ell \geq 1} F_\ell \mathrm{I}\{F_\ell > D_0\} \delta_{\omega_\ell}(\omega)$.
For each unobserved user, define the relative post-pilot waiting time $W^\prime_\ell = F^\prime_\ell - D_0$, and let $W^\prime_{(\ell)}$ be the $\ell$-th smallest value of these waiting times. Then $D_M = W^\prime_{(M-N_{D_0})}$, i.e., the $(M-N_{D_0})$-th earliest first-trigger time among unobserved users, measured in follow-up days after the pilot.
Unfortunately, the direct study of $W^\prime_{(M-N_{D_0})}$ is prohibitive: it can be proved that this quantity is an order statistic of a mixed Poisson process with non-constant rate, for which no closed form expression exists.
Nonetheless, we can simulate from $D_M$ and approximate its posterior via Monte Carlo.

Simulating directly from $W^\prime$ is cumbersome since it involves countably many random variables. Therefore, we define a suitable upper bound $D^{up}$ for $D_M$ and simulate the waiting times for those users that trigger within $D^{up}$ follow-up days.
The next theorem provides the key technical ingredient for this strategy.
\begin{theorem}\label{thm:marg_triggers}
    Let $W^\prime$ be defined as above and let $\tilde W = \sum_{\ell \geq 1} W^\prime_\ell \, \mathrm{I}\{W^\prime_\ell \leq D^{up} \} \, \delta_{\omega^\prime_\ell}$.
    Conditionally on $\tilde \Delta_{1,h}$ as in \Cref{cor:post_models}, $\tilde W$ admits the finite representation
    \[
        \tilde W \overset{d}{=} \sum_{w=1}^{D^{up}}\sum_{i=1}^{N^*_{D_0+w}} w \delta_{\omega^\prime_{w,i}},
    \]
    where the random variables $N^*_{D_0+w}$ are conditionally independent and
    \[
        N^*_{D_0+w} \mid \tilde \Delta_{1,h} \sim
        \mathrm{Poi}\{\tilde \Delta_{1,h}^{-\alpha}\psi_{r_*}(D_0+w-1,1)\},
        \qquad w=1,\ldots,D^{up},
    \]
    and $\omega^\prime_{w,i} \iid P_0$, with $r_*=1$ for the \bemodel and \tgmodel models and $r_*=r$ for the \nbmodel model.
    Therefore, the total number of post-pilot users represented in $\tilde W$ has the same marginal law as $\news{D_0}{D^{up}}$ in \Cref{prop:pred_unseen}. Conditional on this total count, the represented waiting times are iid with $\prob(W=w)=\psi_{r_*}(D_0+w-1,1)/\psi_{r_*}(D_0,D^{up})$, $w=1,\ldots,D^{up}$; for the \bemodel and \tgmodel models this probability is proportional to $\mathrm{B}(1-\alpha,D_0+w)$.
\end{theorem}

\begin{algorithm}[tp]
\caption{Posterior sampling for $D_M$}\label{algo:simulate_D_M}
\begin{algorithmic}
\STATE {\bfseries Input:} Observations, total-user target $M$, optional truncation $D^{up}$, number of Monte Carlo iterations $K$.
\STATE Set $m=M-N_{D_0}$. If $m\le 0$, return $D_1=\cdots=D_K=0$.

\FOR{$k = 1, \ldots, K$}
    \STATE Sample $\tilde \Delta_{1,h}^{-\alpha}$ from its posterior in \Cref{cor:post_models}.

    \STATE Set $C=0$ and $w=0$.

    \WHILE{$C<m$ and either no truncation is used or $w<D^{up}$}
        \STATE Set $w=w+1$.
        \STATE Sample $N^*_{D_0+w}\sim\mathrm{Poi}\{\tilde \Delta_{1,h}^{-\alpha}\psi_{r_*}(D_0+w-1,1)\}$.
        \STATE Set $C=C+N^*_{D_0+w}$.
    \ENDWHILE

    \STATE If $C\ge m$, set $D_k=w$; otherwise mark the draw as censored at $D^{up}$.
\ENDFOR
\STATE {\bfseries Return:} $D_1, \ldots, D_K$.
\end{algorithmic}
\end{algorithm}

\Cref{thm:marg_triggers} gives an explicit characterization of the posterior law of our inferential object $D_M$. In particular, building on that, we can simulate from the posterior
of $D_M$ using Algorithm \ref{algo:simulate_D_M}.
Without a fixed truncation, the sequential version gives exact posterior samples of $D_M$. If a fixed $D^{up}$ is used, the draw is censored when fewer than $M-N_{D_0}$ waiting times are sampled. The procedure does not require a burn-in period like Markov chain Monte Carlo.
However, \Cref{algo:simulate_D_M} can require a large number of samples to provide reliable estimates and can therefore be computationally expensive, especially in settings when $N_{D_0}$ is large, say in the order of the hundreds of thousands or larger.

\subsection{Empirics for parameter estimation}\label{app:param_empirics}

We provide here numerical evidence for both the maximum marginal likelihood and curve fitting approaches to parameter estimation.
We consider specifically the \nbmodel model, as the \bemodel one can be recovered by setting $r = 1$.

We generate $D=365$ observations from the true model with fixed parameters $(\alpha, \beta, c, r) = (0.5, 2,30, 5)$.
\Cref{fig:synthetic_log_like} in Appendix \Cref{app:plots} shows the profile of the negative (log) likelihood function around the true value of the parameters. Specifically, in each subplot, we evaluate the negative log-likelihood in the right and left neighborhoods of the true value, shifting one coordinate at a time. We see that the negative log-likelihood is at least locally convex for all parameters. We adopt a derivative-free optimization routine, which empirically robustly finds good values of the hyperparameters for the prediction task at hand.

Next, we compare the performance of maximum marginal likelihood and curve-fitting on the unseen user prediction task. To assess the performance of these two alternative empirical Bayes strategies, we consider synthetic data drawn from the \nbmodel model.
For a fixed total duration $D = 500$, we draw samples $Z_{1:D} \sim \mbox{\nbmodel}$.
For each $D_0 \in \{5, 10, 20, 50\}$, we retain $Z_{1:D_0}$ as a training set, and estimate the parameters $\alpha, \beta, c, r$ using either \Cref{eq:ml} or \Cref{eq:regression} and compute the corresponding accuracy $v_{D_0}^{(D-D_0)}$ defined in \Cref{eq:acc_metric}.
We repeat this procedure $M=100$ times (by re-drawing each time $Z_{1:D}$ from the prior), and report the median accuracy and a centered confidence interval with level 80\% as $D_0$ increases, for three different choices of the hyperparameters $(\alpha, \beta, c, r)$ in \Cref{fig:synthetic_accuracy}.
While the confidence intervals greatly overlap, we can conclude that maximum marginal likelihood yields better predictive accuracy on average. Therefore, whenever possible, we advocate this approach. When re-trigger information is unavailable (e.g., for the ASOS data of \citet{liu2021datasets}), we adopt the regression approach.

\begin{figure}[ht]
    \centering
    \includegraphics[width=\linewidth]{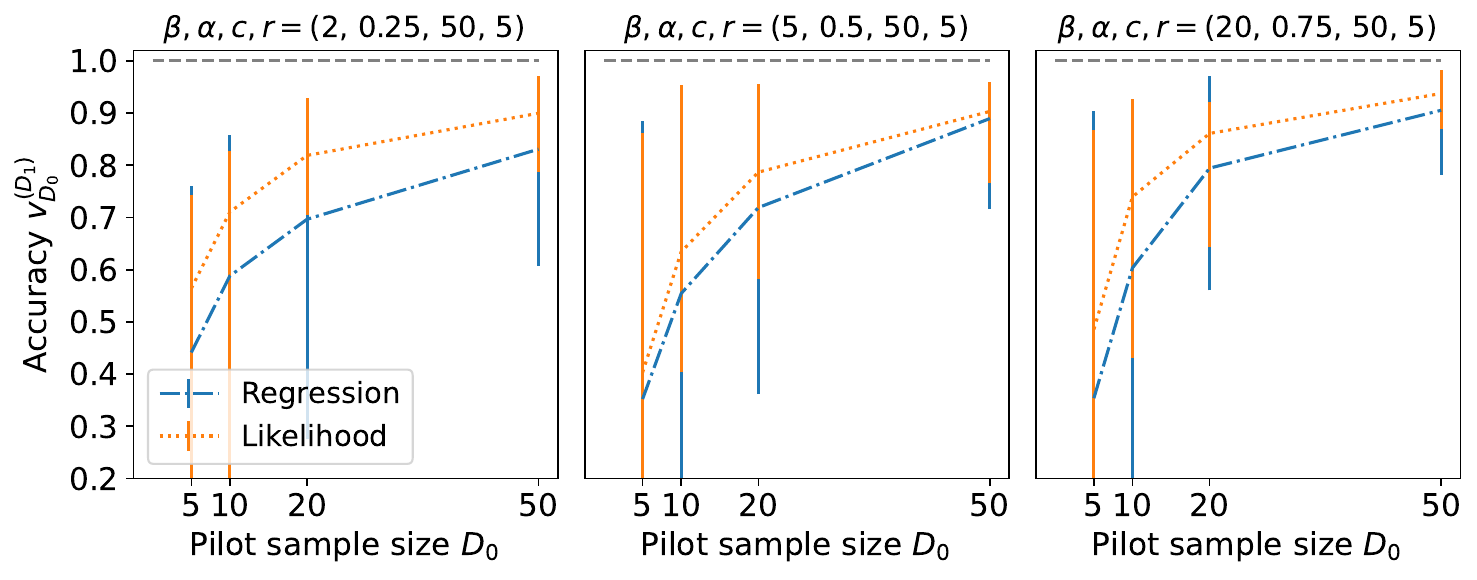}
    \caption{Prediction accuracy $v_{D_0}^{(D_1)}$ of the Bayesian nonparametric predictor on data from the model.}
    \label{fig:synthetic_accuracy}
\end{figure}

\Cref{fig:synthetic_accuracy} reveals that when the data is generated from the true model even extremely small sample sizes allow us to form accurate predictions of future user activity. This is important, as experimenters typically want to form predictions in the early phases of the experiment.

\section{Simulations from the true model}\label{app:simu_true_model}

\subsection{Binary activity data}\label{sec:simu1_binary}

We consider two data generating processes:
in the first (DG1) we simulate data from the \bemodel model using the generative scheme detailed in \Cref{app:ibp-scheme}.
In the second (DG2), we first simulate data as in DG1, and retain only the first trigger event $F_\useridx$ for each user $\useridx$. Then, for each user we simulate $\varepsilon_n \iid \mathcal U([0, 0.5])$ and $\daycount \sim \varepsilon_\useridx \mathrm{Be}((1 - \alpha) / (1 - \alpha + F_\useridx))$ for $\dayidx > F_\useridx$.
That is, in DG2, after the first trigger, users tend to be less active.

For both data generating processes, we fix $D_0=14$, $c = 2500$, $\beta = 0.5$, and simulate 50 independent datasets that differ only in the value of $\alpha \sim \mathrm{Beta}(4, 10)$.
Predictions are based on estimated parameters obtained by maximizing the marginal log-likelihood via \Cref{eq:ml}.
Because the activity data is binary, we compare only the \bemodel and \tgmodel models on the unseen user prediction task at $D_1 = 14$ follow-up days.

\begin{figure}[H]
    \centering
    \includegraphics[width=\linewidth]{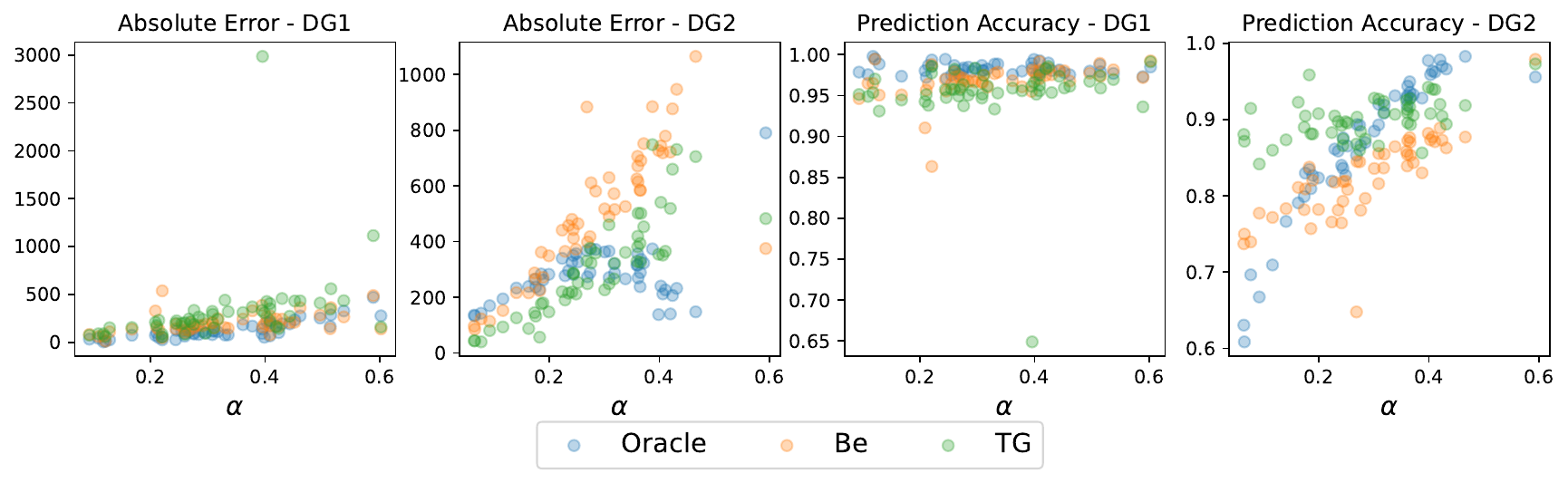}
    \caption{Absolute prediction error and accuracy for the simulated scenarios DG1 and DG2.}
    \label{fig:simu1_binary}
\end{figure}

In DG1, the oracle estimator achieves the smallest errors; \bemodel is competitive while \tgmodel yields slightly larger errors.
Under DG2, \tgmodel performs significantly better than \bemodel across all values of $\alpha$.

\subsection{Count-valued activity data}

We draw synthetic data from the \nbmodel model with parameters $(\mass, \tail, \tilting, r) = (0.1, 0.5, 50, 5)$ for $D=200$ days. We retain the first $\pilotdays=20$ days to fit hyperparameters and form predictions.

\begin{figure}[H]
    \centering
    \includegraphics[width=\linewidth]{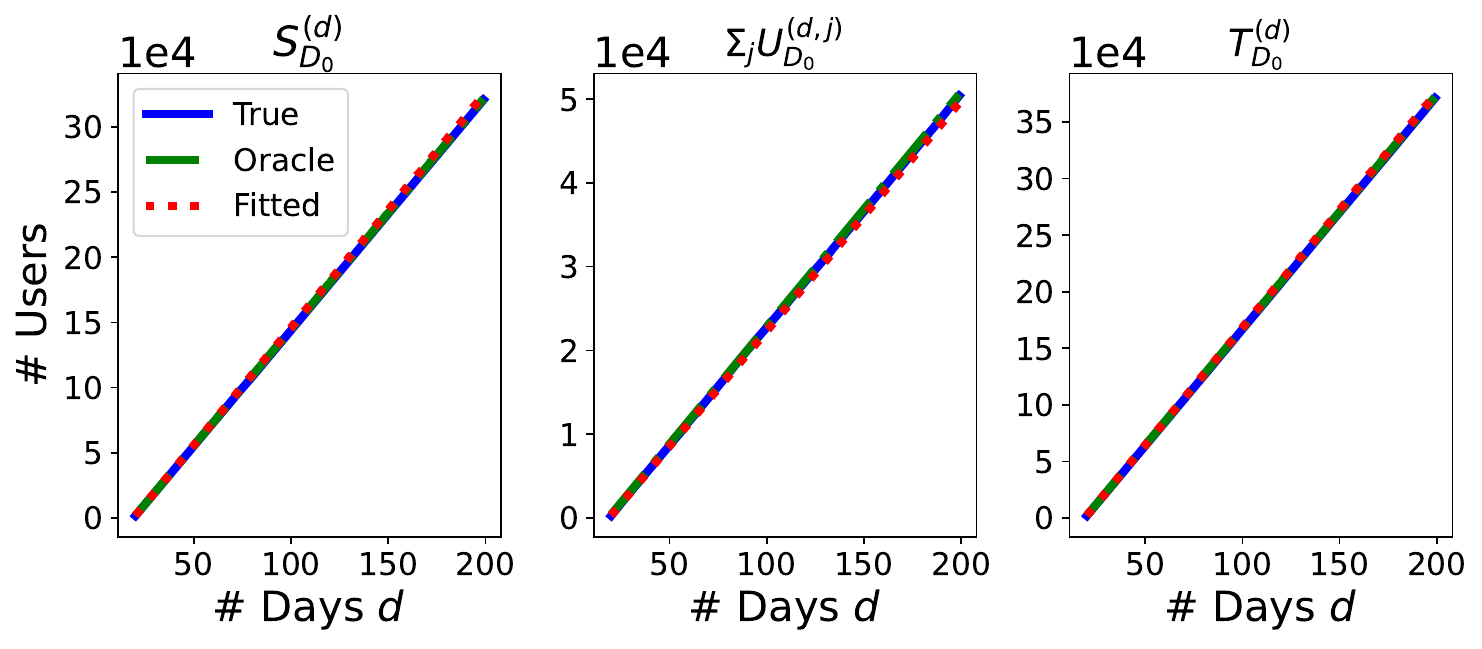}
    \caption{Prediction of future user activity from synthetic \nbmodel data.}
    \label{fig:synthetic_sums}
\end{figure}

\section{Generative Schemes under the Model}\label{app:ibp-scheme}

We describe below two generative schemes that can be thought of as generalizations of the IBP to our class of prior as well as to deal with non-binary scores. In particular, for the Bernoulli model this scheme is a straightforward consequence of Proposition 5 in \cite{camerlenghi2022scaled}. For the Geometric model, it is a rewriting of  \Cref{prop:comp_poi}.
For these schemes, write $\gamma(x,y):=\psi_1(x,y)$.

\subsection{Bernoulli Model}

\begin{enumerate}
    \item In the first experiment day,
    \[
        N_1
        \sim
        \mbox{NegBin}\left(
            c+1,
            \frac{\gamma(0,1)}{\beta+\gamma(0,1)}
        \right)
    \]
    users trigger. 

    \item After $d$ days suppose we have observed $N_d$ unique users $\omega^*_1, \ldots, \omega^*_{N_d}$, and let $b_i$ be the active-day count for observed user $i$.
    Then on day $d+1$, each previously seen user triggers independently with probability
    \[
        \prob(\text{user i active on day }d+1\mid data)
        =
        \frac{b_i-\alpha}{d+1-\alpha}.
    \]
    
    Moreover,
    \[
        N^*_{d+1}
        \sim
        \mbox{NegBin}\left(
            N_d+c+1,
            \frac{\gamma(d,1)}{\beta+\gamma(0,d+1)}
        \right)
    \]
    new users (i.e., previously unseen) will trigger for the first time.
\end{enumerate}

\subsection{Geometric Model}

The triggering times for users active in the period $\{1, \ldots, D^*\}$ is distributed as the random measure  in Theorem \ref{prop:comp_poi}, where we put $d = 0$ and $D^{up} = D^*$. In particular,
\[
    N_{D^*}
    \sim
    \mbox{NegBin}\left(
        c+1,
        \frac{\gamma(0,D^*)}{\beta+\gamma(0,D^*)}
    \right).
\]
Conditional on $N_{D^*}$, the triggering times are i.i.d. random variables supported on $\{1, \ldots, D^*\}$ such that 
\[
    \prob(Y_\ell = y)
    =
    \frac{\gamma(y-1,1)}{\gamma(0,D^*)}
    =
    \frac{\alpha \mathrm{B}(1-\alpha,y)}{\psi_1(0,D^*)}.
\]

\section{Details on competing methods} \label{sec:app_competing}

To benchmark the performance of our newly proposed method, we consider a number of alternatives which have previously been proposed in the literature. We here provide additional details on these methods.

\subsection{Beta-binomial (BB) and beta-geometric (BG) predictors}

The beta-binomial and beta-geometric models are (finite dimensional) Bayesian generative models for trigger data which work by imposing a pre-determined upper bound $N_{\infty}$ on the number of units in the population, and assuming that for every unit $n=1,\ldots,N_{\infty}$ there exists a corresponding rate $\theta_n$ which governs the unit activity. In particular, these models assume
\[
    \theta_n \sim \mathrm{Beta}(\alpha,\beta).
\]
The beta-binomial model then assumes that we can observe for every day $\dayidx$ of experimentation and every unit $n$ where the unit triggered in the experiment on that day, and postulates
\[
    Z_{\dayidx, n} \mid \theta_1,\ldots,\theta_{N_{\infty}} \sim \mathrm{Bernoulli}(\theta_n),
\]
i.i.d.\ across days $\dayidx$ for the same unit $n$, and independently across different units. The beta-geometric model instead assumes that we can only observe for every unit $n$ the ``first trigger date'' $\dayidx_n$, and postulates
\[
    Z_{n} \mid \theta_1,\ldots,\theta_{N_{\infty}} \sim \mathrm{Geometric}(\theta_n).
\]
For the BB model, we use the estimator provided in \citet[Section 1]{ionita2009estimating} to produce $U_{\pilotdays}^{\followupdays}$. For the BG model, we adopt the Monte Carlo approach devised in \citet[Section 3]{richardson22a} to produce the corresponding estimates. We provide code to fit these models and produce the corresponding estimates.

\subsection{Jackknife (J) predictors}

Jackknife estimators have a long history in the statistics literature, dating back to \citet{quenouille1956notes,tukey1958bias}. Here we consider the jackknife estimators developed by \citet{gravel2014predicting} who extended the work of \citet{burnham1979robust}. In particular, the $k$-th order jackknife is obtained by considering the first $k$ values of the resampling frequency spectrum; that is, by adequately weighting the number of users who appeared exactly $1, 2, \ldots, k$ times in the experiment. We adapt the code provided in \url{https://github.com/sgravel/capture_recapture/tree/master/software} to form our predictions.

\subsection{Good--Toulmin (GT) predictors}

Good--Toulmin estimators date back to the seminal work of \citep{good1956number}. Here, we adapt the recent approach of \citet{chakraborty2019using} for the problem of predicting the number of new genetic variants to be observed in future samples to online randomized experiments (in particular, we use the estimators provided in Equation (6) of the supplementary material). To form these predictions, we adapt to our setting the code provided in \url{https://github.com/c7rishi/variantprobs}.

\subsection{Linear programming predictors}

Linear programs have been used for rare event occurrence ever since the seminal work of \citet{efron1976estimating}. Here, we adapt to our setting the predictors proposed in \citet{zou2016quantifying} via the \texttt{UnseenEST} algorithm. We adapt the implementation provided by the authors at \url{https://github.com/jameszou/unseenest} to perform our experiments.

\section{Additional Figures}\label{app:plots}

Figure \ref{fig:synthetic_log_like} shows the evaluation of the log-likelihood for an \nbmodel model.

\begin{figure}[h]
    \centering
    \includegraphics[width=\linewidth]{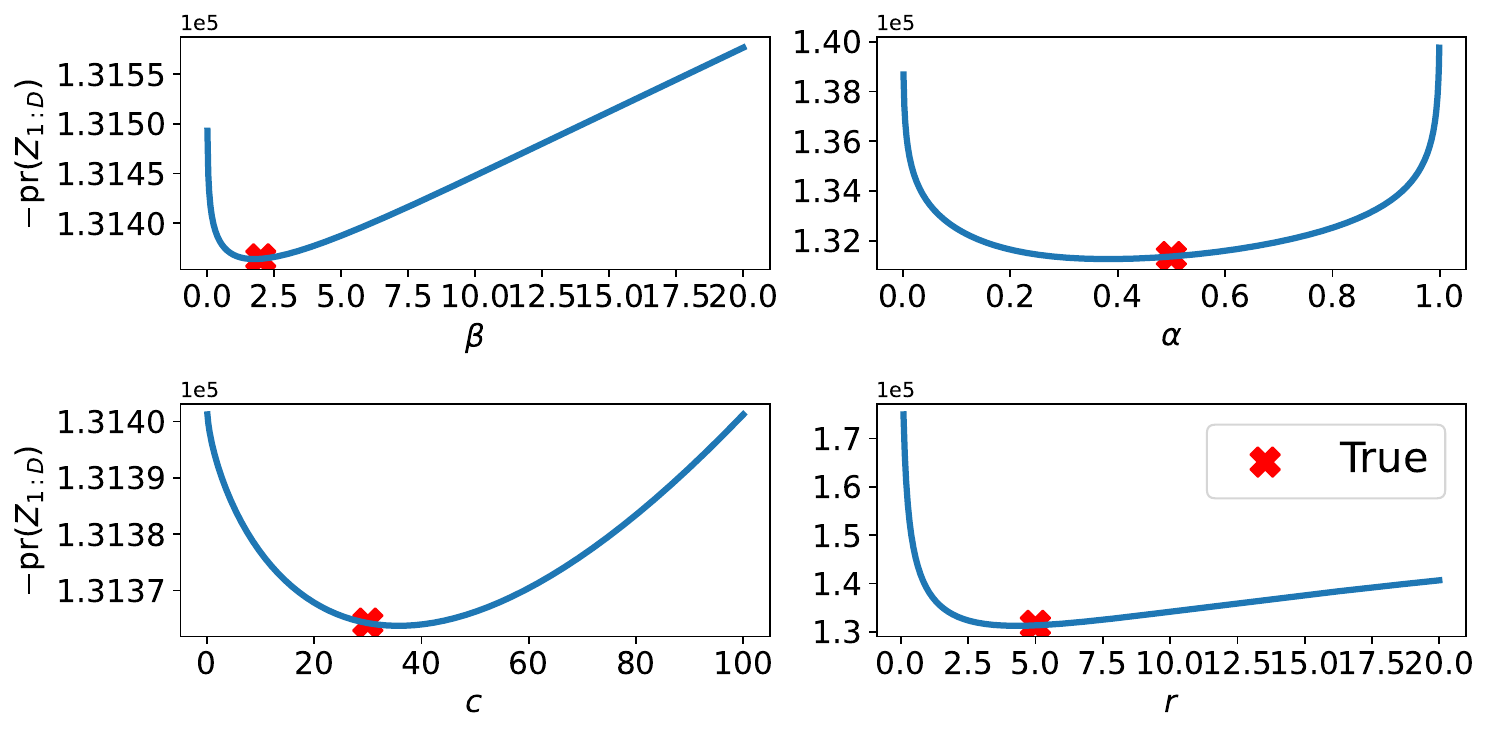}
    \caption{Evaluation of the log-likelihood in neighborhoods of the true values for data $Z_1, \ldots, Z_D$ from an \nbmodel  model with  $\alpha, \beta, c, r = (0.5, 2, 30, 5)$ and $D=365$
    }
    \label{fig:synthetic_log_like}
\end{figure}

Figure \ref{fig:synthetic_accuracy_sums} reports the accuracy for the \nbmodel predictor on Zipf data for different choices of the tail parameter $\tau$.

\begin{figure}[t]
    \centering
    \includegraphics[width=0.7\linewidth]{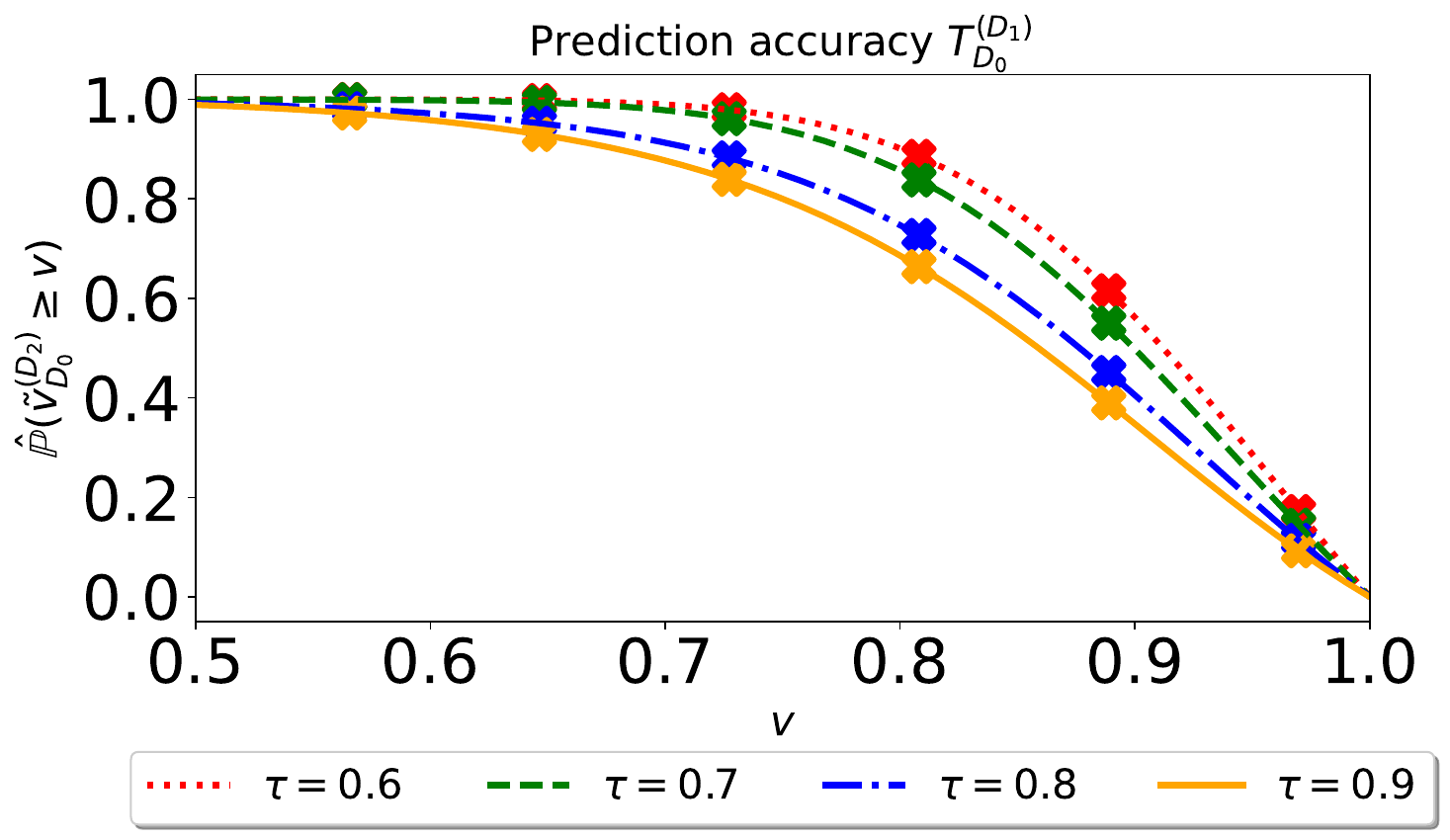}
    \caption{Prediction accuracy for $D_{D_0}^{(D_1)}$ of the \nbmodel predictor on Zipf data for different choices of the parameter $\tau$.
    }
    \label{fig:synthetic_accuracy_sums}
\end{figure}

Figure \ref{fig:true_accuracy_sums} reports the accuracy for the total triggering activity of the \nbmodel predictor on proprietary data.

\begin{figure}[h]
    \centering
    \includegraphics[width=0.65\linewidth]{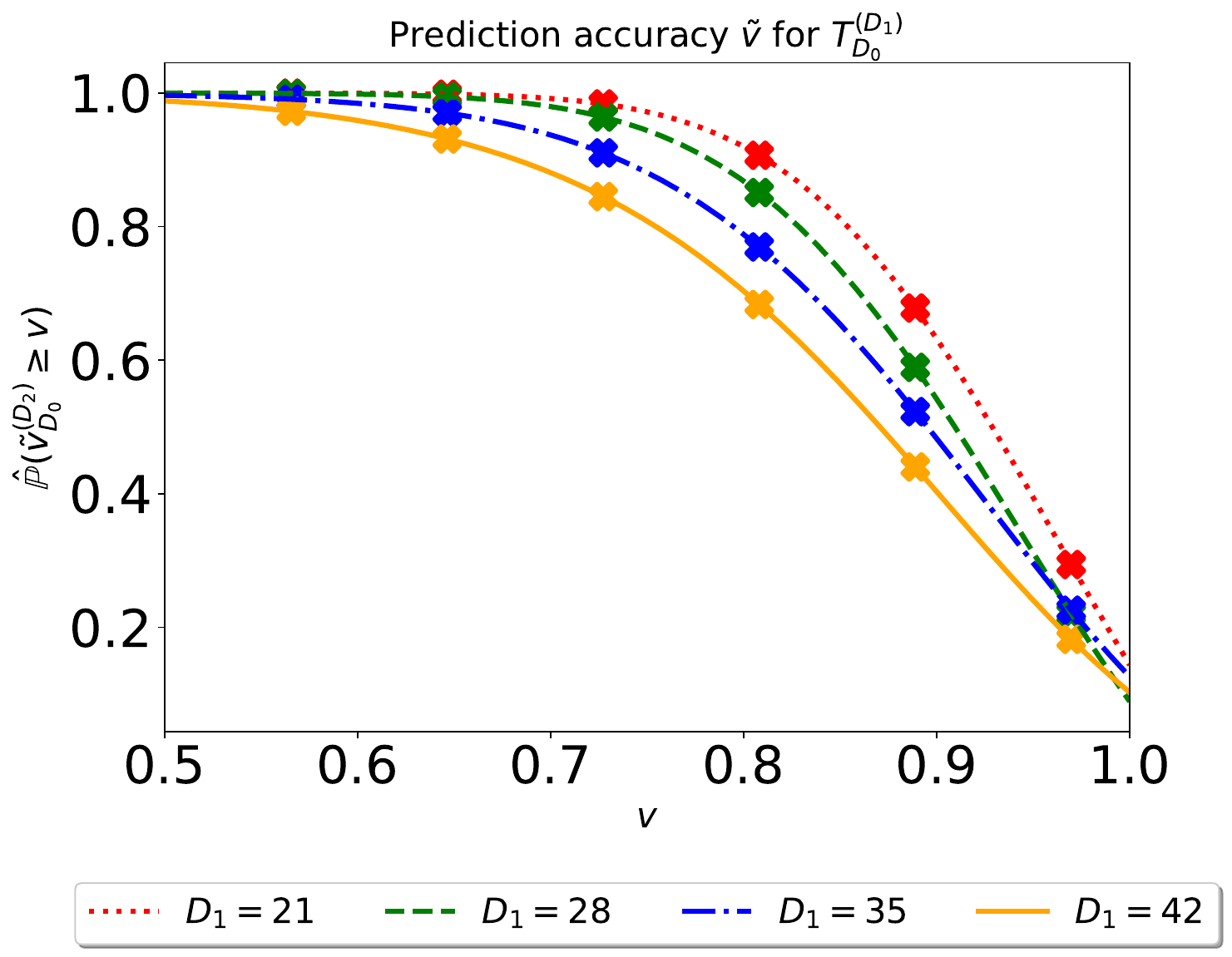}
    \caption{Survival plot for the prediction accuracy $\tilde{v}_{\pilotdays}^{(\followupdays)}$ of the \nbmodel predictor for the total triggering activity (proprietary data). For a given accuracy level (horizontal axis), we plot the fraction of experiments achieving at least that level of accuracy (vertical axis) for different extrapolation values $D_1$.
    }
    \label{fig:true_accuracy_sums}
\end{figure}

Figures~\ref{fig:app_uci}--\ref{fig:app_asos} show prediction trajectories for all experiment windows, complementing the three case studies in the main text (\Cref{fig:case_uci}).

\begin{figure}
    \centering
    \includegraphics[width=\linewidth]{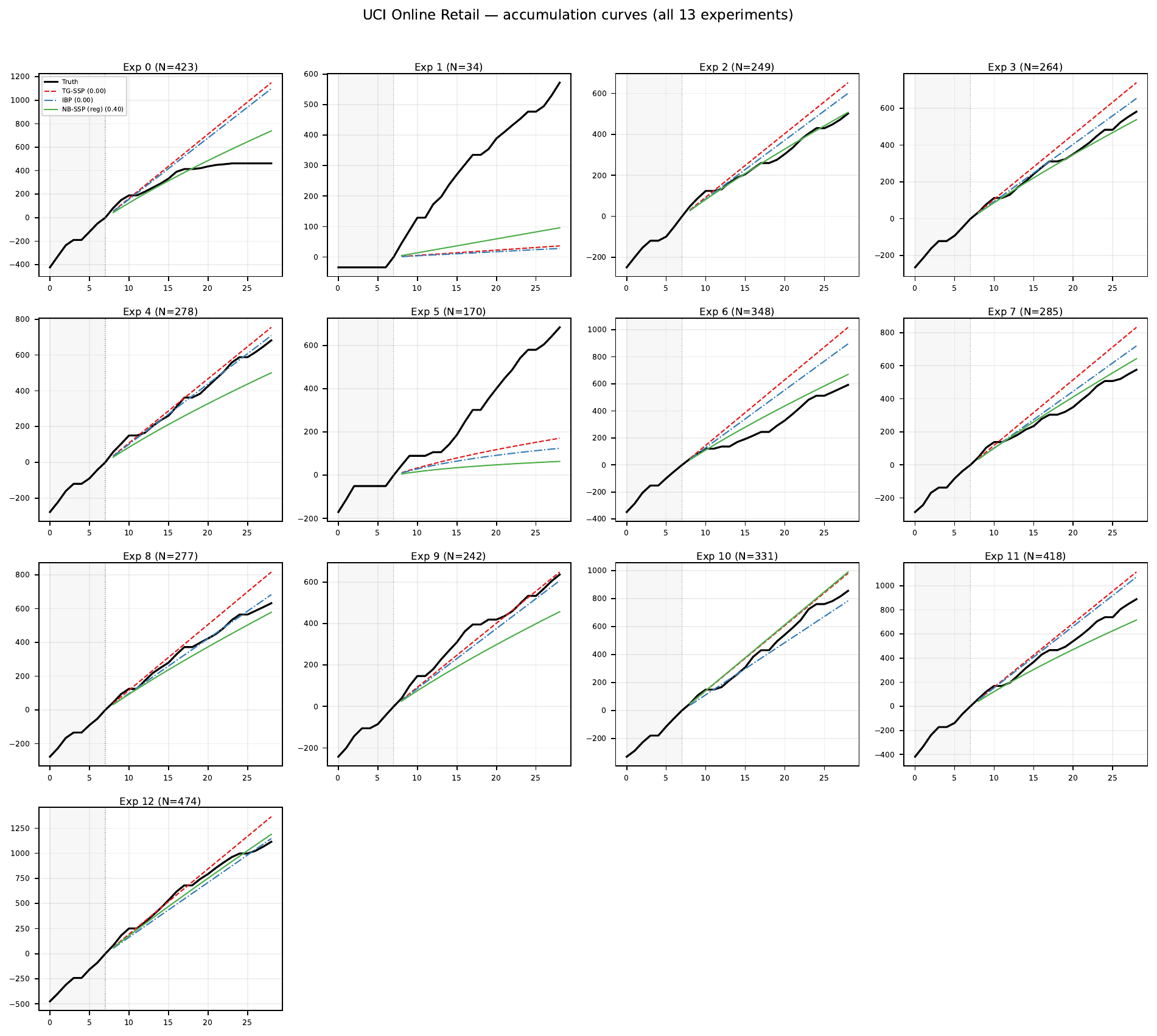}
    \caption{UCI Online Retail: accumulation curves for all 13 experiment windows. Black = truth; colored lines = method predictions. Gray region = pilot ($\pilotdays = 7$).}
    \label{fig:app_uci}
\end{figure}

\begin{figure}[H]
    \centering
    \includegraphics[width=\linewidth]{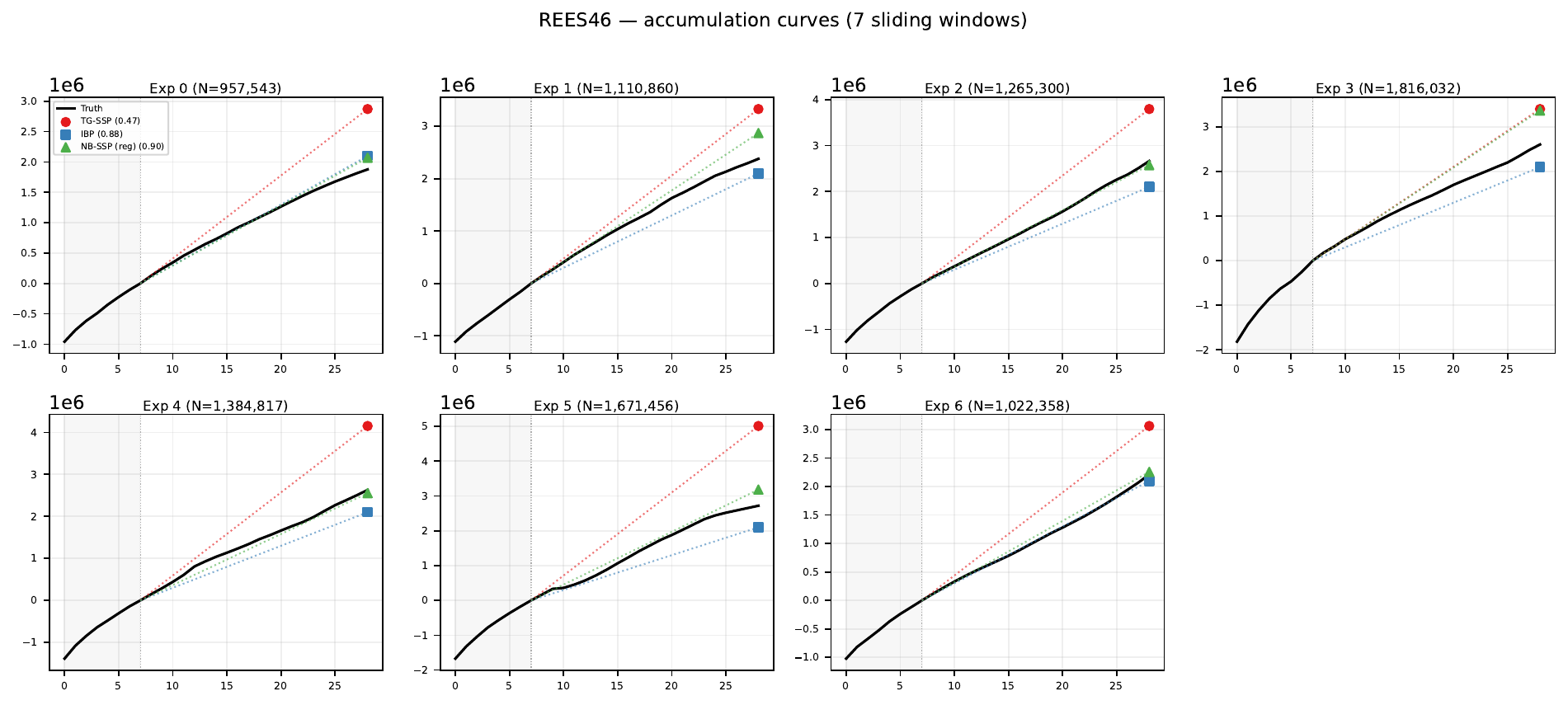}
    \caption{REES46: accumulation curves for all 7 non-overlapping 28-day windows. Markers at day $D$ show final predictions (full trajectories not available for this dataset).}
    \label{fig:app_rees46}
\end{figure}

\begin{figure}[H]
    \centering
    \includegraphics[width=\linewidth]{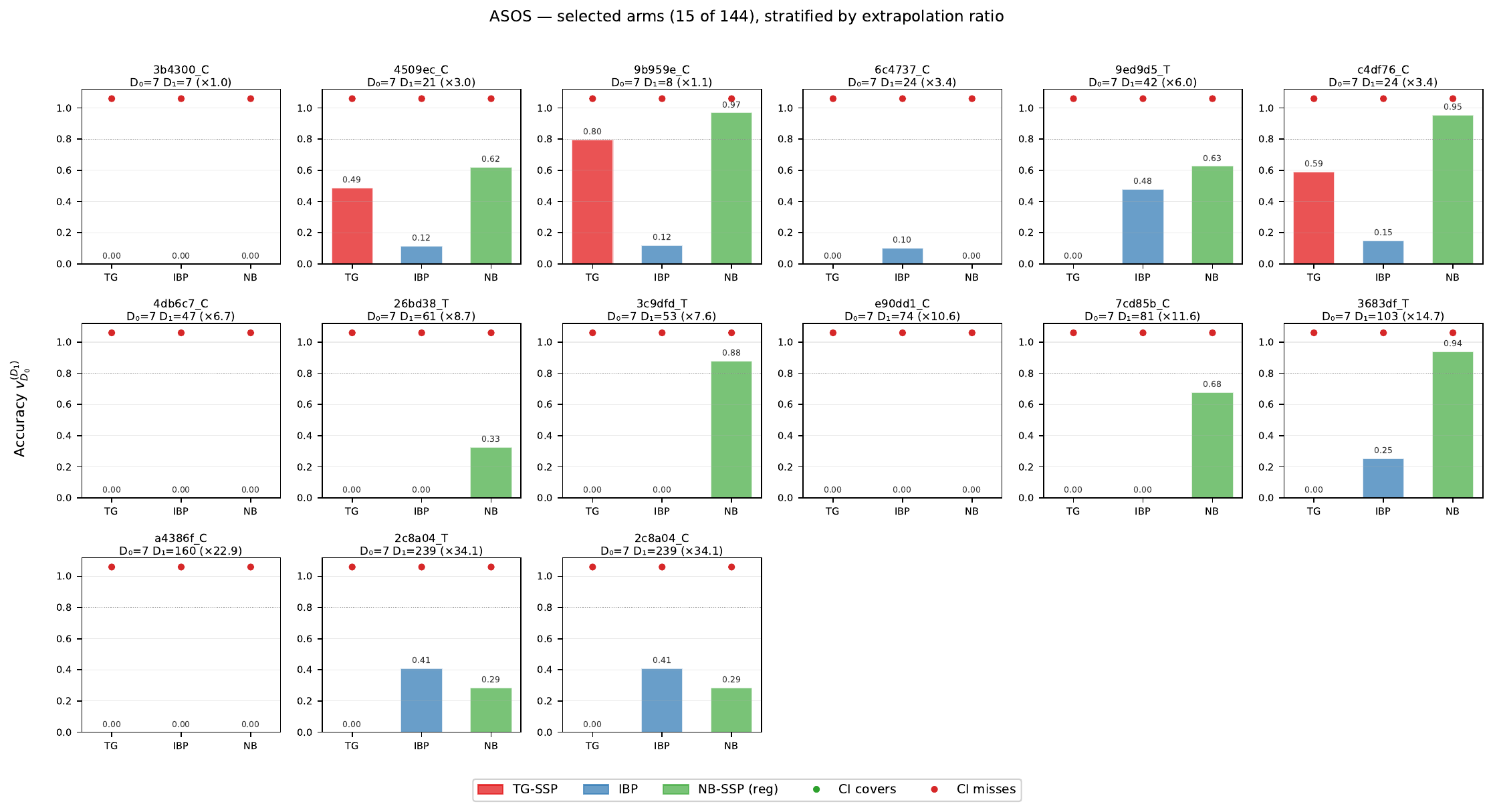}
    \caption{ASOS: per-experiment accuracy for selected arms stratified by extrapolation ratio.}
    \label{fig:app_asos}
\end{figure}

\section{Benchmark Construction Details}\label{app:benchmark_construction}

This section details the rolling-window benchmark construction used in Section~\ref{sec:exp_real}.

\subsection{Rolling windows with varying follow-up horizons}

For the REES46 dataset, we construct rolling experiment windows parameterized by the follow-up length $k \in \{21, 50, 100\}$ days.
Each window consists of a fixed pilot period of $\pilotdays = 7$ days followed by a follow-up of $\followupdays = k$ days, giving a total window length of $\pilotdays + k$ days.
Windows are placed at every possible start day (stride~1): for a dataset spanning $D_{\mathrm{total}}$ days, this yields $D_{\mathrm{total}} - \pilotdays - k + 1$ experiments per value of $k$.

On the 7-month REES46 dataset ($D_{\mathrm{total}} \approx 211$ days), this produces:
\begin{itemize}
    \item $k = 21$: 183 rolling experiments (window = 28 days),
    \item $k = 50$: 154 rolling experiments (window = 57 days),
    \item $k = 100$: 104 rolling experiments (window = 107 days).
\end{itemize}

For each experiment, the ground-truth quantities are:
\begin{itemize}
    \item $N_{\pilotdays}$: number of distinct users whose first activity falls within the pilot (days $t, \ldots, t + \pilotdays - 1$);
    \item $\news{\pilotdays}{\followupdays}$: number of users whose first activity falls in the follow-up (days $t + \pilotdays, \ldots, t + \pilotdays + k - 1$).
\end{itemize}

The rolling construction ensures that each value of $k$ tests the model at a different extrapolation ratio $\followupdays / \pilotdays \in \{3, 7.1, 14.3\}$, providing a systematic assessment of how prediction accuracy degrades with horizon length.

\subsection{Non-overlapping windows}

For the primary accuracy comparison, we also use 7 non-overlapping 28-day windows ($\pilotdays = 7$, $\followupdays = 21$) from REES46 and 13 non-overlapping 28-day windows from UCI Online Retail.
These provide independent experiments (no shared data between windows) at the cost of fewer total experiments.

\section{Data Preprocessing Details}\label{app:data_preprocessing}

This section provides the details needed to reproduce the benchmark datasets used in Section~\ref{sec:exp_real}.
Code and preprocessed data are available at \url{https://github.com/lorenzomasoero/OnlineActivityPredictionIBP}.

\subsection{REES46 eCommerce Behavior Dataset}

The REES46 dataset is publicly available at \url{https://data.rees46.com/datasets/marketplace/}.
It consists of 7 monthly CSV files (October 2019 through April 2020), totaling approximately 380M events and 47\,GB uncompressed.
Each event record contains a timestamp (\texttt{event\_time}), event type (\texttt{view}, \texttt{cart}, or \texttt{purchase}), and user identifier (\texttt{user\_id}).
We define a \emph{trigger} as any event by a user on a given calendar day.
The daily trigger count $A_{d,n}$ for user $n$ on day $d$ is the number of events recorded for that user on that day.
Aggregating across all months produces approximately 15.6M unique users over 211 calendar days.

For each experiment window of length $D = \pilotdays + \followupdays$:
\begin{itemize}
    \item $N_{\pilotdays}$: number of distinct users whose first activity falls within the pilot (days $1, \ldots, \pilotdays$);
    \item $\news{\pilotdays}{\followupdays} = N_{\mathrm{total}} - N_{\pilotdays}$: number of users whose first activity falls in the follow-up;
    \item $\newstot{\pilotdays}{\followupdays}$: total trigger count across all users during the follow-up period.
\end{itemize}

\subsection{UCI Online Retail Dataset}
The UCI Online Retail dataset \citep{chen2012data} is available from the UCI Machine Learning Repository (ID~352).
It contains 541,909 transaction line items from a UK-based online retailer (December 2010 -- December 2011).
We retain rows with a valid \texttt{CustomerID}, positive \texttt{Quantity}, and non-cancelled invoices (excluding invoice numbers prefixed with ``C''), yielding 397,924 rows across 4,339 customers.
The daily trigger count $A_{d,n}$ is the number of transaction line items for customer $n$ on calendar day $d$.

\subsection{ASOS Dataset}
The ASOS Digital Experiments Dataset \citep{liu2021datasets} is available at \url{https://osf.io/64jsb/}.
It provides cumulative first-trigger counts per day for 72 experiments (144 treatment arms).
No additional preprocessing is required.

\section{Power Formula Derivation}\label{app:power_derivation}

We provide a brief derivation of the power formula used in Section~\ref{sec:exp_real} to connect participation targets to statistical power.
For a comprehensive treatment of power calculations in A/B testing, see \citet{Gualavisi2025}; for a detailed derivation of the two-sample test power formula in a related Bayesian nonparametric context, see \citet{masoero2021bayesian}.

Consider a two-arm randomized experiment with equal allocation: $n = M_\eta / 2$ users per arm.
Let $\bar{Y}_T$ and $\bar{Y}_C$ denote the sample means in the treatment and control arms, respectively.
We test $H_0: \mu_T - \mu_C = 0$ against $H_1: \mu_T - \mu_C  \neq 0$ at significance level $\alpha$ (two-sided).
Under equal within-arm variance $\sigma^2$, the two-sample $z$-statistic is
\[
    Z = \frac{\bar{Y}_T - \bar{Y}_C}{\sigma \sqrt{2/n}} = \frac{\bar{Y}_T - \bar{Y}_C}{\sigma \sqrt{4 / M_\eta}}.
\]
Under $H_0$, $Z \sim \mathcal{N}(0, 1)$. We reject $H_0$ when $|Z| > z_{1-\alpha/2}$.
Under $H_1$, letting $\delta = \mu_T - \mu_C$,
\[
    \text{power} = \Pr(|Z| > z_{1 -\alpha/2} \mid H_1) \approx \Phi\!\left(\sqrt{\frac{M_\eta}{2}} \cdot \frac{\delta}{\sigma} - z_{1 -\alpha/2}\right),
\]
where $\Phi$ is the standard normal CDF and we use the one-sided approximation (valid when power is not too close to 0.5).


\end{document}